\documentclass[letterpaper,11pt]{article}

\pdfoutput=1

\usepackage{algorithm}
\usepackage{float}
\newfloat{algorithm}{t}{lop}

\usepackage{amsmath,amsfonts,amssymb,bbm} 

\usepackage{amsmath, amssymb}
\usepackage{color}
\usepackage[margin=1in]{geometry}
\usepackage[noend]{algpseudocode}
\usepackage{tikz}
\usepackage{enumerate}
\usepackage{hyperref}
\usepackage{graphicx}
\usepackage{caption}
\usepackage{subcaption}
\usepackage{zerosum,csquotes}
\usepackage{linegoal}
\usepackage[inline]{enumitem}
\usepackage{comment}	
\usepackage{ctable}					
\usepackage{mathtools}
\usepackage[flushleft]{threeparttable}
\usepackage{empheq}
\usepackage{newfloat}
\usepackage{soul}

\usepackage{footnote}
\makesavenoteenv{tabular}
\makesavenoteenv{table}

\newtheorem{theorem}{Theorem}[section]
\newtheorem{lemma}[theorem]{Lemma}

\newtheorem{corollary}[theorem]{Corollary}

\newtheorem{definition}[theorem]{Definition}
\makeatletter
\def\GrabProofArgument[#1]{ #1: \egroup\ignorespaces}
\def\proof{\noindent\textbf\bgroup Proof%
	\@ifnextchar[{\GrabProofArgument}{. \egroup\ignorespaces}}

\makeatother

\newcommand{\xhdr}[1]{\vspace{2mm} \noindent{\bf #1}}
\newcommand{\OMIT}[1]{}

\usepackage{color-edits}
\addauthor{sb}{purple}    
\addauthor{md}{magenta}  
\addauthor{mt}{red}      
\addauthor{as}{blue}     


\newcommand{\discreteProblem}[0]{\ensuremath{\mathcal{DSG}}}
\newcommand{\continuousProblem}[0]{\ensuremath{\mathcal{CSG}}}

\newcommand{\M}[0]{\ensuremath{M}}

\newcommand{\Mset}{\ensuremath{\lbrack\M\rbrack}}

\renewcommand{\time}[0]{\ensuremath{T}}
\newcommand{\timeset}{\ensuremath{\lbrack\time\rbrack}}

\newcommand{\target}[1]{\ensuremath{A_{#1}}}
\newcommand{\targetset}[1]{\ensuremath{\lbrack\target{#1}\rbrack}}
\newcommand{\targetpos}[2]{\ensuremath{h_{#1, #2}}}
\newcommand{\targetweight}[2]{\ensuremath{w_{#1, #2}}}

\newcommand{\K}[0]{\ensuremath{K}}
\newcommand{\Kset}{\ensuremath{\lbrack\K\rbrack}}

\newcommand{\D}[0]{\ensuremath{\Delta}} 

\newcommand{\R}[0]{\ensuremath{R}} 

\newcommand{\parof}[1]{\ensuremath{P_{#1}}}
\newcommand{\parsize}[1]{\ensuremath{n_{#1}}}
\newcommand{\intseq}[1]{\ensuremath{\mathcal{I}_{#1}}} 
\newcommand{\intseqitem}[2]{\ensuremath{\intseq{#1}\lbrack#2\rbrack}}

\newcommand{\intseqcons}[3]{\ensuremath{\intseq{#1}\lbrack#2:#3\rbrack}}

\newcommand{\getintervals}[0]{\textproc{GetIntervals}}
\newcommand{\getintervalpoints}[0]{\textproc{GetIntervalPoints}}
\newcommand{\epsdistance}[0]{\ensuremath{\epsilon}}

\newcommand{\vertexset}[1]{\ensuremath{V(#1)}}
\newcommand{\edgeset}[1]{\ensuremath{E(#1)}}

\newcommand{\graph}[1]{\ensuremath{G_{#1}}}
\newcommand{\vertex}[3]{\ensuremath{V_{#1}\lbrack#2, #3\rbrack}}
\newcommand{\vertexcons}[4]{\ensuremath{V_{#1}\lbrack#2; #3:#4\rbrack}}
\newcommand{\source}[1]{\ensuremath{S_{#1}}} 
\newcommand{\sink}[1]{\ensuremath{S'_{#1}}} 
\newcommand{\edgeweight}[2]{\ensuremath{w_{#1,#2}}}
\newcommand{\edgecover}[2]{\ensuremath{c_{#1,#2}}}
\newcommand{\edgeflow}[2]{\ensuremath{#1(#2)}}

\newcommand{\intervalscover}[4]{\ensuremath{\mathcal{C}_{#1}(#2, #3; #4)}}
\newcommand{\inflow}[2]{\ensuremath{#1^-(#2)}}
\newcommand{\outflow}[2]{\ensuremath{#1^+(#2)}}

\DeclareMathOperator{\feas}{feas}
\newcommand{\outfeas}[2]{\ensuremath{\feas_{#1}(#2)}}

\newcommand{\sizeofflowpath}[1]{|#1|}



\newsavebox{\ieeealgbox}



\newcommand{\order}[1]{\ensuremath{\mathcal{O}(#1)}}


\DeclareFloatingEnvironment[fileext=lop, name=Algorithm]{alg}

\DeclareFloatingEnvironment[fileext=lop, name=Linear Program]{lp}


\newcounter{proccnt}

\author{
	Soheil Behnezhad\thanks{Department of Computer Science, University of Maryland. Email: \texttt{\{soheil,mahsaa,hajiagha\}@cs.umd.edu}. Supported in part by NSF CAREER award CCF-1053605,  NSF BIGDATA grant IIS-1546108, NSF AF:Medium grant CCF-1161365,
DARPA GRAPHS/AFOSR grant FA9550-12-1-0423, and another DARPA SIMPLEX grant.
}
\and Mahsa Derakhshan\footnotemark[1]
\and MohammadTaghi Hajiaghayi\footnotemark[1]
\and Aleksandrs Slivkins\thanks{Microsoft Research. Email: \texttt{slivkins@microsoft.com}.}
}

\title{A Polynomial Time Algorithm for Spatio-Temporal \mbox{Security Games}}

%

\begin{document}

\date{}
\maketitle

\begin{abstract}
An ever-important issue is protecting infrastructure and other valuable targets from a range of threats from vandalism to theft to piracy to terrorism. The ``defender" can rarely afford the needed resources for a 100\% protection. Thus, the key question is, \emph{how to provide the best protection using the limited available resources}.

We study a practically important class of security games that is played out in space and time, with targets and ``patrols" moving on a real line. A central open question here is whether the Nash equilibrium (i.e., the minimax strategy of the defender) can be computed in polynomial time. We resolve this question in the affirmative. Our algorithm runs in time polynomial in the input size, and only polylogarithmic in the number of possible patrol locations $(\M{})$. Further, we provide a \emph{continuous} extension in which patrol locations can take arbitrary real values.
Prior work obtained polynomial-time algorithms only under a substantial assumption, e.g., a constant number of rounds. Further, all these algorithms have running times polynomial in $\M{}$, which can be very large.
\end{abstract}

%

\maketitle

\section{Introduction}

Protecting infrastructure and other valuable targets from a range of threats from vandalism to theft to piracy to terrorism is an ever-important issue around the world, aggravated recently by increased threats of piracy and  terrorism. Providing 100\% protection usually requires more money or other resources than the ``defender" can commit. Thus, the key question is, \emph{how to provide the best protection using the limited resources that are available}.

A successful recent approach casts this issue in game-theoretic terms, modeling it as a \emph{security game}: a zero-sum game between the \emph{defender} who has some targets to protect, and the \emph{attacker} who strives to inflict damage on these targets. Usually the defender needs to commit to a particular allocation of resources, such as the schedule of patrols, whereas the attacker can strike at will; this corresponds to a classic game-theoretic model called a \emph{Stackelberg game}. The defender can (and should) randomize, e.g. so as to prevent the attacker from exploiting a particular gap in the patrol schedule. The attacker can be strategic and optimize his attack according to his beliefs about the defender's strategy. The literature has mostly adopted a pessimistic view, in which the attack is the exact best response to the defender's actual strategy. Thus, the defender's goal is to use an optimal (minimax) strategy.

This approach has resulted in a flurry of research activity, including several awards and nominations. Further, it has been adopted in a number of real-world deployments, ranging from patrol boats to airport checkpoints to US air marshals to an urban transit system to wildlife protection. (Many of them have been recognized with commendations and awards.) Other potential applications include protecting aid convoys in unstable regions, and protecting ships from piracy.

Most applications of security games are \emph{spatio-temporal} in the sense that the patrols move from one location to another with a limited speed, and only protect targets that are sufficiently close. Then the defender's strategy is a rather complicated object: a pure strategy should specify a trajectory for every patrol, possibly choosing from a very large number of possible locations. Further, in many applications targets have their own trajectories that need to be taken into consideration.

The first-order question in spatio-temporal security games is computing the equilibrium, i.e., the minimax strategy for the defender. More specifically, we focus on exact equilibrium computation in polynomial time. The prior work (e.g., \cite{fang2013optimal,bovsansky2011computing,xu2014solving}), as well as the present paper, considers a one-dimensional space (that is, patrol and target locations are on the real line), and discretizes it uniformly into the possible patrol locations.
The relevant parameters are: the number of patrols ($\K{}$), the number of targets ($\target{}$), the number of rounds of scheduling ($\time$), and the number of possible patrol locations ($\M{}$). The input specifies trajectories of targets and their values; the trajectories may be arbitrary, and the values  may change over time. Thus, the input size is
    $O(\time{}\cdot\target{}+\log(\K{}\cdot \M{}))$.
What makes the problem particularly challenging is that the number of pure strategies --- tuples of patrol trajectories --- is as large as $(\M{})^{\K{} \time{}}$.

A central open question here is whether the Nash equilibrium (i.e., the minimax strategy of the defender) can be computed in polynomial time. We resolve this question in the affirmative: our main result is an algorithm that computes the exact equilibrium in time polynomial in the input size. In particular, the running time scales only \emph{polylogarithmically} in the number of possible patrol locations. Moreover, we provide a \emph{continuous} extension in which patrol locations can take arbitrary real values, under a mild technical assumption that the target locations are rational. The dependence on the number of patrols is argued away: while a pure strategy of the defender must specify a trajectory for each patrol, we prove that $\mathtt{poly}(\time{}\target{})$ patrols suffices to protect all targets. The output is a distribution over $\mathtt{poly}(\time{}\target{})$-many pure strategies.

We improve over the state-of-art prior work \cite{fang2013optimal,xu2014solving} in several ways. First, the running time in \cite{fang2013optimal} is exponential in the number of patrols ($\K{}$) and becomes impractical even for $\K{}=3$ \cite{xu2014solving}. Second, \cite{xu2014solving} achieves a polynomial running time only under a substantial assumption: either a constant number of rounds, or that all targets have a unit value at all times, or that the ``protection ranges" of the patrols are so small that they cannot overlap for any two adjacent patrol locations. Third, the running times in \cite{fang2013optimal,xu2014solving} depend polynomially on the the number of patrol locations ($\M{}$). Finally, the polynomial running time in \cite{xu2014solving} relies on the Ellipsoid Algorithm for solving linear programs, which is notoriously slow in practice.


\xhdr{Our techniques.}
Our main algorithm works on the \emph{discretized version} of the problem, in which the possible locations for patrols are integers from $1$ to $\M{}$ (there is no such restriction on the location of targets). The algorithm consists of three parts: partitioning the spatial domain, formulating patrol placements in a single time point, and combining them to find the optimal strategy for all time points. Below we describe these three parts one by one.

First, we partition the spatial domain into a relatively small number of intervals so that the patrol locations inside each interval are ``equivalent" to one another as far as our problem is concerned. Then we use these intervals as ``atomic" patrol locations, thereby replacing the dependence on $\M{}$ with the dependence on the number of intervals.  The partitioning algorithm starts from the last round $\time{}$ and goes backwards in time: for each round $t$ it constructs a collection of intervals based on the target locations at time $t$ and the intervals constructed for time $t+1$, so as to ensure the desired ``equivalence" property. We bound the number of intervals by  $\order{\time^3\target{}}$.

\OMIT{ 
We prove this algorithm will not create more than \order{\time^3\target{}} intervals and furthermore, all the positions in the same interval have some similar characteristics. For example, let $i_t$ and $i_{t+1}$ be two intervals at time $t$ and $t+1$ respectively, then: (i) If a target $a$ is in the protection range of a position in $i_t$,  it is also in the protection range of any other position in $i_t$ (i.e., two patrols in the same interval protect the same set of targets). (ii) If there is a valid move from a position in $i_t$, to a position tin $i_{t+1}$, then for any other position in $i_t$, there exists at least one valid move to a position in $i_{t+1}$.
These characteristics make it possible to treat the intervals as the atomic possible patrol locations instead of considering the possibly huge space given in the input.
} 


Second, for each time point $t$ we construct a graph \graph{t} which models any possible \emph{snapshot} of the patrol placements at this time. More specifically, every patrol placement at time $t$ can be mapped to a specific path in \graph{t}, and every \emph{randomized} patrol placement can be mapped to a specific unit flow in \graph{t}. Furthermore, we define the cost of a path/flow in \graph{t} such that it equals the maximum utility of the attacker under the corresponding (randomized) patrol placement at time $t$, and can be computed via a linear program.


Third, we create a linear program that ``unifies" the graphs \graph{t}, and use this LP to construct the minimax strategy for the defender. The LP ensures that the (randomized) patrol placements computed in each \graph{t} are consistent with one another, in the sense that there is a valid transition from one round to the next, without violating the speed restriction. To accomplish this, the LP finds min-cost flows in each \graph{t}, and includes additional linear constraints that guarantee consistency. We post-process the solution of this LP and remove the crossing edges in the flows. Finally, we incrementally construct a mixed strategy of the defender based on the post-processed solution and prove that it is indeed the optimal strategy.

In the \emph{continuous} version of the problem, patrol locations can take arbitrary real values, and the target locations are rational.  We first re-scale all target locations to integers, and prove that this re-scaled problem instance admits a discrete solution. Then we use the algorithm from the discretized version. It is essential that the running time of the latter is polylogarithmic in $\M{}$.

\xhdr{Related work.}
Security games have been studied extensively in the past decade, see the book \cite{tambe2011security} as well as more recent work, e.g. \cite{fang2013optimal,bovsansky2011computing,Conitzer-aaai13,xu2014solving,Xu-ec16}. The research concerned both theoretical foundations as well as applications. Publicized real-world deployments include: US Coast Guard patrol boats \cite{fang2013optimal}, canine-patrol and vehicle-checkpoints scheduling in Los Angeles airport (LAX) \cite{pita2008deployed}, scheduling flights for air marshals by US Federal Air Marshal Service \cite{kiekintveld2009computing}, airport passenger screening by US Transportation Security Administration \cite{OneSize-aaai16}, fare inspection in Los Angeles transit system \cite{yin2012trusts}, and wildlife protection in Malaysia \cite{PAWS-aaai16}.

Most relevant to the present work are papers on computing minimax strategy in zero-sum spatio-temporal security games. While the initial work assumed static targets \cite{tambe2011security}, some of the later work addressed moving targets
\cite{bovsansky2011computing,fang2013optimal,xu2014solving} (as discussed above). On a related note, if the patrols are allowed to \emph{accelerate}, with an upper bound on the acceleration, then computing the defender's minimax strategy becomes NP-hard \cite{xu2014solving, Xu-ec16}.
Other work concerned solving security games that are not (necessarily) spatio-temporal or zero-sum, e.g. \cite{Conitzer-ec06,korzhyk2010complexity,xu2014solving}
A notable line of work in security games assumes that the defender does not fully know attackers' values for the targets, but can learn more about them over multiple rounds of interaction with the said attackers (see \cite{SSG-ML-survey16} for a recent survey of a subset of this work, as well as \cite{Letchford-sagt09,Marecki-aamas12,Blum-nips14,Balcan-ec15}).

In a broader game-theoretic context, our work is related to Stackelberg games and equilibrium computation. Originally introduced to model competing firms, Stackelberg games is a classic concept in game theory which appears in many textbooks and countless papers.

Computing Nash Equilibria is a central problem in algorithmic economics. While this problem is known to be PPAD-hard in general, polynomial-time algorithms exist for many natural classes of games, particularly for zero-sum games (for background, see a survey \cite{Tim-EQ-survey10} and references therein). Yet, these algorithmic results are insufficient for games in which the number of pure strategies can be exponential in the input size (see \cite{ahmadinejad2016duels, behnezhad2016faster} for examples of such games).

\xhdr{Further directions.}
Many ideas in this paper may be useful for solving other spatio-temporal security games. In particular,  the overall algorithmic framework of locally solving each ``time layer" under some compatibility constraints and then merging the ``time layers" to compute the global optimum solution appears broadly applicable.

We believe our techniques can be extended to achieve a polynomial time algorithm for several extensions of the model. In particular, we can incorporate additional constraints on the patrols, such as obstacles that the patrols cannot cross over, or speed limits that depend on a particular location. We can also handle scenarios when the spatial domain or the timeline are not evenly discretized. (For ease of presentaton, we do not include these extensions in the present paper).

A general way to model such extensions is to assume that the range of valid movements for each location is given in the input.  Whenever we still have a property that the patrols do not need to cross each other in an optimal solution (which indeed is a very natural property for homogeneous patrols), our techniques achieve a polynomial-time algorithm in the input size. However, the input size for this extended model gets large, and no longer scales polylogarithmically in \M{}.

That said, some important special cases allow for succinct input. For example, a small number of obstacles can be specified directly, rather than given implicitly via the ranges of valid movements. Designing a polynomial-time algorithm for such cases requires a problem-specific pre-processing step for partitioning the locations (which could potentially be very different from the partitioning step in this paper). However, the rest of the algorithm could be essentially the same.

It is very tempting to extend our model to a two-dimensional space. We believe some of our techniques can be useful for this extension, most importantly the compatibility constraint technique from Section~\ref{sub:all-times}. The main challenge in extending our approach is an appropriate generalization of the ``day graphs".

\section{Preliminaries}

\begin{figure}
  \centering
  \includegraphics[scale=0.70]{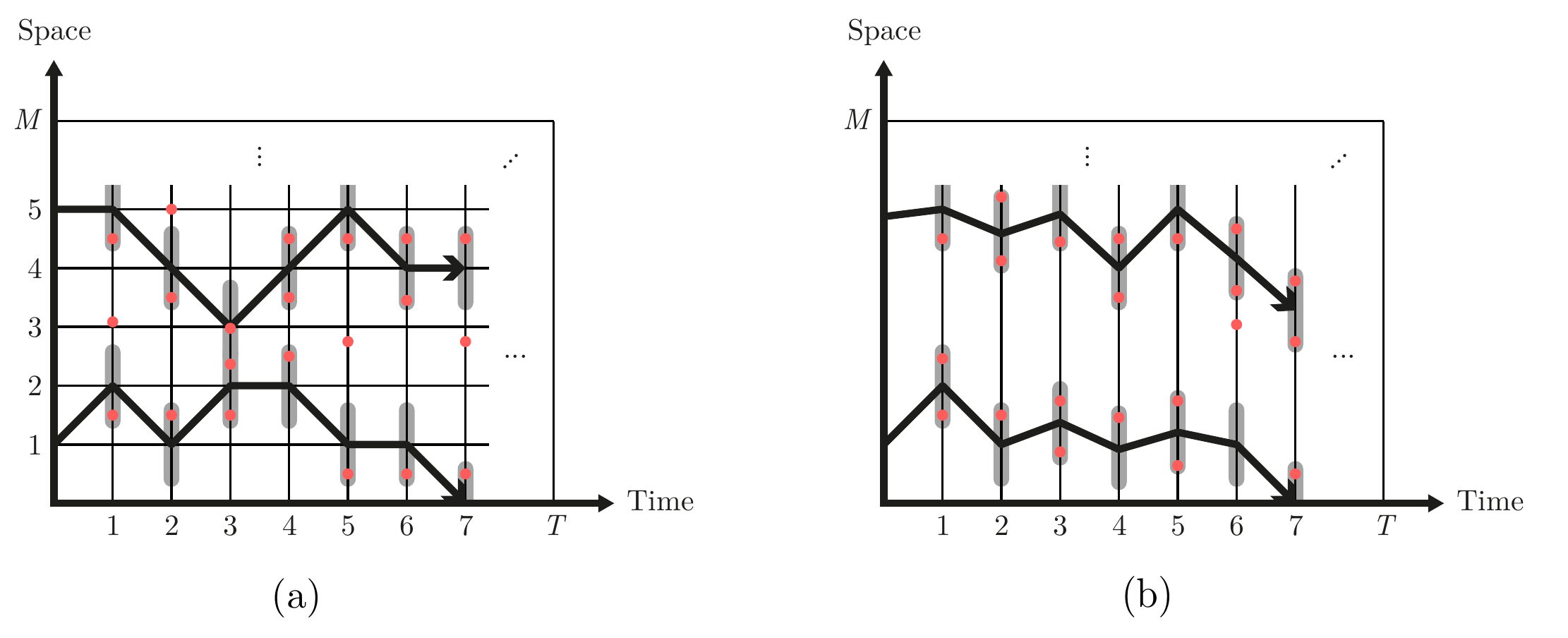}
  \caption{An illustration of how targets and patrols move in discrete (a) and continuous (b) models of the problem.}
  \label{fig:problem}
\end{figure}
Our goal is to find an optimal patrol scheduling strategy for the defender to protect a set of \target{} mobile targets in a one-dimensional space. Figure~\ref{fig:problem} illustrates the problem in a 2-D diagram; the x-axis denotes the evenly discretized temporal domain containing $\time{}+1$ time points, and the y-axis denotes the one-dimensional space of length \M{}. We say a spatial position $i$ is above $j$, if $i > j$ and similarly define a spatial position $i$ to be under $j$, if $i < j$.

The defender has \K{} homogeneous patrols to protect a set of moving targets from a potential attack. Patrols have a maximum speed of \D{}. This means, a move from a position $m_t$ at time $t$ to $m_{t+1}$ at time $t+1$ is invalid if $|m_{t+1} - m_t| > \D{}$. We consider two models of the problem: \textit{discretized model}, denoted by \discreteProblem{} (Figure~\ref{fig:problem}-a), and \textit{continuous model}, denoted by \continuousProblem{} (Figure~\ref{fig:problem}-b). In the discretized model, the position of any patrol at any time point is an integer between 0 and \M{}, but in the continuous model, the patrol locations are not restricted to be integers (note that the temporal domain is still discretized). Furthermore, for any target $a$, \targetpos{a}{t} and \targetweight{a}{t} respectively denote its position and weight at time $t$. Note that in both models, there is no restriction on the position and the speed of targets and the weight of the targets could change from time to time (e.g., ferries may not carry the same number of people at different times). The patrols protect any target within their protection radius \R{} (a fixed number for all patrols). That is, a patrol $k$ at position $m_t$ at time $t$, protects a target $a$, if $|\targetpos{a}{t} - m_t| \leq \R{}$. In Figure~\ref{fig:problem}, the grey ranges around patrols denote the area they protect. We denote the set of targets, the set of spatial positions, the set of patrols and the set of all time points by \targetset{}, \Mset{}, \Kset{} and \timeset{} respectively.

A \textit{patrol path}, is a sequence of \time{} positions $(m_1, m_2, \ldots, m_\time{})$, such that for any $t \in \timeset{}$, a move from $m_t$ to $m_{t+1}$ does not violate the speed limit (the black paths in Figure~\ref{fig:problem} denote patrol paths). A pure strategy of the defender is a set of \K{} patrol paths denoted by $\{v_k\}_{k\in\Kset{}}$. A mixed strategy of the defender, is a probability distribution over her pure strategies. A pure strategy of the attacker is a single target-time pair $(a, t)$ which means the attacker attacks target $a$ at time $t$. Let $\{v_k\}_{k\in\Kset{}}$ be the pure strategy of the defender and $(a, t)$ be the pure strategy of the attacker, attacker's utility is $0$ if target $a$ is protected by at least one patrol at time $t$ and it is \targetweight{a}{t} if it is not protected by any patrols. We assume the game is zero-sum and find minmax strategies.

Without loss of generality, we can assume $\K{} \leq \time{} \target{}$. This observation comes from the fact that with only $\time \target{}$ patrols, the defender can provide a 100\% protection without needing any more patrols. To do this, for any target-time pair $(a, t)$, the defender can put a still patrol at the location of target $a$ at time $t$.

\section{Discrete Model}
\label{sec:discrete}
The main goal of this section is to prove the following theorem:
\begin{theorem}\label{thm:main}
	There is a polynomial time (in input size) algorithm to solve \discreteProblem{}.
\end{theorem}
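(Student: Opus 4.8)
The plan is to exploit the zero-sum structure: computing the defender's minimax strategy amounts to finding a distribution over pure strategies minimizing the attacker's best-response utility. The obstacle is that there are up to $(\M{})^{\K{}\time{}}$ pure strategies, and $\M{}$ itself may be astronomically large, so we can neither write down the game matrix nor enumerate patrol locations. Instead I would build a compact combinatorial object whose \emph{fractional} solutions encode mixed strategies, solve a polynomial-size linear program over it, and then round the fractional optimum back to an explicit small-support mixed strategy. Recall that we may already assume $\K{}\le\time{}\target{}$, so the number of patrols is not an issue; the work is to remove the dependence on $\M{}$ and to handle the coupling of consecutive rounds.

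First step: partition the spatial domain $\lbrack\M{}\rbrack$ into $\order{\time^3\target{}}$ intervals, processing rounds from $\time{}$ down to $1$, so that within each interval all positions are interchangeable for our purposes — they protect the same set of targets at that round (a consequence of the fixed radius $\R{}$ and the target positions $\targetpos{a}{t}$), and the set of intervals reachable in one legal $\D{}$-bounded step from any position in the interval is the same. Proving the $\order{\time^3\target{}}$ bound requires arguing that the backward refinement introduces only $\order{\target{}}$ new breakpoints per round beyond those inherited from later rounds, and that propagating a breakpoint backward through the $\D{}$-reachability relation inflates the count by at most an $\order{\time{}}$ factor. After this step I can treat intervals as the atomic patrol locations, replacing every occurrence of $\M{}$ by $\mathrm{poly}(\time{}\target{})$, with only a $\log(\K{}\M{})$ term surviving in the running time.

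Second step: for each round $t$, construct a layered ``day graph'' $\graph{t}$ in which a unit $\K{}$-flow corresponds to a randomized placement of the patrols at round $t$, with integral source-to-sink paths corresponding to deterministic placements. I would assign edge weights and covers so that a min-cost flow in $\graph{t}$ equals the attacker's best-response utility against the encoded placement, and argue that this cost is the optimum of an auxiliary covering-type LP over the targets active at time $t$ (whose values may be the time-dependent weights $\targetweight{a}{t}$). The property to verify is that the coverage a randomized placement induces on the round-$t$ targets is realizable by a flow of the same cost, and conversely — this is what lets us reason about each ``time layer'' purely in terms of flows.

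Third step: stitch the day graphs together. I would write one master LP that simultaneously chooses a flow in every $\graph{t}$, imposes \emph{compatibility constraints} linking $\graph{t}$ and $\graph{t+1}$ so that the marginal distribution over (interval) positions at round $t$ admits a legal $\D{}$-bounded transition into the marginal at round $t+1$, and minimizes the maximum over $t$ of the attacker utility computed as in the second step. This LP has $\mathrm{poly}(\time{}\target{})$ variables and constraints, so it is solvable in polynomial time, and its value lower-bounds — and, I will show, equals — the game value. The hard part, and the main obstacle, is the converse direction: converting a feasible fractional solution into an actual mixed strategy over tuples of patrol \emph{paths}, not merely over per-round placements. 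Here I would use the homogeneity of the patrols to post-process each $\graph{t}$'s flow so that the selected paths do not cross, and then decompose the resulting laminar family of flows round by round into a polynomial number of mutually consistent pure strategies whose convex combination attains the LP value — the non-crossing normalization is precisely what makes the round-to-round gluing well defined and keeps the support size $\mathrm{poly}(\time{}\target{})$. Finally, LP duality (equivalently, the minimax theorem applied to the reduced instance) certifies optimality of the constructed defender strategy, and the total running time is polynomial in $\time{}\target{}$ and $\log(\K{}\M{})$, i.e., in the input size, proving Theorem~\ref{thm:main}.
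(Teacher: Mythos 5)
Your proposal is correct and follows essentially the same route as the paper: backward interval partitioning with the $\order{\time^3\target{}}$ bound, per-round day graphs whose unit flows encode randomized placements with cost equal to the attacker's best response, and a master LP with compatibility constraints whose solution is made non-crossing and then peeled round-by-round into a small-support mixed strategy, with the LP value certifying optimality. The only cosmetic difference is phrasing (e.g., a ``unit $\K{}$-flow'' versus the paper's unit flow through a $\K{}$-column grid), which does not change the argument.
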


\subsection{Partitioning The Positions}
\label{sub:partition}

The number of pure strategies in a single time point, even with only one patrol, is not polynomial in the input size, since the number of possible locations, \M{}, could be exponentially larger than the input size. To overcome this difficulty, we partition the spatial positions into polynomially many sets of consecutive positions which we call \textit{intervals} and we only keep track of these intervals instead of maintaining the exact position of a patrol within the intervals. For example, assume there is only one target, one patrol and $\time=1$, then it only matters whether the patrol's protection range contains the location of the target or not and the exact position of the patrol does not matter.

\begin{algorithm}
\caption{Partitions the given positions into intervals}
\label{alg:partition}
\begin{algorithmic}[1]
\Function{GetIntervals}{}
	\For{$t = \time$ to $1$} 
		\If {$t < \time$}
	    	$\parof{t} \leftarrow $\Call{GetIntervalPoints}{t, \parof{t+1}}
	    \Else {}
	    	$\parof{t} \leftarrow $\Call{GetIntervalPoints}{t, $\emptyset$}
	    \EndIf
		\State $\intseq{t} \leftarrow$ Array()
    	\For{any two consecutive items $p_i$ and $p_{i+1}$ in \parof{t}}
    		\State $\intseq{t}$ .\Call{Insert}{$[p_i, p_{i+1})$}
    	\EndFor
  	\EndFor
  	\State \Return $\langle \intseq{1}, \ldots, \intseq{\time}\rangle$
\EndFunction
\Function{GetIntervalPoints}{$t$, \parof{t+1}}
	\State $\parof{t} \leftarrow$ SortedArray()
	\State $\parof{t}.\Call{Insert}{0}, \parof{t}.\Call{Insert}{\M{}}$
	\State $\epsilon \leftarrow $ a sufficiently small number in $\mathbb{R}^+$
	\For {each target $a \in \targetset{t}$}
		\If {$\targetpos{a}{t}-\R > 0$} \label{line:R} 
			$\parof{t}$.\Call{Insert}{$\targetpos{a}{t}-\R$}
		\EndIf
		\If {$\targetpos{a}{t}+\R < \M$} \label{line:R+1}
			$\parof{t}$.\Call{Insert}{$\targetpos{a}{t}+\R+\epsdistance$}
		\EndIf
	\EndFor
	\For{$p$ in $\parof{t+1}$}
		\If {$p-\D > 0$}
			$\parof{t}$.\Call{Insert}{$p-\D$} \label{line:+D}
		\EndIf
		\If {$p+\D < \M{}$}
			$\parof{t}$.\Call{Insert}{$p+\D$} \label{line:-D}
		\EndIf
	\EndFor
	\For{any item $p_i$ in $\parof{t}$}
		\If{interval $[p_i, p_{i+1})$ does not contain any patrol position}
			\State $\parof{t}.\Call{Remove}{p_i}$
		\EndIf
	\EndFor
	\State \Return $\parof{t}$
\EndFunction
\end{algorithmic}
\end{algorithm}

We use Algorithm~\ref{alg:partition} to partition the positions into meaningful intervals. For any given time point $t$, the function \getintervalpoints{}, generates a sorted array $\parof{t}=\langle p_1, \ldots, p_{\parsize{t}}\rangle$ of numbers that we call \textit{interval points} and \getintervals{} uses these generated interval points to partition the spatial positions of any time point $t$ to intervals:
$$\intseq{t} = \langle \lbrack p_1, p_2), \lbrack p_2, p_3), \ldots, \lbrack p_{\parsize{t}-1}, p_{\parsize{t}}) \rangle.$$
The intervals are assumed to be left-closed and right-open to simplify the calculations. We use \intseqitem{t}{i} to denote the $i$-th interval in \intseq{t} and use \intseqcons{t}{i}{j} to denote the set of consecutive intervals $\{ \intseqitem{t}{i}, \intseqitem{t}{i+1}, \ldots, \intseqitem{t}{j} \}$.

\begin{figure}
  \centering
  \includegraphics[scale=0.55]{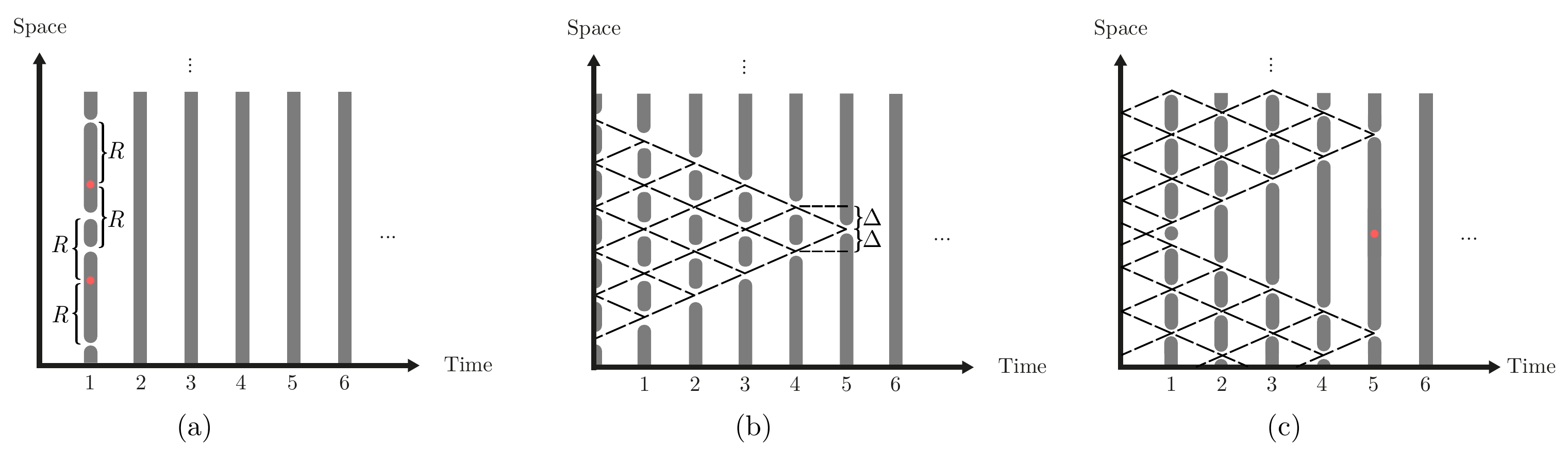}
  \caption{Interval points are added to the partitioning sets in two ways: (a) around any target there are two interval points, and (b) any interval point propagates to the previous time points. A combination of both is shown in (c).}
  \label{fig:intervals}
\end{figure}

The following lemma proves the total number of intervals is polynomial in the input size and as a corollary of that, Algorithm~\ref{alg:partition} runs in polynomial time in the input size.

\begin{lemma}\label{lem:partitionpoly}
	The total number of intervals created by Algorithm~\ref{alg:partition} is $\order{\time^3\target{}}$.
\end{lemma}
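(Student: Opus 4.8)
The plan is to bound the size of each $\parof{t}$ individually and then sum over $t$. Fix a round $t$. The interval points inserted into $\parof{t}$ come from exactly three sources in \getintervalpoints{}: (i) the two boundary points $0$ and $\M{}$; (ii) up to two points per target $a \in \targetset{t}$, namely $\targetpos{a}{t}-\R$ and $\targetpos{a}{t}+\R+\epsdistance$, contributing at most $2\target{}$ points; and (iii) the ``propagated'' points $p \pm \D$ for each $p \in \parof{t+1}$, contributing at most $2|\parof{t+1}|$ points. (The final pruning loop only removes points, so it can only help.) Writing $\parsize{t} = |\parof{t}|$, this gives the recurrence $\parsize{t} \le 2\target{} + 2 + 2\,\parsize{t+1}$ with $\parsize{\time{}+1}:=0$ (or $\parsize{\time{}} \le 2\target{}+2$ as the base case). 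Unrolling this recurrence naively yields $\parsize{t} = \order{2^{\time{}-t}\target{}}$, which is exponential — so the naive counting is too lossy and this is the main obstacle.

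The key observation that rescues the bound is that propagation does not create genuinely new ``types'' of points without bound: a point propagated from $\parof{t+1}$ back to $\parof{t}$ originated, $s$ steps earlier, from either a target point at some round $t+s$ or a boundary point, and after $s$ rounds of propagation it has been shifted by a cumulative amount that is a sum of $\pm\D$ terms. Thus every point in $\parof{t}$ is of the form $x + j\D$ where $x$ is either $0$, $\M{}$, or one of the $O(\time{}\target{})$ target-derived base points $\targetpos{a}{t'}\pm\R\;(+\epsdistance)$ over all rounds $t' \ge t$, and $j$ is an integer with $|j| \le \time{}$. Hence for each fixed base point $x$, propagation can generate at most $2\time{}+1$ distinct descendants of $x$ that survive to round $t$; I would formalize this by induction on $t$ going backwards, maintaining the invariant that every point in $\parof{t}$ lies in the set $\{\, x + j\D : x \in B_t,\ |j| \le \time{}-t \,\}$, where $B_t$ is the set of base points born at rounds $\ge t$. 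Since $|B_t| \le 2\target{}(\time{}-t+1) + 2$, we get $\parsize{t} = \order{\time{}\cdot\target{}(\time{}-t)} = \order{\time{}^2\target{}}$.

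Finally, the number of intervals in $\intseq{t}$ is $\parsize{t}-1 = \order{\time{}^2\target{}}$, and summing over the $\time{}$ rounds gives a total of $\order{\time{}^3\target{}}$ intervals, as claimed. A subtle point I would need to check carefully is the bookkeeping of the $\epsdistance$ offsets: since $\epsdistance$ is chosen ``sufficiently small,'' a point $\targetpos{a}{t'}+\R+\epsdistance$ and its shifts $\targetpos{a}{t'}+\R+\epsdistance+j\D$ should be counted as a bounded family just like the non-$\epsdistance$ points, so the base-point set $B_t$ genuinely has size $\order{\time{}\target{}}$ and the shift argument goes through unchanged. The corollary that Algorithm~\ref{alg:partition} runs in polynomial time is then immediate, since each call to \getintervalpoints{} does work polynomial in $\parsize{t+1} + \target{} + \log\M{}$ and there are $\time{}$ such calls.
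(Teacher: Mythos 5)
Your proposal is correct and is essentially the same argument as the paper's: the paper charges every interval point to the target/time pair it originates from and observes that any charged point must equal $\targetpos{a}{t}\pm i\D \mp(\R\text{ or }\R+\epsdistance)$ with $i\leq \time$, giving $\order{\time^2}$ charges per pair and $\order{\time^3\target}$ points overall, which is exactly your ``base point shifted by $j\D$, $|j|\leq\time$'' invariant organized as a charging scheme rather than a per-round count. The only cosmetic difference is that you bound $\parsize{t}=\order{\time^2\target}$ per round and sum over $t$, while the paper sums charges per target/time pair; the combinatorial content is identical.
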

\begin{proof}
	To prove this, we charge any interval point to a target/time pair and show no target/time pair will be charged more than $\order{\time^2}$ times and since there are at most $\order{\target{}\time{}}$ target/time pairs, there will not be more than $\order{\time^3\target{}}$ interval points.	Note that there are two ways for interval points to be added to a partitioning set $\parof{t}$ (Figure~\ref{fig:intervals}):
	\begin{enumerate}
		\item For any target $a$ at time point $t$, two interval points $\targetpos{a}{t}-\R$ and $\targetpos{a}{t}+\R+\epsilon$ are added to \parof{t}. We charge these two interval points to $(a, t)$.
		\item For any interval point $p$ in $\parof{t+1}$ two interval points $p-\D$ and $p+\D$ are added to $\parof{t}$. We recursively charge these two intervals to the target/time pair that the interval point $p$ is charged to.
	\end{enumerate}
	This means an interval point $p$ could be charged to a target/time pair $(a, t)$, only if $\targetpos{a}{t} = p\pm i \D + \R + \epsilon$ or $\targetpos{a}{t} = p\pm i \D - \R$ for any integer $i$ where $i \leq t$. Although this condition is only necessary and not sufficient, but it implies at most $\order{\time^2}$ interval positions could be charged to an arbitrary target/time pair and therefore the total number of partition points is $\order{\time^3\target{}}$.
\end{proof}

Note that since by Lemma~\ref{lem:partitionpoly}, the total number of interval points at any time point $t$ is polynomial in the input size, function \getintervalpoints{}($t$, $\parof{t+1}$), which is simply a loop over $\parof{t+1}$ and the targets, runs in polynomial time.
\begin{corollary}
	Algorithm~\ref{alg:partition} halts in polynomial time in the input size.
\end{corollary}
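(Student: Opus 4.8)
The plan is to combine Lemma~\ref{lem:partitionpoly} with a line-by-line accounting of the two functions in Algorithm~\ref{alg:partition}. First I would invoke Lemma~\ref{lem:partitionpoly}: it bounds the \emph{total} number of interval points produced across all rounds by $\order{\time^3\target{}}$, so in particular for every fixed $t$ the sorted array $\parof{t}$ has size at most $\order{\time^3\target{}}$, which is polynomial in the input size $O(\time{}\cdot\target{}+\log(\K{}\M{}))$.

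Next I would bound the cost of a single call to \getintervalpoints{}$(t,\parof{t+1})$. The function performs: two initial insertions of $0$ and $\M{}$; a loop over the targets alive at time $t$ (at most $\target{}$ of them), each iteration doing $\order{1}$ arithmetic and at most two insertions into $\parof{t}$; a loop over the points of $\parof{t+1}$ (polynomially many, by the previous paragraph), again $\order{1}$ arithmetic and at most two insertions per iteration; and finally a cleanup pass over the points currently in $\parof{t}$ that deletes every point $p_i$ for which the interval $[p_i,p_{i+1})$ contains no patrol position. The only step needing a word of justification is this last test: in the discrete model $[p_i,p_{i+1})$ contains a patrol position iff $\lceil p_i\rceil < p_{i+1}$, an $\order{1}$ check. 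Since each insertion into / deletion from a sorted structure costs a polynomial in its current size, the whole call runs in $\mathtt{poly}(\time,\target{},\log\M{})$ time.

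Then I would observe that \getintervals{} runs its outer loop $\time$ times, each iteration making one call to \getintervalpoints{} and then building the array \intseq{t} by a single pass over the consecutive pairs of $\parof{t}$ (linear in $|\parof{t}|$, hence polynomial). Summing the per-round costs over the $\time$ rounds keeps everything polynomial, which yields the corollary.

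The one genuine point to verify — and the place I expect a careful reader to push back — is that all numbers manipulated have polynomially bounded bit-length, so that each individual arithmetic operation and comparison is itself polynomial time. The patrol positions are integers in $[0,\M{}]$, taking $\order{\log\M{}}$ bits; every interval point is obtained from such an integer by adding or subtracting an integer multiple (of magnitude at most $\time\D{}\le\time\M{}$) of $\D{}$, together with at most one $\R$ and one $\epsdistance$. Choosing $\epsdistance$ to be, say, $1/(2\M{})$ (or treating it purely symbolically, which only affects tie-breaking in comparisons in an $\order{1}$ way) keeps every interval point of bit-length $\order{\log(\time\M{})}$, so the per-operation cost stays polynomial and the overall bound follows. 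Everything else is the bookkeeping above.
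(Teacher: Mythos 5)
Your proposal is correct and follows essentially the same route as the paper: invoke Lemma~\ref{lem:partitionpoly} to bound the size of each $\parof{t}$ polynomially, then observe that each call to \getintervalpoints{} is just a polynomial-length loop over the targets and over $\parof{t+1}$, repeated $\time$ times by \getintervals{}. Your extra care about bit-lengths, the emptiness test $\lceil p_i\rceil < p_{i+1}$, and the choice of $\epsdistance$ fills in details the paper leaves implicit, but it does not change the argument.
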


The following lemmas prove two important properties of the partitions generated by Algorithm~\ref{alg:partition}. These properties basically imply all patrol locations within the same interval are equivalent as far as the problem is concerned.

\begin{lemma}\label{lem:sameintervalsameprotection}
	Let $k$ and $k'$ be two patrols in the same interval at any time $t$. The set of targets that $k$ and $k'$ protect at time $t$ are equal.
\end{lemma}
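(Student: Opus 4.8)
The plan is to fix a time point $t$, an interval $[p,p') \in \intseq{t}$ produced by Algorithm~\ref{alg:partition}, and two patrol positions $m,m' \in [p,p')$, and to show that whenever $m$ protects a target $a \in \targetset{t}$ then $m'$ does too; by symmetry this gives equality of the two protected sets. The first step is to rewrite the protection condition in a half-open form. A patrol at integer position $m$ protects $a$ at time $t$ iff $m \in [\targetpos{a}{t} - \R,\, \targetpos{a}{t} + \R]$; since $\epsilon$ is chosen small enough that, for every target/time pair, the open interval $(\targetpos{a}{t} + \R,\, \targetpos{a}{t} + \R + \epsilon)$ contains no integer, and since patrol positions are integers, this is equivalent to $m \in [\targetpos{a}{t} - \R,\, \targetpos{a}{t} + \R + \epsilon)$. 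I work with this half-open ``protection interval'' throughout; crucially, its two endpoints are exactly the two interval points \getintervalpoints{} inserts for $a$ at time $t$ (in the cases where they fall strictly between $0$ and \M{}).

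The heart of the proof is a monotonicity property of the removal loop at the end of \getintervalpoints{}: \emph{if $q$ was inserted into $\parof{t}$ and $p < q < p'$ for a surviving interval $[p,p')$, then $[q,p')$ contains no patrol position}. I would establish this by unwinding the removals. In the loop, a pre-removal point $x$ survives precisely when the pre-removal interval immediately to its right contains a patrol position; hence, since $q$ lies strictly between the consecutive surviving points $p$ and $p'$, it was removed, so its right-neighbour pre-removal interval $[q,r)$ is patrol-free. If $r=p'$ we are done; otherwise $r$ also lies strictly inside $(p,p')$, hence was removed too, and an induction on the number of pre-removal points in $(q,p']$ shows $[q,p')$ is patrol-free. (Equivalently: a surviving interval is one populated pre-removal interval followed by patrol-free ones, so every inserted boundary strictly inside it has only patrol-free pre-removal intervals to its right within $[p,p')$.)

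Granting this, I finish by contradiction. Suppose $m$ protects $a$ but $m'$ does not, so $m \in [\targetpos{a}{t} - \R,\, \targetpos{a}{t} + \R + \epsilon)$ while $m'$ is not. If $m' < \targetpos{a}{t} - \R$, then since $m' \ge 0$ the point $q := \targetpos{a}{t} - \R$ was inserted into $\parof{t}$, and $p \le m' < q \le m < p'$, so $p < q < p'$; the monotonicity property then gives that $[q,p')$ is patrol-free, contradicting $m \in [q,p')$. If instead $m' \ge \targetpos{a}{t} + \R + \epsilon$, then since $m' \le \M{}$ the point $q := \targetpos{a}{t} + \R + \epsilon$ was inserted, and $p \le m < q \le m' < p'$, so again $p < q < p'$ and $[q,p')$ is patrol-free, contradicting $m' \in [q,p')$. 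Either way we reach a contradiction, so $m$ and $m'$ protect the same set of targets.

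I expect the removal-loop monotonicity claim to be the only real obstacle: the concluding case analysis is routine, and the remaining care is just the (uniform, finite) choice of $\epsilon$ that validates the half-open reformulation. If phrasing the monotonicity claim for the loop exactly as written turns out to be fiddly, an equivalent and cleaner route is to prove the lemma first for the pre-removal partition --- where every protection-interval endpoint is literally an interval point, so no protection boundary can fall strictly inside a pre-removal interval, making the statement immediate --- and then observe that each surviving interval extends a single pre-removal interval only by appended patrol-free stretches, which cannot change the set of targets protected by patrols inside it.
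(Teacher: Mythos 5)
Your proof is correct and takes essentially the same route as the paper's: both arguments rest on the fact that \getintervalpoints{} inserts the protection boundaries $\targetpos{a}{t}-\R$ and $\targetpos{a}{t}+\R+\epsdistance$ as interval points, so a patrol protecting $a$ and a patrol not protecting $a$ must be separated by an interval boundary and hence cannot share an interval. The only difference is one of care, not of idea — the paper's proof simply assumes these two points survive as interval points (and that $\epsdistance$ is small enough), while you additionally verify the $0$/\M{} insertion conditions and show, via the removal-loop invariant, that a removed or absent boundary point can only be followed by a patrol-free stretch within its surviving interval, which closes the gap the paper leaves implicit.
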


\begin{lemma}\label{lem:movinginintervals}
	Let $[s_i, f_i)$ and $[s_j, f_j)$ be two arbitrary intervals in $\intseq{t}$ and $\intseq{t+1}$ respectively. If there exists a feasible move from an arbitrary position in $[s_i, f_i)$ to a position in $[s_j, f_j)$, for any position in $[s_i, f_i)$, there exists a feasible move to a position in $[s_j, f_j)$.
\end{lemma}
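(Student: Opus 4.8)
The plan is to observe that ``having a feasible one-step move into the \emph{whole} interval $[s_j,f_j)$'' is itself an interval condition on the source position, whose two boundary values are among the interval points generated for time $t$; since $[s_i,f_i)$ sits between two consecutive such points, it cannot straddle either boundary, and that forces the all-or-nothing behaviour in the statement.

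Concretely, I would proceed in three steps. (i) \emph{Characterize the reachable sources.} A position $m$ at time $t$ admits a feasible move into $[s_j,f_j)$ iff the window $[m-\D,m+\D]$ meets $[s_j,f_j)$, which a routine check (using that $[m-\D,m+\D]$ is closed and $[s_j,f_j)$ is half-open) shows is equivalent to $m+\D\ge s_j$ and $m-\D<f_j$, i.e.\ to $m\in W:=[s_j-\D,\,f_j+\D)$. (ii) \emph{Locate the boundaries of $W$ among the interval points.} Since $[s_j,f_j)\in\intseq{t+1}$, the numbers $s_j$ and $f_j$ are consecutive interval points of $\parof{t+1}$, and the loop over $\parof{t+1}$ inside \getintervalpoints{} (Algorithm~\ref{alg:partition}) inserts $p-\D$ and $p+\D$ into $\parof t$ for every $p\in\parof{t+1}$; hence $s_j-\D$ and $f_j+\D$ are interval points of $\parof t$ whenever they lie in $(0,\M)$. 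When $s_j-\D\le 0$, the inequality $m\ge s_j-\D$ holds for \emph{all} positions, since positions lie in $[0,\M]$, so $W$ does not separate any positions on the left; symmetrically for $f_j+\D\ge\M$ on the right. (iii) \emph{Apply the no-straddle property.} The interval $[s_i,f_i)$ is the gap between two consecutive interval points of $\parof t$, so it contains no interval point in its interior and is therefore either entirely contained in $W$ or entirely disjoint from $W$.

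The lemma is then immediate: if some $m\in[s_i,f_i)$ has a feasible move into $[s_j,f_j)$ then $m\in W$, hence $[s_i,f_i)\subseteq W$, hence every $\tilde m\in[s_i,f_i)$ lies in $W$ and admits such a move. Two loose ends remain, and they are where the actual care goes, not in the idea above. First, the final removal loop of \getintervalpoints{} can delete the values $s_j-\D$ or $f_j+\D$ from $\parof t$, but it deletes an interval point only when the sub-interval it bounds contains no patrol position; merging such a patrol-free sub-interval into a neighbour preserves the ``inside-$W$ versus outside-$W$'' dichotomy \emph{for actual patrol positions}, which is all the lemma asserts. Second, to stay faithful to the discrete model one should exhibit an \emph{integer} destination in $[s_j,f_j)$: since $\D$ may be taken to be a nonnegative integer and $[s_j,f_j)$ contains a patrol position (again by the removal loop), for integer $\tilde m\in W$ the $2\D+1$ consecutive integers $\tilde m-\D,\dots,\tilde m+\D$ meet $\{z\in\mathbb Z:\ s_j\le z<f_j\}$, because $\tilde m+\D\ge s_j$, $\tilde m-\D<f_j$, and that integer set is nonempty. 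I expect the conceptual core ($\pm\D$-propagation of interval points) to be a one-liner, and the real effort to be exactly this bookkeeping around half-open endpoints, clipping at $0$ and $\M$, and the removal step.
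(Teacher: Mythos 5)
Your proposal is correct and follows essentially the same route as the paper's proof: feasibility of a move into $[s_j,f_j)$ is the window condition $s_j-\D \le m < f_j+\D$, the endpoints $s_j-\D$ and $f_j+\D$ are inserted into $\parof{t}$ by \getintervalpoints{}, and since $[s_i,f_i)$ lies between consecutive interval points it cannot straddle them, so it is wholly inside the window once one of its positions is. Your extra bookkeeping (clipping at $0$ and $\M$, the removal loop, and exhibiting an integer destination in $[s_j,f_j)$) only tightens steps the paper's shorter proof leaves implicit.
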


We can now define the feasible set of a set of consecutive intervals:
\begin{definition}[feasible sets]\label{def:feas}
	We define the feasible set of \intseqcons{t}{i}{j}, denoted by $\outfeas{t}{\intseqcons{t}{i}{j}}$, to be a subset of \intseq{t+1}, containing an interval \intseqitem{t+1}{i'} iff there exists a feasible move from a position in some interval in \intseqcons{t}{i}{j} to \intseqitem{t+1}{i'}. We may occasionally abuse this notation and use the simpler form of $\outfeas{t}{\intseqitem{t}{i}}$ instead of $\outfeas{t}{\intseqcons{t}{i}{i}}$ (Figure~\ref{fig:feasset}).
	\end{definition}
\begin{figure}
  \centering
  \includegraphics[scale=0.55]{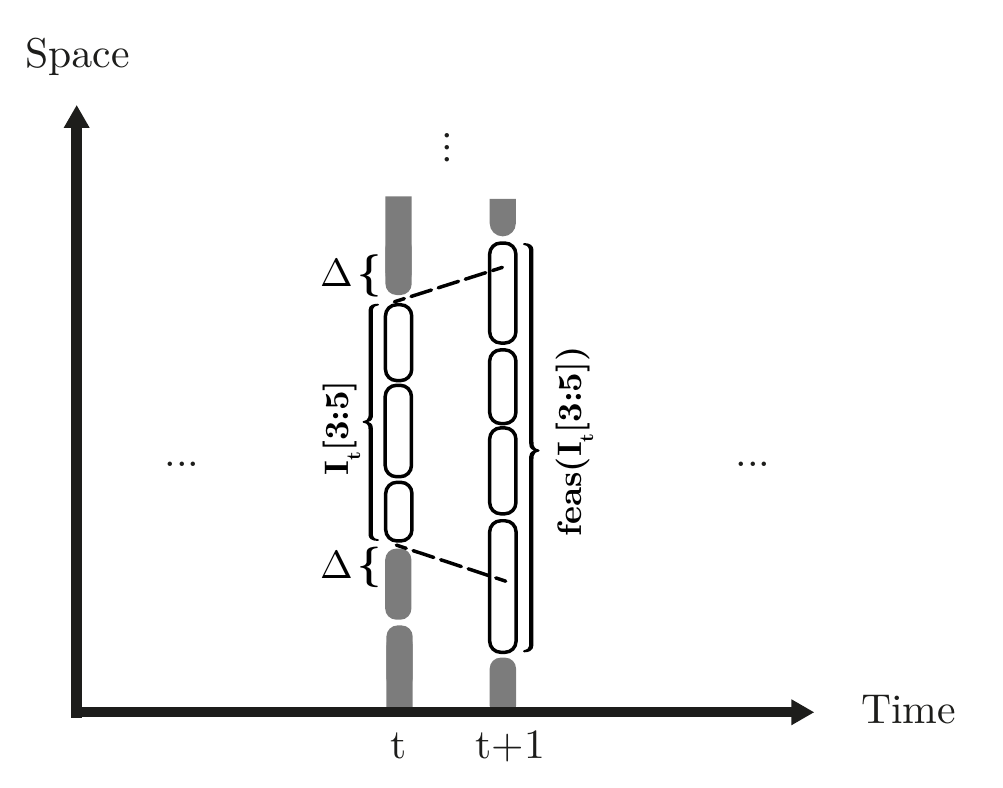}
  \caption{The white intervals at time $t+1$, are in the feasible set of the white intervals at time $t$}
  \label{fig:feasset} 
\end{figure}
It is easy to see the following corollary of Definition~\ref{def:feas}:
\begin{corollary}
	For any $\intseqcons{t}{i}{j}$ there exists a consecutive interval set $\intseqcons{t+1}{i'}{j'}$ such that
	$$\outfeas{t}{\intseqcons{t}{i}{j}} = \intseqcons{t+1}{i'}{j'}.$$
\end{corollary}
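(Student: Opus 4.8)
The claim is billed as ``easy to see,'' and the plan is to keep it that way by reducing it to an elementary fact about contiguous blocks of integer positions under the ``$\D$-neighborhood'' operation. The first step is to unwind Definition~\ref{def:feas}: an interval $\intseqitem{t+1}{\ell}$ lies in $\outfeas{t}{\intseqcons{t}{i}{j}}$ exactly when $\intseqitem{t+1}{\ell}$ contains a patrol position at distance at most $\D$ from some patrol position contained in one of $\intseqitem{t}{i},\dots,\intseqitem{t}{j}$, since a move is feasible precisely when $|m_{t+1}-m_t|\le\D$. So it suffices to understand which intervals of $\intseq{t+1}$ meet the set $B^{+\D}$ of positions within distance $\D$ of the set $B$ of patrol positions covered by $\intseqitem{t}{i}\cup\dots\cup\intseqitem{t}{j}$.

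Next I would exploit the structure of the partition produced by Algorithm~\ref{alg:partition}. Because the intervals of $\intseq{t}$ are consecutive, left-closed and right-open, with break-points listed in sorted order from $0$ to $\M$, the union $\intseqitem{t}{i}\cup\dots\cup\intseqitem{t}{j}$ is itself a single half-open interval $[p_i,p_{j+1})$; hence $B$, the set of integer patrol positions it contains, is a contiguous block of integers, and it is nonempty because the interval-removal step of Algorithm~\ref{alg:partition} deletes every break-point whose interval contains no patrol position. The $\D$-neighborhood of a contiguous block of integers is again a contiguous block of integers, so $B^{+\D}$ (intersected with the valid position range) is nonempty and contiguous as well; note that non-integrality of $\R$, of the ``$+\epsilon$'' break-points, and of the target positions is irrelevant here, since we have already projected onto the integer patrol positions.

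Finally, the intervals of $\intseq{t+1}$ also partition the valid positions into consecutive half-open pieces, so the pieces that meet the contiguous block $B^{+\D}$ form a consecutive run: every piece lying entirely to the left of $\min B^{+\D}$ or entirely to the right of $\max B^{+\D}$ misses $B^{+\D}$, while any piece strictly in between contains a valid position lying between those two endpoints, hence inside $B^{+\D}$. Writing $\intseqitem{t+1}{i'}$ and $\intseqitem{t+1}{j'}$ for the pieces containing the smallest and largest elements of $B^{+\D}$, this run is exactly $\intseqcons{t+1}{i'}{j'}$, and by the first step it coincides with $\outfeas{t}{\intseqcons{t}{i}{j}}$; nonemptiness of the feasible set follows from $B\subseteq B^{+\D}\ne\emptyset$. (One could instead argue interval by interval, using Lemma~\ref{lem:movinginintervals} to replace each $\intseqitem{t}{k}$ by a single representative position and then chaining the resulting single-interval feasible sets, which are adjacent because consecutive intervals of $\intseq{t}$ contain consecutive integer positions; but the block-of-positions view makes the chaining automatic.) I do not expect a genuine obstacle here — the only thing to be careful about is the bookkeeping with half-open intervals and the boundary positions $0$ and $\M$, i.e., confirming that ``valid patrol position'' and ``interval of $\intseq{t+1}$'' line up so that $B$ and $B^{+\D}$ really do stay nonempty and contiguous after all the clamping.
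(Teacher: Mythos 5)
Your proposal is correct, and it matches the paper's (implicit) reasoning: the paper states this corollary without proof as an immediate consequence of Definition~\ref{def:feas}, and your argument --- the union of consecutive intervals is a contiguous block of integer patrol positions, its $\D$-neighborhood is again contiguous and nonempty, and the intervals of $\intseq{t+1}$ meeting a contiguous block form a consecutive run --- is exactly the justification being taken for granted. The one fact you rely on beyond the definition, that every interval of the partition contains at least one valid patrol position, is indeed guaranteed by the interval-point removal step of Algorithm~\ref{alg:partition}, so there is no gap.
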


\begin{definition}[interval path] \label{def:intervalpath}
	We define an \textit{interval path} to be a sequence of $\time$ intervals $\langle \intseqitem{i}{x_i}\rangle_\time$, such that for any time point $t \in \timeset{}$, $\intseqitem{t+1}{x_{t+1}} \in \outfeas{t}{\intseqitem{t}{x_{t}}}$. Moreover, for any interval path $\xi = \langle \intseqitem{i}{x_i}\rangle_\time$, we define $S(\xi)$ to be the set of all patrol paths that are within $\xi$. More formally, a patrol path $\langle m_i \rangle_\time$ is within $S(\xi)$ if and only if for any time point $t \in \timeset$, $m_t$ is in interval $\intseqitem{t}{x_t}$.
\end{definition}

Note that any patrol path is within exactly one interval path, since intervals do not overlap and they cover all locations. It could also be obtained from the following lemma that $S(\xi)$ is never empty for an interval path $\xi$. The proof is to choose any position in $\intseqitem{1}{x_1}$ and following the valid movements until we reach a position in $\intseqitem{\time}{x_{\time}}$.
\begin{lemma}\label{lem:surjective}
	For any interval path $\xi = \langle \intseqitem{i}{x_i}\rangle_\time$, there is at least one patrol path in $S(\xi)$.
\end{lemma}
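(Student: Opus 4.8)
The plan is to build the required patrol path greedily, one time point at a time, using Lemma~\ref{lem:movinginintervals} to guarantee that each extension step can never get stuck. Concretely, I would prove by induction on $t$ the following statement: there is a partial patrol path $\langle m_1, \ldots, m_t \rangle$ with $m_\tau \in \intseqitem{\tau}{x_\tau}$ for every $\tau \le t$ and $|m_{\tau+1} - m_\tau| \le \D$ for every $\tau < t$. Taking $t = \time$ then yields a full patrol path lying within $\xi$.

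For the base case $t = 1$, I would simply pick $m_1$ to be any position inside $\intseqitem{1}{x_1}$. This is possible because the last loop of \getintervalpoints{} deletes every interval containing no valid patrol position, so each surviving interval of $\intseq{1}$ contains at least one integer location in $\Mset{}$; hence $\intseqitem{1}{x_1} \neq \emptyset$.

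For the inductive step, suppose $\langle m_1, \ldots, m_t \rangle$ has been constructed with $m_t \in \intseqitem{t}{x_t}$ and $t < \time$. Since $\xi$ is an interval path, Definition~\ref{def:intervalpath} gives $\intseqitem{t+1}{x_{t+1}} \in \outfeas{t}{\intseqitem{t}{x_t}}$, which by Definition~\ref{def:feas} means there is a feasible move from \emph{some} position in $\intseqitem{t}{x_t}$ into $\intseqitem{t+1}{x_{t+1}}$. Applying Lemma~\ref{lem:movinginintervals} with the pair of intervals $\intseqitem{t}{x_t}$ and $\intseqitem{t+1}{x_{t+1}}$, it follows that from \emph{every} position of $\intseqitem{t}{x_t}$ — in particular from $m_t$ — there is a feasible move to some position $m_{t+1} \in \intseqitem{t+1}{x_{t+1}}$. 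Appending $m_{t+1}$ extends the partial path and preserves both invariants, completing the induction.

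Finally, for $t = \time$ we obtain a sequence $\langle m_1, \ldots, m_\time \rangle$ all of whose consecutive moves respect the speed limit \D{}, hence a genuine patrol path, and with $m_\tau \in \intseqitem{\tau}{x_\tau}$ for all $\tau$, so by Definition~\ref{def:intervalpath} it belongs to $S(\xi)$; thus $S(\xi) \neq \emptyset$. I do not anticipate a real obstacle here: the only point requiring care is not to conflate ``there exists a starting position admitting a feasible continuation'' (which is all that membership in the feasible set directly provides) with ``every starting position admits a feasible continuation'' (which is what the greedy construction needs), and that gap is exactly what Lemma~\ref{lem:movinginintervals} closes; a secondary, one-sentence point is the nonemptiness of each interval, which follows from the pruning step of Algorithm~\ref{alg:partition}.
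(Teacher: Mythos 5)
Your proof is correct and follows essentially the same route as the paper: start at an arbitrary position of the first interval and repeatedly use Lemma~\ref{lem:movinginintervals} (together with $\intseqitem{t+1}{x_{t+1}} \in \outfeas{t}{\intseqitem{t}{x_t}}$) to extend the path, the only difference being that you phrase the construction as an explicit induction and note the nonemptiness of intervals from the pruning step, which the paper leaves implicit.
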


 Note that by Lemma~\ref{lem:sameintervalsameprotection}, two patrols that are in the same interval, protect the same set of targets at that specific time. This implies that the patrol paths that are within the same interval path, protect the same set of targets at all times and could be replaced with one another in any strategy, without changing the utilities. This means the amount of information encoded in an interval path is sufficient to describe the important characteristics of strategies and find the optimal one.
\subsection{Strategies In a Single Time Point}
\label{sub:single-time}

In this section we explain how to locally find the best strategy for a single time point ignoring the speed limitations. Note that although we proved the number of intervals is polynomial in the input size, there are still exponentially many different ways to place our \K{} patrols in them. This section describes how we can resolve this problem and find the best strategy.

We use the term \textit{snapshot} to denote a patrol placement at a single time point and formally define it as follows:
\begin{definition}[snapshots]
	A pure snapshot at time point $t$, is an assignment of patrols to intervals of time $t$. We denote it by a sorted sequence of \K{} intervals $\langle \intseqitem{t}{y_i} \rangle_\K$ such that for any $i \in \Kset{}$, $y_i \leq y_{i+1}$. A mixed snapshot, denoted by $\{(d_i, p_i)\}_{n}$ is a probability distribution over $n$ pure snapshots where $p_i$ denotes the probability of choosing pure snapshot $d_i$ and $\Sigma_{i=1}^{n}p_i = 1$.
\end{definition}

For any time point $t$, we construct a weighted directed graph $\graph{t}$ (called a \textit{day graph}) and give a one-to-one mapping between pure snapshots at time $t$ and paths from \source{t} (source vertex) to \sink{t} (sink vertex), where \source{t} and \sink{t} are two specific vertices of \graph{t}. Moreover, we map any mixed snapshot at time point $t$, to a network flow of 1 unit from \source{t} to \sink{t}. This mapping is not necessarily one-to-one and many mixed snapshots may be mapped to the same network flow; however, the maximum utility of the attacker in all such mixed snapshots, will be the same.

\begin{figure}
  \centering
  \includegraphics[scale=0.70]{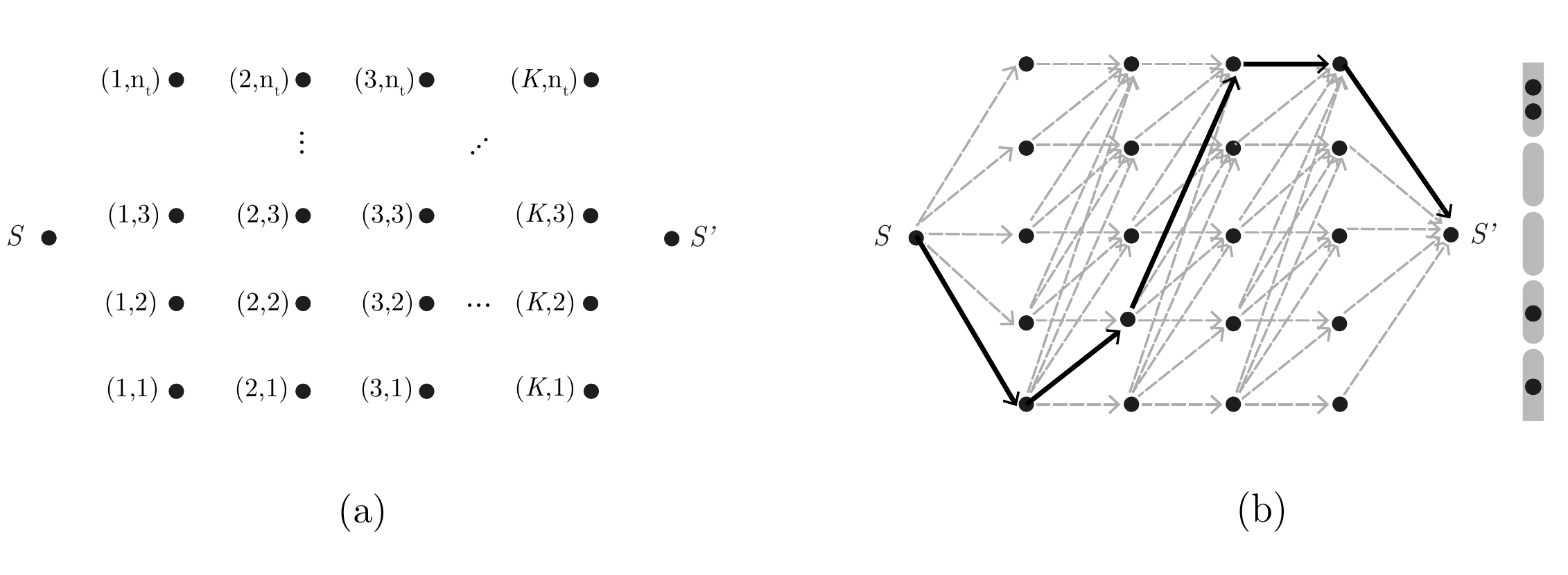}
  \caption{Figure (a) shows the vertices of a day graph and figure (b) shows a sample day graph. The highlighted path in figure (b) shows a canonical path and its equivalent patrol placement is shown in the vertical bar next to it, each grey piece denotes an intervals and the black dots denote the location of patrols in them.}
  \label{fig:graphpure}
\end{figure}

In Definition~\ref{def:graph} we formally explain how \graph{t} is constructed. An informal explanation of it is as follows: the vertex set of \graph{t}, as shown in Figure~\ref{fig:graphpure}-a, includes a vertex \source{t}, a vertex \sink{t} and a grid of $\K\times \parsize{t}$ vertices (recall that \parsize{t} is the number of intervals at time $t$) each denoted by \vertex{t}{x}{y}. There is an edge from \source{t} to any vertex in the first column of the grid \vertex{t}{1}{y}, and there is an edge from any vertex in the last column of the grid \vertex{t}{\K}{y} to \sink{t}. Also for any $x$, $y$, and $y'$, there is an edge from \vertex{t}{x}{y} to \vertex{t}{x+1}{y'} if $y \leq y'$. Furthermore, we define a \textit{canonical path} to be any path from \source{t} to \sink{t} and define a \textit{canonical flow} to be any flow of unit 1 from \source{t} to \sink{t} (Definition~\ref{def:canonical}). We give a one-to-one mapping between canonical paths in \graph{t} and pure snapshots at time point $t$ in Definition \ref{def:puremapping} and map any mixed snapshot to a canonical flow in Definition~\ref{def:mixedmapping}. Figure~\ref{fig:graphpure}-b shows a sample day graph, a canonical path in it and its equivalent pure snapshot. Moreover, we assign weights to the edges of \graph{t} such that the maximum payoff of the attacker for a pure (mixed) snapshot equals the cost of its corresponding canonical path (flow). The cost of a canonical path and a canonical flow is defined in Definition~\ref{def:canonical}.

For any target $a$ and any intervals $i$ and $i'$ at time $t$, we define the binary value $\intervalscover{t}{i}{i'}{a}$ to be $1$ if the following two conditions hold: (1) target $a$ is located in a position between $i$ and $i'$ (non-inclusive) at time $t$, and, (2) $a$ could not be protected by any patrol at any arbitrary position in $i$ or $i'$; otherwise we set $\intervalscover{t}{i}{i'}{a}$ to be 0. We similarly define two binary variables $\intervalscover{t}{\varnothing}{i}{a}$ and $\intervalscover{t}{i}{\varnothing}{a}$ for the border cases. We set $\intervalscover{t}{\varnothing}{i}{a}$ to be 1 iff target $a$ is in a position below $i$ where no patrol in $i$ can protect it and set $\intervalscover{t}{i}{\varnothing}{a}$ to be 1 iff target $a$ is in a position above $i$ where no patrol in $i$ can protect it. Assume a patrol placement $p$ does not protect a target $a$ at time $t$. Let $i$ and $i'$ be the closest intervals to target $a$, that contain at least one patrol and are below and above $a$ respectively (set them to be $\varnothing$ if no such interval exits), then by definition $\intervalscover{t}{i}{i'}{a} = 1$. Using this definition, we can now formally define a day graph and canonical path/flow.
\begin{definition}[day graph]
\label{def:graph}
	Given a time point $t$, we construct graph \graph{t} as follows:
	\begin{enumerate}
		\item Graph \graph{t} contains a vertex \source{t} (source), a vertex \sink{t} (sink) and $\K \times \parsize{t}$ other vertices, each denoted by \vertex{t}{x}{y} for $1 \leq x\leq \K$ and $1 \leq y \leq \parsize{t}$.
		\item For any $y$ such that $1 \leq y \leq \parsize{t}$, there is an edge from \source{t} to \vertex{t}{1}{y}. If $e$ denotes an edge of this kind, for any target $a$, we define $\edgecover{e}{a}$ to be $\intervalscover{t}{\varnothing}{\intseqitem{t}{y}}{a}$.
		\item For any $y$ such that $1 \leq y \leq \parsize{t}$, there is an edge from \vertex{t}{\K}{y} to \sink{t}. If $e$ denotes an edge of this kind, for any target $a$, we define $\edgecover{e}{a}$ to be $\intervalscover{t}{\intseqitem{t}{y}}{\varnothing}{a}$.
		\item For any two vertices \vertex{t}{x}{y} and \vertex{t}{x+1}{y'}, if $y \leq y'$, there is an edge from \vertex{t}{x}{y} to \vertex{t}{x+1}{y'}. If $e$ denotes this edge, for any target $a$, we define $\edgecover{e}{a} = \intervalscover{t}{\intseqitem{t}{y}}{\intseqitem{t}{y'}}{a}$.
	\end{enumerate}
\end{definition}

\begin{definition}[canonical path/flow]\label{def:canonical}
	In a day graph \graph{t} any path from \source{t} to \sink{t} is a canonical path. Let $E=\{ e_1, e_2, \ldots, e_{\K+1} \}$ be the set of edges in a canonical path, and $a$ be an arbitrary target. We define the cost of this canonical path for target $a$ to be:
	\begin{equation}
		 \sum_{e \in E} \edgecover{e}{a} . \targetweight{t}{a} 	\end{equation}
	Also, any flow of unit 1 from \source{t} to \sink{t} is a canonical flow. We denote any canonical flow with a function $f:\edgeset{\graph{t}}\rightarrow [0, 1]$, where \edgeflow{f}{e} denotes the flow passing through an edge $e$. The cost of an arbitrary canonical flow $f$, for day graph \graph{t} is:
	\begin{equation}
		\max \sum_{e \in \edgeset{\graph{t}}} \edgeflow{f}{e} . \edgecover{e}{a} . \targetweight{t}{a} \qquad \forall a; \text{ where } a \in \targetset.
	\end{equation}
\end{definition}
Intuitively speaking, \edgecover{e}{a} is 1 if and only if having edge $e$ in a canonical path $p$ implies that in the ``equivalent" pure snapshot of $p$, target $a$ is not covered by any patrol.  The formal mapping of canonical paths and flows to snapshots is as follows.

\begin{definition}[pure snapshot mapping]
\label{def:puremapping}
	Let $s_t = \langle \intseqitem{t}{y_i}\rangle_\K$ be a pure snapshot (recall that $y_i \leq y_{i+1}$ for any $i \in \Kset{}$). We map $s_t$ to the following canonical path: $$p_t = \langle \source{t}, \vertex{t}{1}{y_1}, \vertex{t}{2}{y_2}, \ldots, \vertex{t}{\K}{y_{\K}}, \sink{t} \rangle$$ and similarly map $p_t$ to $s_t$ and say $s_t$ and $p_t$ are equivalent.
\end{definition}
	
\begin{definition}[mixed snapshot mapping]
	\label{def:mixedmapping}
	Let $m=\{(d_i, p_i)\}_{n}$ be a mixed snapshot at time $t$. Also let $f_i$ denote a flow of unit $p_i$ from \source{t} to \sink{t} through the edges of the equivalent canonical path of $d_i$. We construct flow $f$ as follows: for any edge $e$ of \graph{t}, $\edgeflow{f}{e} = \Sigma_{i=1}^{n} \edgeflow{f_i}{e}$. Note that since by definition of a mixed snapshot, $\Sigma_{i=1}^{n} p_i = 1$, $f$ is a flow of unit 1 from \source{t} to \sink{t}, and hence is a canonical flow. We map $m$ to $f$.
\end{definition}
In Lemma \ref{lem:purelemma} we prove that the payoff of the attacker if he attacks target $a$ at time $t$ while the placement of patrols is represented by the pure strategy $s$, equals to the cost of the target $a$ in the canonical path equivalent to $s$. Then in Lemma \ref{lem:mixedcost} we prove that the maximum payoff of the attacker at time $t$ while the strategy of defender is represented by the mixed snapshot $m$ is equal to the cost of canonical flow equivalent to $m$. These two lemmas can be directly obtained by the given definitions, however, for space limitations, their formal proofs are left to the appendix.

\begin{lemma} \label{lem:purelemma}
	Let $s$ be a pure snapshot at time $t$, and $a$ be an arbitrary target. The payoff of the attacker with respect to $s$, if he attacks the target $a$ at time $t$, equals the cost of the target $a$ in the canonical path equivalent (Definition~\ref{def:puremapping}) to $s$.
\end{lemma}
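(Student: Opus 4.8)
The plan is to reduce the claim to a counting statement about the edge-covers $\edgecover{\cdot}{a}$ along the canonical path $p_t$ equivalent (Definition~\ref{def:puremapping}) to $s$: I will show that $\sum_{e\in\edgeset{p_t}}\edgecover{e}{a}$ equals $1$ when the target $a$ is \emph{not} protected in $s$ and equals $0$ when $a$ \emph{is} protected in $s$. Since the cost of $a$ along $p_t$ is $\big(\sum_{e\in\edgeset{p_t}}\edgecover{e}{a}\big)\cdot\targetweight{a}{t}$ by Definition~\ref{def:canonical}, this yields cost $\targetweight{a}{t}$ in the first case and $0$ in the second, which is exactly the attacker's payoff for attacking $a$ at time $t$, proving the lemma.

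The argument rests on two features of the partition of Section~\ref{sub:partition}, for this fixed $a$ and $t$. \textbf{(P1)} By Lemma~\ref{lem:sameintervalsameprotection}, each interval of $\intseq{t}$ is \emph{protecting} (every position in it protects $a$) or \emph{non-protecting} (no position in it does), and correspondingly a patrol inside an interval protects $a$ iff the whole interval does. \textbf{(P2)} Since Algorithm~\ref{alg:partition} inserts $\targetpos{a}{t}-\R$ and $\targetpos{a}{t}+\R+\epsilon$ into $\parof{t}$ (lines~\ref{line:R}--\ref{line:R+1}, up to clipping at $0$ and $\M{}$), no interval of $\intseq{t}$ straddles either point, so the position line splits into three consecutive zones $[0,\targetpos{a}{t}-\R)$, $[\targetpos{a}{t}-\R,\targetpos{a}{t}+\R+\epsilon)$, $[\targetpos{a}{t}+\R+\epsilon,\M{})$, each a union of intervals of $\intseq{t}$; using integrality of patrol positions and smallness of $\epsilon$, the protecting intervals are exactly those in the middle zone, while the outer zones consist of non-protecting intervals, the left one strictly below $\targetpos{a}{t}$ and the right one strictly above it.

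Write $s=\langle\intseqitem{t}{y_1},\dots,\intseqitem{t}{y_\K}\rangle$ with $y_1\le\cdots\le y_\K$, so the edges of $p_t$ in order are $\source{t}\!\to\!\vertex{t}{1}{y_1}$ with cover $\intervalscover{t}{\varnothing}{\intseqitem{t}{y_1}}{a}$, the middle edges $\vertex{t}{j}{y_j}\!\to\!\vertex{t}{j+1}{y_{j+1}}$ ($1\le j\le\K-1$) with cover $\intervalscover{t}{\intseqitem{t}{y_j}}{\intseqitem{t}{y_{j+1}}}{a}$, and $\vertex{t}{\K}{y_\K}\!\to\!\sink{t}$ with cover $\intervalscover{t}{\intseqitem{t}{y_\K}}{\varnothing}{a}$. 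Look at the first defining condition of $\intervalscover{t}{i}{i'}{a}$, that $\targetpos{a}{t}$ lies strictly between $i$ and $i'$ (reading a $\varnothing$ in the left slot as $-\infty$ and a $\varnothing$ in the right slot as $+\infty$). If $\targetpos{a}{t}$ lies inside some $\intseqitem{t}{y_j}$, this fails for every edge; that interval is in the middle zone hence protecting by (P2), so $a$ is protected and $\sum_e\edgecover{e}{a}=0$, as desired. Otherwise each $\intseqitem{t}{y_j}$ lies fully below or fully above $\targetpos{a}{t}$; let $b$ be the number lying below. Because the $y_j$ are sorted, the ``strictly between'' condition holds for exactly one edge --- the \emph{candidate}: the source edge if $b=0$, the sink edge if $b=\K$, and the middle edge $\vertex{t}{b}{y_b}\!\to\!\vertex{t}{b+1}{y_{b+1}}$ if $0<b<\K$ --- so every other edge has cover $0$. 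It remains to show the candidate has cover $1$ iff $a$ is unprotected. If $a$ is unprotected then by (P1) every snapshot interval is non-protecting, in particular the candidate's endpoint interval(s); as that is precisely the second defining condition of the candidate's cover, its cover is $1$. Conversely, suppose the candidate's endpoint interval(s) are non-protecting. A non-protecting interval below $\targetpos{a}{t}$ lies in the left zone and one above $\targetpos{a}{t}$ lies in the right zone, whereas every protecting interval lies in the (strictly in-between) middle zone; hence in the interval order a protecting snapshot interval would have to lie strictly between the consecutive entries $\intseqitem{t}{y_b},\intseqitem{t}{y_{b+1}}$ (resp.\ strictly below the first snapshot interval if $b=0$, strictly above the last if $b=\K$), which is impossible. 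Thus no snapshot interval is protecting, i.e.\ $a$ is unprotected. In every case $\sum_{e\in\edgeset{p_t}}\edgecover{e}{a}=1$ if $a$ is unprotected in $s$ and $0$ otherwise, which completes the proof.

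I expect the conceptual core to be the two properties (P1)--(P2) and the ``exactly one candidate edge'' observation, and the only tedious part to be the bookkeeping of degenerate configurations: several patrols in a single interval (so $y_j=y_{j+1}$, making a middle edge's two endpoints coincide and its cover trivially $0$), $\targetpos{a}{t}$ strictly inside a snapshot interval, and $a$ lying below the lowest or above the highest snapshot interval --- plus verifying that the ``strictly between / non-inclusive'' conventions in the definition of $\intervalscover{t}{\cdot}{\cdot}{a}$ line up with the $\varnothing$ endpoints, and the $\epsilon$-versus-integrality step underlying (P2).
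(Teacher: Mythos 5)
Your proposal is correct and takes essentially the same route as the paper's own proof: both reduce the claim to showing that along the equivalent canonical path the sum of the covers $\edgecover{e}{a}$ is $1$ when $a$ is unprotected and $0$ when it is protected, via the trichotomy of the target's position relative to the sorted snapshot intervals (strictly between two consecutive ones, below the lowest, or above the highest), which singles out exactly one candidate edge. The paper's version simply states this case analysis and reads off the cost, whereas you additionally spell out the verification of the two conditions in the definition of $\intervalscover{t}{i}{i'}{a}$ using Lemma~\ref{lem:sameintervalsameprotection} and the interval points inserted by Algorithm~\ref{alg:partition}; this is bookkeeping the paper leaves implicit, not a different argument.
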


\begin{lemma}
\label{lem:mixedcost}
	Let $r$ be a mixed snapshot at time $t$ and let $f$ denote the canonical flow that $r$ is mapped to. The maximum expected payoff of the attacker at time $t$ with respect to $r$, equals the cost of $f$.
\end{lemma}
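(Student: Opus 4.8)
The plan is to derive this directly from Lemma~\ref{lem:purelemma} by linearity of expectation followed by an interchange in the order of summation; no new combinatorial idea is needed beyond unwinding the definitions. Write the mixed snapshot as $r=\{(d_i,p_i)\}_n$ and fix a target $a$. If the attacker commits to the pure strategy $(a,t)$, then with probability $p_i$ the defender's realized patrol placement is the pure snapshot $d_i$, and by Lemma~\ref{lem:purelemma} the attacker's payoff in that event equals the cost of target $a$ in the canonical path equivalent to $d_i$; writing $E_i$ for the edge set of that path, this cost is $\sum_{e\in E_i}\edgecover{e}{a}\cdot\targetweight{t}{a}$. Hence the attacker's expected payoff from attacking $(a,t)$ against $r$ is $\sum_{i=1}^{n} p_i\sum_{e\in E_i}\edgecover{e}{a}\cdot\targetweight{t}{a}$.

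The next step is to rewrite this in terms of the flow $f$. By Definition~\ref{def:mixedmapping}, $f_i$ routes $p_i$ units of flow along the canonical path of $d_i$; since that path is a simple path from $\source{t}$ to $\sink{t}$ (a canonical path picks one grid vertex per column and so uses exactly $\K+1$ edges), we get $\edgeflow{f_i}{e}=p_i$ for every $e\in E_i$ and $\edgeflow{f_i}{e}=0$ otherwise. Therefore $p_i\sum_{e\in E_i}\edgecover{e}{a}\cdot\targetweight{t}{a}=\sum_{e\in\edgeset{\graph{t}}}\edgeflow{f_i}{e}\cdot\edgecover{e}{a}\cdot\targetweight{t}{a}$, and summing over $i$ and swapping the two sums gives $\sum_{e\in\edgeset{\graph{t}}}\big(\sum_{i=1}^n\edgeflow{f_i}{e}\big)\edgecover{e}{a}\cdot\targetweight{t}{a}=\sum_{e\in\edgeset{\graph{t}}}\edgeflow{f}{e}\cdot\edgecover{e}{a}\cdot\targetweight{t}{a}$, using $\edgeflow{f}{e}=\sum_i\edgeflow{f_i}{e}$ from Definition~\ref{def:mixedmapping}. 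So the expected payoff of attacking $(a,t)$ against $r$ equals $\sum_{e\in\edgeset{\graph{t}}}\edgeflow{f}{e}\cdot\edgecover{e}{a}\cdot\targetweight{t}{a}$.

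Finally, I would observe that an attack at time $t$ is pinned down entirely by the choice of target: the protection status of a target at time $t$ depends only on the patrol positions at time $t$, which is exactly the information a snapshot records, and the attacker's best response is pure. Consequently the maximum expected payoff of the attacker at time $t$ against $r$ equals $\max_{a\in\targetset{}}\sum_{e\in\edgeset{\graph{t}}}\edgeflow{f}{e}\cdot\edgecover{e}{a}\cdot\targetweight{t}{a}$, which is precisely the cost of the canonical flow $f$ in Definition~\ref{def:canonical}, completing the argument.

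I do not expect a serious obstacle; the statement is essentially bookkeeping layered on Lemma~\ref{lem:purelemma}. The two points that warrant care are (i) verifying that $\edgeflow{f_i}{e}$ takes only the values $p_i$ and $0$, which relies on a canonical path being a simple $\source{t}$-to-$\sink{t}$ path, and (ii) making explicit that ``maximum expected payoff'' is a maximum over targets (a pure attacker best response) and not over distributions on targets, so that it coincides with the $\max$ appearing in Definition~\ref{def:canonical}. Both are immediate once stated.
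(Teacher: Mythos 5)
Your argument is correct and matches the paper's own proof: both apply Lemma~\ref{lem:purelemma} to each pure snapshot $d_i$, use linearity of expectation and a swap of summations together with $\edgeflow{f}{e}=\sum_i \edgeflow{f_i}{e}$ to identify the expected payoff for target $a$ with $\sum_{e}\edgeflow{f}{e}\cdot\edgecover{e}{a}\cdot\targetweight{t}{a}$, and then take the maximum over targets to obtain the cost of $f$ from Definition~\ref{def:canonical}. Your extra remarks (that $\edgeflow{f_i}{e}\in\{0,p_i\}$ and that the max is over pure attacks) are just explicit versions of steps the paper leaves implicit.
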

\subsection{Best Strategy For All Time Points}
\label{sub:all-times}

Lemma~\ref{lem:mixedcost} implies if our goal is to minimize the  maximum payoff of the attacker at a single time point $t$, it suffices to find a canonical flow in \graph{t} with minimum cost. Although this works for the special case when $\time=1$, but it does not consider the movement of patrols and their speed limits. More precisely, a pure strategy for the defender could be shown as a sequence of pure snapshots $\langle s_1, s_2, \ldots, s_\time \rangle$. However there is one important condition: for any $i \in \timeset$, there must be a feasible transition from $s_i$ to $s_{i+1}$. This is also the case for mixed snapshots and two consecutive ones may not be necessarily compatible. In this section we resolve this issue and prove Theorem~\ref{thm:main}.

Our algorithm to find the optimal strategy of the defender consists of three main steps. In the first step, which is explained in more details in Section~\ref{sec:LP}, we run an LP that returns a canonical flow for each day graph \graph{1}, \ldots, \graph{\time}. Apart from the constraints to ensure we get valid canonical flows with minimum overall cost, our LP contains an extra constraint for compatibility of these canonical flows. In the second step (Section~\ref{sec:adjusting}), while keeping the overall characteristics of these canonical flows unchanged, we adjust them  in a way to make sure no two crossing edges in any of the day graphs have a positive flow. Finally, in the third step (Section~\ref{sec:construction}), we construct a mixed strategy for the defender based on the adjusted canonical flows. 

Let $s_t$ and $s_{t+1}$ denote two pure snapshots representing the placement of patrols in a valid pure strategy $p$ at two consecutive times. In the following lemma we prove that there exists a feasible move from $i$-th interval of $s_t$ to the $i$-th interval of $s_{t+1}$ (recall that intervals in pure snapshots are sorted based on their position). We prove this lemma by induction on the number of patrols. At each step we prove that there exists a feasible move from the top most interval in $s_t$ to the top most interval in $s_{t+1}$ and we prove if we match these two together and remove them, we can construct another pure strategy that contains the remaining intervals.
	
\begin{lemma}\label{lemma: noncrossingintervalpath}
If $\langle \intseqitem{t}{y^{t}_i}\rangle_\K{}$ and $\langle \intseqitem{t+1}{y^t_i}\rangle_\K{}$ are two pure snapshots at time $t$ and $t+1$ in at least one valid pure strategy $p$, then for any $j\in \Kset$ we have $\intseqitem{t+1}{y^{t+1}_j} \in \outfeas{t}{\intseqitem{t}{y^{t}_j}}$.
\end{lemma}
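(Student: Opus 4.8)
The plan is to reduce the lemma to one ``uncrossing'' property of the feasibility relation $\outfeas{t}{\cdot}$ and then to sort a permutation by adjacent transpositions. First I would read off from the valid pure strategy $p$ a permutation $\pi$ of $\Kset$: order the $\K$ patrols by their position at time $t$ (with a fixed tie-breaking rule), so that the $i$-th patrol in this order lies in $\intseqitem{t}{y^t_i}$; the same patrol lies at time $t+1$ in $\intseqitem{t+1}{y^{t+1}_{\pi(i)}}$, where $\pi(i)$ is its rank in the time-$(t+1)$ ordering. Since that patrol's move is legal, Definition~\ref{def:feas} gives $\intseqitem{t+1}{y^{t+1}_{\pi(i)}}\in\outfeas{t}{\intseqitem{t}{y^t_i}}$ for every $i\in\Kset$, and the goal is exactly to replace $\pi$ by the identity.

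The heart of the argument is the following claim: if $I,I'$ are intervals at time $t$ with $I$ entirely below $I'$, and $J,J'$ are intervals at time $t+1$ with $J$ entirely below $J'$, and $J'\in\outfeas{t}{I}$ while $J\in\outfeas{t}{I'}$, then also $J\in\outfeas{t}{I}$ and $J'\in\outfeas{t}{I'}$. The case $I=I'$ or $J=J'$ is immediate from the hypotheses, so only the strict case matters. By Definition~\ref{def:feas} together with Lemma~\ref{lem:movinginintervals}, $J'\in\outfeas{t}{I}$ means precisely that every $u\in I$ admits $v'\in J'$ with $|v'-u|\le\D$, and likewise for $J\in\outfeas{t}{I'}$. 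Given $u\in I$, I would pick any $u'\in I'$ (so $u<u'$), take $v'\in J'$ from $u$ and $v\in J$ from $u'$ (so $v<v'$), and verify $|v-u|\le\D$ and $|v'-u'|\le\D$ by a short case split on the signs of $v-u$ and $v'-u'$, using only $u<u'$ and $v<v'$; the first gives $J\in\outfeas{t}{I}$ and the symmetric computation gives $J'\in\outfeas{t}{I'}$.

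Finally I would turn $\pi$ into the identity. If $\pi\neq\mathrm{id}$ it has an adjacent descent $\pi(i)>\pi(i+1)$; applying the uncrossing claim with $\intseqitem{t}{y^t_i}$ below $\intseqitem{t}{y^t_{i+1}}$ and $\intseqitem{t+1}{y^{t+1}_{\pi(i+1)}}$ below $\intseqitem{t+1}{y^{t+1}_{\pi(i)}}$ shows that the matching obtained by swapping the images of $i$ and $i+1$ is still interval-feasible and has one fewer inversion; iterating reaches the identity matching, which is exactly the statement $\intseqitem{t+1}{y^{t+1}_j}\in\outfeas{t}{\intseqitem{t}{y^t_j}}$ for all $j$. (Equivalently one can run the ``peel off the topmost pair and recurse on $\K$'' induction outlined before the statement: after deleting the patrol occupying the top interval at time $t$, one may have to reroute the patrol that currently occupies the top interval at time $t+1$ so that the remaining snapshots are the truncated sorted sequences, and feasibility of that reroute is again precisely the uncrossing claim.)

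I expect the delicate step to be the uncrossing claim, specifically the bookkeeping that interfaces the position-level speed bound $|v-u|\le\D$ with the interval-level meaning of $\outfeas{t}{\cdot}$ via Lemma~\ref{lem:movinginintervals} and the handling of the degenerate ($I=I'$ or $J=J'$) cases; the elementary sign analysis and the permutation-sorting argument are routine.
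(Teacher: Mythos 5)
Your proof is correct, but it takes a genuinely different route from the paper's. The paper argues by induction on the number of patrols: it isolates the topmost interval at time $t$ and the topmost interval at time $t+1$, shows by contradiction (using the two interval paths of $p$ passing through these intervals) that both the top-to-top move and the swapped move are feasible, and then performs surgery on the pure strategy itself --- deleting two interval paths of $p$ and splicing in a rerouted one --- to get a strategy on $\K-1$ patrols to which the induction hypothesis applies. You instead distill the exchange step into a standalone two-pair uncrossing claim stated purely at the level of the relation $\outfeas{t}{\cdot}$ (proved from the for-all-positions characterization supplied by Lemma~\ref{lem:movinginintervals} together with the elementary sign analysis using only $u<u'$ and $v<v'$, which checks out, as do the degenerate cases $I=I'$ and $J=J'$), and you then reduce the lemma to sorting the rank permutation $\pi$ by adjacent transpositions, with the inversion count as the termination measure. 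The underlying exchange idea is the same, but your decomposition buys locality: it never touches time points other than $t$ and $t+1$, never constructs modified pure strategies or interval paths, and in fact proves the slightly more general statement that any pairwise-feasible matching between the two sorted snapshots can be uncrossed to the identity matching. The paper's version, by contrast, stays closer to the object it needs next --- whole non-crossing interval paths reassembled across all of $\timeset$ --- which is what Lemma~\ref{lem: sortingintervalpath} then exploits; with your lemma that later step would still go through, but by applying the swap argument once per time step rather than getting the rerouted paths directly from the induction.
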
 

In the following definition, we define what it means for a patrol path to be \textit{intervally above}, \textit{below} or \textit{equal} to another patrol path:
\begin{definition}
Let $v = \langle m_0, m_1, \ldots, m_r\rangle$ and $v' = \langle m'_0, m'_1, ..., m'_\time \rangle$ be two patrol paths. We say $v$ and $v'$ are intervally equal if for any $t \in \timeset$, $m_t$ and $m'_t$ are in the same interval. We also define $v$ to be intervally under $v'$, if for any $t \in \timeset$,  either $m_t < m'_t$ or $m_t$ and $m'_t$ are in the same interval. Similarly, we define $v$ to be intervally above $v'$, if $v'$ is intervally under $v$.
\end{definition}

Next, in Lemma~\ref{lem: sortingintervalpath} we prove that there exists an optimal strategy of the defender that for any pure strategy $p$ in its support, there is an ordering of interval paths in $p$ such that, $i$-th interval path is always intervally under $j$-th interval path if $1\leq i<j\leq \K$. To prove this we use Lemma~\ref{lemma: noncrossingintervalpath} that indicates there exists a possible move from the $k$-th interval in  pure snapshot $s_1$ to  the $k$-th interval in pure snapshot $s_2$ if $s_1$ and $s_2$ represent the patrols' placement of a pure strategy in two consecutive times and  $1 \leq k\leq \K$. For any patrol $k\in \Kset$, we construct an interval path that contains the $k$-th interval of all the pure snapshots in p, and we assign patrol $k$ to this interval path. It is easy to see that if we order the patrols from 1 to $\K$ the interval path assigned to patrol $k_1$ is under the interval path assigned to patrol $k_2$ if $1\leq k_1\leq k_2 \leq \K$. This lemma is very similar to Lemma 3 of \cite{xu2014solving} but adopted to intervals paths.

\begin{lemma} \label{lem: sortingintervalpath}
	There exists an optimal mixed strategy of the defender, such that  for every pure strategy $p$ in its support there is an ordering of interval paths $\langle\xi_1, \dots, \xi_\K  \rangle$ such that the following condition holds for this ordering:
	for any two interval paths $\xi_i$ and $\xi_j$, in the pure strategy p,  $\xi_i$ is intervally under $\xi_j$ if $i\leq j$.
\end{lemma}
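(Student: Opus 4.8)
The plan is to start from an arbitrary optimal mixed strategy of the defender and show that every pure strategy in its support can be replaced by an ``equivalent'' pure strategy whose interval paths are already sorted, without changing the attacker's utility against any attacker action $(a,t)$. Since a mixed strategy influences the attacker's best response (hence the value of the game) only through these per-pure-strategy utilities, the modified mixed strategy will still be optimal and will satisfy the statement. So the whole argument is really an uncrossing/rematching of the patrol paths inside each pure strategy.

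Fix a pure strategy $p=\{v_1,\dots,v_\K\}$ in the support. For each $t\in\timeset$ let $s_t=\langle \intseqitem{t}{y^t_1},\dots,\intseqitem{t}{y^t_\K}\rangle$ be the pure snapshot induced by $p$ at time $t$, i.e.\ the sorted (with multiplicity) sequence of intervals occupied by $v_1,\dots,v_\K$ at time $t$, so $y^t_1\le\cdots\le y^t_\K$. For each rank $j\in\Kset$ define the sequence $\xi_j=\langle\intseqitem{t}{y^t_j}\rangle_{\time}$ of $j$-th intervals across the snapshots. Since $s_t$ and $s_{t+1}$ are consecutive snapshots of the valid pure strategy $p$, Lemma~\ref{lemma: noncrossingintervalpath} gives $\intseqitem{t+1}{y^{t+1}_j}\in\outfeas{t}{\intseqitem{t}{y^t_j}}$ for every $t$, so $\xi_j$ is a legitimate interval path in the sense of Definition~\ref{def:intervalpath}. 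By Lemma~\ref{lem:surjective} we may pick a patrol path $v'_j\in S(\xi_j)$ for each $j$; because a pure strategy imposes no joint constraint on the patrols beyond each path individually respecting the speed limit, $p'=\{v'_1,\dots,v'_\K\}$ is again a valid pure strategy.

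It remains to check the two required properties of $p'$. First, the unique interval path containing $v'_j$ is exactly $\xi_j$, and the ordering $\langle\xi_1,\dots,\xi_\K\rangle$ satisfies ``$\xi_i$ intervally under $\xi_j$'' whenever $i\le j$: at every time $t$ we have $y^t_i\le y^t_j$, so at each $t$ either $v'_i$ lies strictly below $v'_j$ or they lie in the same interval, which is precisely the definition of intervally under. Second, at every time $t$ the multiset of intervals occupied by $\{v'_1,\dots,v'_\K\}$ equals $\{\intseqitem{t}{y^t_1},\dots,\intseqitem{t}{y^t_\K}\}=s_t$, the same multiset occupied by $\{v_1,\dots,v_\K\}$. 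By Lemma~\ref{lem:sameintervalsameprotection} whether a patrol protects a target $a$ at time $t$ depends only on the interval it occupies, so $a$ is protected at time $t$ under $p$ iff it is protected at time $t$ under $p'$; hence the attacker's utility for attacking $(a,t)$ is identical against $p$ and against $p'$. Replacing every $p$ in the support by the corresponding $p'$ (and merging the probabilities of pure strategies that collide) therefore produces a mixed strategy with the same attacker best-response value, which is thus still optimal and meets the claimed ordering condition.

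There is no real obstacle beyond bookkeeping, since the two facts that do the work are already available: Lemma~\ref{lemma: noncrossingintervalpath} (the rank-$j$ intervals of consecutive snapshots admit a feasible move, so the ``diagonal'' interval paths $\xi_j$ are valid) and Lemma~\ref{lem:sameintervalsameprotection} (protection is interval-determined, so the rematched strategy $p'$ induces exactly the same payoffs). The only points that require care are that pure snapshots are multisets — several patrols may share an interval, so the order among the $\xi_j$ must use the non-strict ``intervally under'' relation, exactly as it is defined — and that choosing one representative patrol path per $\xi_j$ independently is legitimate precisely because pure strategies carry no cross-patrol feasibility constraint.
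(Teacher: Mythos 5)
Your proof is correct and follows essentially the same route as the paper's: take the sorted pure snapshots of each pure strategy in the support, use Lemma~\ref{lemma: noncrossingintervalpath} to assemble the rank-$k$ ``diagonal'' interval paths into a new pure strategy sharing the same snapshots (hence, by interval-determined protection, the same payoffs), and replace each support strategy by this uncrossed version. Your version is in fact slightly more careful than the paper's, since you explicitly invoke Lemma~\ref{lem:surjective} to realize each interval path by an actual patrol path and Lemma~\ref{lem:sameintervalsameprotection} to justify payoff preservation.
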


Again, for space limitations, the full proof is left to the appendix. However, intuitively, starting from any given optimal solution one can swap the remaining path of any two patrols that cross each other without losing anything. This eventually resolves all crosses and gives a desired optimal solution.

Let $s$ denote an optimal strategy of the defender that satisfies the condition mentioned in Lemma~\ref{lem: sortingintervalpath}, and let $\langle\xi_{p,1}, \xi_{p,2} \dots, \xi_{p,\K}  \rangle$ denote the ordering of interval paths in pure strategy $p$ in support of $s$ such that $\xi_{p,i}$ is intervally under $\xi_{p,j}$ if $1\leq i \leq j \leq \K$. Without loss of generality we assume for any $i\in \Kset$  the same patrol is assigned  to interval path $\xi_{p,i}$ for all $p$ in support of $s$, and it is denoted by $k_i$. Therefore, if $\langle \intseqitem{t}{y_1}, \intseqitem{t}{y_2}, \dots, \intseqitem{t}{y_\K} \rangle$ denotes a pure snapshot that represents the patrols' placement in an arbitrary time point $t\in\timeset{}$ in pure strategy $p$ in support of $s$, $\intseqitem{t}{y_i}$ is the position of patrol $k_i$ in pure strategy $p$ at this time. Moreover, let $m$ denote the mixed snapshot of strategy $s$ at time $t$. The flow passing through the vertex \vertex{t}{i}{j}, denotes the probability with which patrol $k_i$ is placed in the $j$-th interval at time $t$. So, all the data related to position of patrol $k_i$ at time $t$ is in the column $i$ of the day graph of time $t$. We use this later in the paper.

\subsubsection{Linear Programming} \label{sec:LP}

In this section we explain how the first step of our algorithm, the LP, works.

Note that a flow of 1 unit in \graph{t} with minimum weight, minimizes the attacker's payoff at time $t$. To minimize the attacker's payoff at all time points, we need to minimize the cost of the canonical flow with the maximum cost. To do this, for any edge $e$ in any day graph \graph{t} we define an LP variable \edgeflow{f_t}{e} which specifies the amount of flow passing through $e$. Moreover for any time point $t$, we include the following constraints in our LP:
\begin{enumerate}
	\item For each vertex $v$ of \graph{t} (except for the source vertex \source{t} and the sink vertex \sink{t}) the amount of ingoing flow to $v$ is equal to the amount of outgoing flow from $v$.
	\item The amount of outgoing flow from \source{t} is 1.
	\item The amount of ingoing flow to \sink{t} is 1.
	\item The amount of flow passing through any edge $e$ is not negative.
	\item The cost of flow through any edge $e$ and for any target $a$ in \graph{t}, specified by $\edgeweight{e}{a} \times \edgeflow{f_t}{e}$ is not more than $u$.
\end{enumerate}
And we set the objective function of our LP to minimize $u$, which is the overall cost of canonical flows. However, as we said earlier the canonical flows we find must be compatible; therefore apart from the aforementioned constraints, we define a compatibility constraint. Recall that \outfeas{t}{\intseqcons{t}{i}{j}} denotes a collection of intervals at time $t+1$ and contains an interval $i'$, iff there is a valid move from an interval in \intseqcons{t}{i}{j} to $i'$. We define a very similar concept for day graphs:
\begin{definition}
	Let $\vertexcons{t}{x}{i}{j}$ denote the set of consecutive grid vertices $\{\vertex{t}{x}{i}, \vertex{t}{x}{i+1}, \ldots, \vertex{t}{x}{j}\}$ in \graph{t}. Recall that by definition of \intseqcons{t}{i}{j}, any vertices in $\vertexcons{t}{x}{i}{j}$ is equivalent to an interval in \intseqcons{t}{i}{j}. We define \outfeas{t}{\vertexcons{t}{x}{i}{j}} as follows:
		\outfeas{t}{\vertexcons{t}{x}{i}{j}} is a subset of grid vertices of \graph{t+1} containing a vertex \vertex{t+1}{x}{i'} if and only if \vertex{t+1}{x}{i'} is equivalent to an interval in \outfeas{t}{\intseqcons{t}{i}{j}}. 

\end{definition} 

The compatibility constraint we use in our LP is as follows: for any set of consecutive vertices \vertexcons{t}{x}{i}{j}, the amount of flow passing through the vertices in \vertexcons{t}{x}{i}{j} is not more than the amount of flow passing through the vertices in $\outfeas{t}{\vertexcons{t}{x}{i}{j}}$. Intuitively, this constraint indicates that for any set of consecutive intervals \intseqcons{t}{i}{j}, the probability that there exists a patrol in it, should not be more than the probability of having a patrol in its feasible set (\outfeas{t}{\intseqcons{t}{i}{j}}) in the next time point. Note that by definition, \outfeas{t}{\intseqcons{t}{i}{j}} contains all of the valid intervals that a patrol in \intseqcons{t}{i}{j} can move to; therefore it is obvious why this constraint is necessary. The sufficiency of this constraint to prove compatibility of snapshots, however, comes later when we explain how we construct an optimal strategy based on the adjusted LP solution. The formal definition of the LP is given in Linear Program~\ref{lp:lp}.

\begin{lp}
\begin{align}
	\label{cons:start}&\min \hspace{0.4cm} u \\[0.2cm]
	&\inflow{f_t}{v}  = \outflow{f_t}{v} \hspace{1cm} & \forall t, v: t \in \timeset, v \in \vertexset{\graph{t}}-\{\source{t}, \sink{t}\}\\ 
	\label{cons:sflow}&\outflow{f_t}{\source{t}} = 1 & \forall t: t \in \timeset \\
	\label{cons:end}&\inflow{f_t}{\sink{t}} = 1 & \forall t: t \in \timeset \\
	\label{cons:possitive}&\edgeflow{f}{e} \geq 0 & \forall e, t: t \in \timeset, e \in \graph{t} \\
	\label{cons:consteq}&\sum_{e \in \edgeset{\graph{t}}} \edgeflow{f}{e}.\edgecover{e}{a}.\targetweight{t}{a} \leq u & \forall t, a: t \in \timeset, a \in \targetset{t} \\
	\label{cons:compatibility}&\sum_{v \in \vertexcons{t}{k}{i}{j}}\outflow{f_t}{v} \leq \sum_{v' \in \outfeas{t}{\vertexcons{t}{k}{i}{j}}} \outflow{f_{t+1}}{v'} & \forall t, k, i, j: t \in \timeset, k\in \Kset, 1 \leq i \leq j \leq \parsize{t} 
\end{align}
\caption{Variable \edgeflow{f_t}{e}, which is defined for any edge $e$  in the day graph \graph{t} where $t$ could be any time point in \timeset{}, denotes the amount of flow passing through $e$. By \inflow{f_t}{v} we mean the total flow coming into vertex $v$ (i.e., $\inflow{f_t}{v} = \Sigma \edgeflow{f_t}{e}$ where $e$ is any edge ending at $v$). Also \outflow{f_t}{v} denotes the total flow coming out of vertex $v$ (i.e., $\inflow{f_t}{v} = \Sigma \edgeflow{f_t}{e}$ where $e$ is any edge starting from $v$).}
\label{lp:lp}
\end{lp}

By the end of Section~\ref{sub:all-times}, we prove the solution of LP~\ref{lp:lp} is equal to the utility of the attacker if both players play their optimal strategies. Lemma~\ref{lem:optinlp} proves a weaker claim:

\begin{lemma}\label{lem:optinlp}
The solution of Linear Program~\ref{lp:lp} gives a lower bound for the utility of the attacker when both players play their optimal strategies.
\end{lemma}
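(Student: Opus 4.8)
The plan is to show that any mixed strategy of the defender — in particular an optimal one — can be "projected" onto a feasible solution of Linear Program~\ref{lp:lp} whose objective value $u$ is at most the attacker's best-response payoff against that mixed strategy. Since the LP minimizes $u$, its optimum is then a lower bound on the value of the game (from the defender's perspective, the minimax payoff), which is exactly the attacker's utility at equilibrium.

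First I would fix an optimal mixed strategy $\sigma$ of the defender and, invoking Lemma~\ref{lem: sortingintervalpath}, assume without loss of generality that in every pure strategy $p$ in the support of $\sigma$ the interval paths are ordered $\langle \xi_{p,1}, \dots, \xi_{p,\K}\rangle$ with $\xi_{p,i}$ intervally under $\xi_{p,j}$ for $i \le j$, and that a fixed patrol $k_i$ is assigned to the $i$-th interval path across all $p$. For each time point $t$, this induces a mixed snapshot $m_t$: the probability (under $\sigma$) that patrol $k_i$ sits in the $j$-th interval of $\intseq{t}$. Define $\edgeflow{f_t}{e}$ as in Definition~\ref{def:mixedmapping} — i.e., $f_t$ is the canonical flow in $\graph{t}$ that $m_t$ is mapped to. I then need to check that the tuple $(\{f_t\}_t, u)$ with $u$ set to the attacker's best-response payoff against $\sigma$ satisfies every constraint of LP~\ref{lp:lp}.

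The flow-conservation constraints, the unit in/out-flow at $\source{t}$ and $\sink{t}$, and nonnegativity follow immediately because $f_t$ is a convex combination of $\source{t}$-$\sink{t}$ paths (Definition~\ref{def:mixedmapping}). For constraint~\eqref{cons:consteq}, Lemma~\ref{lem:mixedcost} says the cost of the canonical flow $f_t$ equals the maximum expected payoff of the attacker at time $t$ against $m_t$; since the attacker's overall best response attacks at whatever time is most profitable, this per-time cost is at most the global best-response value $u$ — and the edgewise bound in~\eqref{cons:consteq} is implied because each summand on the left is nonnegative and their sum is the flow cost for target $a$ at time $t$, which is at most the max over $a$, which is at most $u$. (I should be slightly careful: constraint~\eqref{cons:consteq} bounds the full sum over edges for a fixed $(t,a)$, which is exactly the term maximized in Definition~\ref{def:canonical}, so this is direct.)

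The main obstacle is verifying the compatibility constraint~\eqref{cons:compatibility}. Here I would argue as follows: fix $t$, $k=k_i$, and an interval window $[i,j]$. The left-hand side $\sum_{v \in \vertexcons{t}{i}{i}{j}} \outflow{f_t}{v}$ is, by construction of $f_t$ from $m_t$, precisely the probability under $\sigma$ that patrol $k_i$ lies in some interval of $\intseqcons{t}{i}{j}$ at time $t$. For every pure strategy $p$ in the support with patrol $k_i$ in that window at time $t$, Lemma~\ref{lemma: noncrossingintervalpath} guarantees that the $i$-th interval of the snapshot at time $t+1$ lies in $\outfeas{t}{\intseqitem{t}{y^t_i}} \subseteq \outfeas{t}{\intseqcons{t}{i}{j}}$; hence in $p$ patrol $k_i$ is in $\outfeas{t}{\intseqcons{t}{i}{j}}$ at time $t+1$. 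Summing these implications over $p$ weighted by $\sigma(p)$ shows the probability that $k_i$ is in $\intseqcons{t}{i}{j}$ at time $t$ is at most the probability that $k_i$ is in $\outfeas{t}{\intseqcons{t}{i}{j}}$ at time $t+1$, which is exactly the right-hand side $\sum_{v' \in \outfeas{t}{\vertexcons{t}{i}{i}{j}}} \outflow{f_{t+1}}{v'}$. The delicate points to get right are that the "projection" of $\sigma$ onto the marginal of patrol $k_i$ at time $t$ is consistent with the flow $f_t$ (which needs the fixed ordering and assignment from Lemma~\ref{lem: sortingintervalpath}), and that the event "$k_i \in \outfeas{t}{\intseqcons{t}{i}{j}}$ at $t+1$" is an event whose probability the right-hand side correctly computes via column $i$ of $\graph{t+1}$. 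Once all constraints are checked, $(\{f_t\}_t, u)$ is LP-feasible with objective equal to the equilibrium attacker payoff, so the LP optimum is $\le$ that payoff, proving the lemma.
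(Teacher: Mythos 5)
Your proposal is correct and follows essentially the same route as the paper's proof: starting from an optimal strategy in the non-crossing form guaranteed by Lemma~\ref{lem: sortingintervalpath}, mapping its per-time mixed snapshots to canonical flows, setting $u$ to the attacker's best-response payoff, and verifying all constraints, with the compatibility constraint~\eqref{cons:compatibility} handled via Lemma~\ref{lemma: noncrossingintervalpath} exactly as the paper does. Your write-up is, if anything, slightly more explicit than the paper's about summing the per-pure-strategy implication over the support to get the marginal-probability inequality.
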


To prove Lemma~\ref{lem:optinlp}, we start from an optimal strategy of the defender satisfying the condition of Lemma~\ref{lem: sortingintervalpath} that the interval paths do not cross each other. Based on this strategy, we construct a feasible solution for LP~\ref{lp:lp} in which the value of $u$ is equal to the maximum possible utility of the attacker. Note that this only proves LP~\ref{lp:lp} gives a lower bound for the utility of the attacker when both players play their minimax strategies since the feasible solution we considered is not necessarily the optimum solution of the LP (although as we said before, we will later prove that they are exactly the same).

\newcommand{\f}[0]{\ensuremath{\mathcal{F}}}

\subsubsection{Adjusting The LP Solution} \label{sec:adjusting} In this section we give an algorithm that adjusts any optimal solution of LP~\ref{lp:lp} to resolve their ``crossing flows". We later use the adjusted solution to construct the defender's optimal strategy.

We start by defining what we mean by \textit{crossing edges} and \textit{crossing flows}:
\begin{definition}[crossing edges and crossing flow]\label{def:crossingedges}
	Let $e=(\vertex{t}{x}{y_1}, \vertex{t}{x+1}{y_2})$ and $e'=(\vertex{t}{x}{y'_1}, \vertex{t}{x+1}{y'_2})$ be two arbitrary edges of day graph \graph{t} where $y_1 < y'_1$. We say $e$ and $e'$ cross, if and only if $y_2 > y'_2$. Moreover, if $f$ is a canonical flow of $\graph{t}$, the crossing edge pair $(e, e')$ is a crossing flow in $f$, if $f(e) > 0$ and $f(e') > 0$.
\end{definition}

In the following definition, we give a total ordering on the crossing flows in a canonical flow, which we later use in the algorithm we provide to resolve them.
\begin{definition}[crossing flows' ordering] \label{def:comparison}
	Let $(e_1, e_2)$ and $(e_3, e_4)$ be two crossing flows of a canonical flow. Also let $e_1=(\vertex{t}{x}{y_1}, \vertex{t}{x+1}{y'_1})$, $e_2=(\vertex{t}{x}{y_2},\vertex{t}{x+1}{y'_2})$, $e_3=(\vertex{t}{x'}{y_3}, \vertex{t}{x'+1}{y'_3})$, and $e_4=(\vertex{t}{x'}{y_4}, \vertex{t}{x'+1}{y'_4})$ be the vertices of these edges. We say $(e_1, e_2) < (e_3, e_4)$ if and only if one of the following conditions hold:
	\begin{enumerate}
		\item $x < x'$
		\item $x = x'$ and $y_1 < y_3$.
		\item $x = x'$ and $y_1 = y_3$ and $y'_1 < y'_3$.
		\item $x = x'$ and $y_1 = y_3$ and $y'_1 = y'_3$ and $y'_2 < y'_4$.
		\item $x = x'$ and $y_1 = y_3$ and $y'_1 = y'_3$ and $y'_2 = y'_4$ and $y_2 < y_4$.
	\end{enumerate}
\end{definition}

%

Algorithm~\ref{alg:cross} is the formal pseudo-code of how we resolve all crossing flows. At each step, the algorithm resolves the minimum crossing flow (minimum based on the total ordering defined in Definition~\ref{def:comparison}) and continues this process until there is no other one. Figure~\ref{fig:non-crossing} illustrates how a single crossing flow is resolved.

\begin{algorithm}
\caption{Resolves crossing flows}
\label{alg:cross}
\begin{algorithmic}[1]
\Function{ResolveCrosses}{$f_1$, $\graph{1}$, $f_2$, $\graph{2}$, \ldots, $f_{\time}$, $\graph{t}$} \label{func: crosses}
	\For {each timepoint $t \in \timeset$}
		\While{$f_t$ contains any crossing flow}
			\State \{(\vertex{t}{x}{y_1}, \vertex{t}{x+1}{y_2}), (\vertex{t}{x}{y'_1}, \vertex{t}{x+1}{y'_2})\} = \Call{FindNextCross}{\graph{t}, $f_t$}
			\State $e \leftarrow (\vertex{t}{x}{y_1}, \vertex{t}{x+1}{y_2})$
			\State $e' \leftarrow (\vertex{t}{x}{y'_1}, \vertex{t}{x+1}{y'_2})$
			\State \Call{ResolveCross}{$e$, $e'$}
		\EndWhile
	\EndFor
\EndFunction

\Function{FindNextCross}{\graph{t}, $f_t$}
	\State $(e, e') \leftarrow$ find the minimum crossing flow in $f_t$
	\State \Return $(e, e')$
\EndFunction

\Function{ResolveCross}{$e$, $e'$} \label{func:cross}
		\State $f_m \leftarrow min(f_t(e), f_t(e'))$
		
		\State $f_t(e') \leftarrow f_t(e') - f_m$
			
		\State $f_t(e) \leftarrow f_t(e) - f_m$
		
		\State $(\vertex{t}{x}{y_1}, \vertex{t}{x+1}{y_2})\leftarrow e $
		\State $(\vertex{t}{x}{y'_1}, \vertex{t}{x+1}{y'_2})\leftarrow e' $
		\State $f_t((\vertex{t}{x}{y_1}, \vertex{t}{x+1}{y'_2})) \leftarrow f_m$
		\State $f_t((\vertex{t}{x}{y'_1}, \vertex{t}{x+1}{y_2})) \leftarrow f_m$
\EndFunction
\end{algorithmic}
\end{algorithm}

\begin{figure}
  \centering
  \includegraphics[scale=0.50]{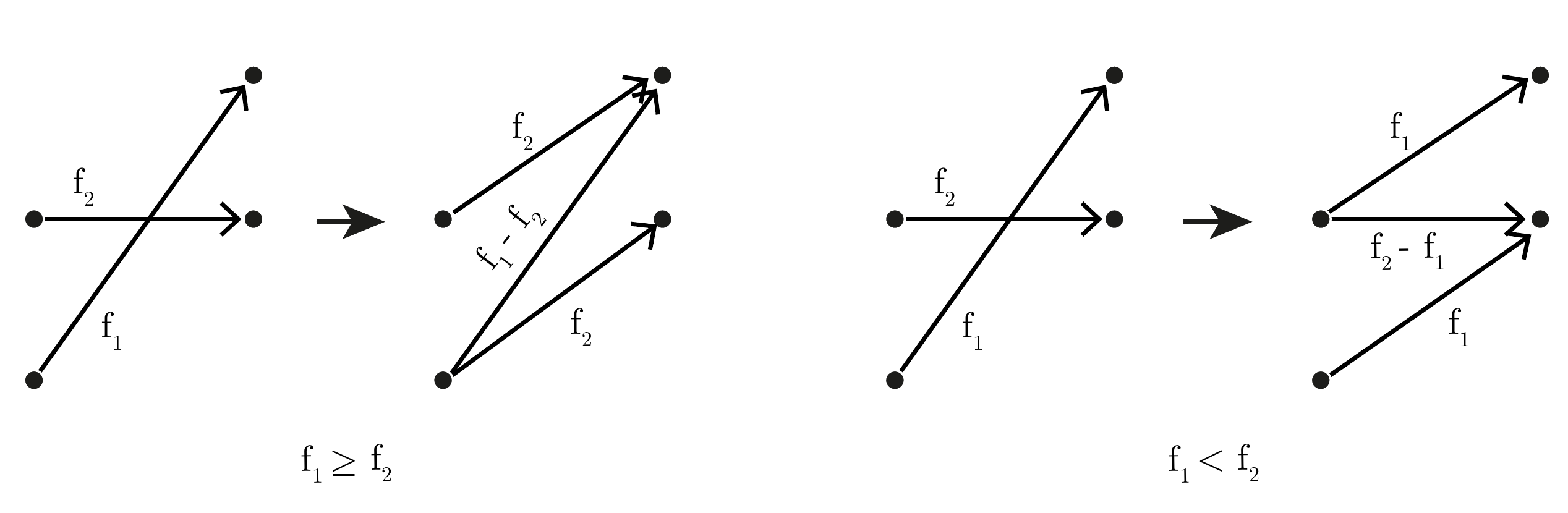}
  \caption{The whole idea of how to locally resolve the issue of having a crossing flow. The details are formally explained in the text.}
  \label{fig:non-crossing}
\end{figure}

\begin{lemma} \label{lem:crosscost}
	Function \textproc{ResolveCrosses} of Algorithm \ref{alg:cross} does not increase the total cost of the input canonical flow.
\end{lemma}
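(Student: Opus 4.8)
It suffices to show that a single invocation of \textproc{ResolveCross} never increases the cost of the flow it modifies; the lemma then follows by induction on the sequence of invocations made by \textproc{ResolveCrosses} (which operates on one day graph at a time and touches no other flow). Fix the current day graph $\graph{t}$ and flow $f_t$. By Definition~\ref{def:canonical} the cost of $f_t$ is $\max_{a\in\targetset{t}}\ \sum_{e\in\edgeset{\graph{t}}}\edgeflow{f_t}{e}\,\edgecover{e}{a}\,\targetweight{t}{a}$, a maximum over targets of linear functionals of $f_t$ with nonnegative coefficients. An invocation on a crossing pair $e=(\vertex{t}{x}{y_1},\vertex{t}{x+1}{y_2})$, $e'=(\vertex{t}{x}{y'_1},\vertex{t}{x+1}{y'_2})$ --- both genuine grid edges, so their costs use only the ``interval--interval'' form of $\intervalscover{t}{\cdot}{\cdot}{a}$; and by Definition~\ref{def:crossingedges} together with the fact that a grid edge joins a level $y$ to a level $\ge y$ we have $y_1<y'_1\le y'_2<y_2$ --- subtracts $f_m\ge 0$ from $\edgeflow{f_t}{e}$ and $\edgeflow{f_t}{e'}$ and adds $f_m$ to $e''=(\vertex{t}{x}{y_1},\vertex{t}{x+1}{y'_2})$ and $e'''=(\vertex{t}{x}{y'_1},\vertex{t}{x+1}{y_2})$. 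Thus each functional changes by $\targetweight{t}{a}\,f_m\bigl(\edgecover{e''}{a}+\edgecover{e'''}{a}-\edgecover{e}{a}-\edgecover{e'}{a}\bigr)$, and since $\targetweight{t}{a},f_m\ge 0$ the whole claim reduces to the inequality
\[
  \intervalscover{t}{\intseqitem{t}{y_1}}{\intseqitem{t}{y'_2}}{a}+\intervalscover{t}{\intseqitem{t}{y'_1}}{\intseqitem{t}{y_2}}{a}\ \le\ \intervalscover{t}{\intseqitem{t}{y_1}}{\intseqitem{t}{y_2}}{a}+\intervalscover{t}{\intseqitem{t}{y'_1}}{\intseqitem{t}{y'_2}}{a},\qquad y_1<y'_1\le y'_2<y_2 ,
\]
which is exactly the Monge (crossing-submodularity) inequality for the $0/1$ matrix $C^{(a)}_{u,v}:=\intervalscover{t}{\intseqitem{t}{u}}{\intseqitem{t}{v}}{a}$ indexed by intervals of $\intseq{t}$.

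I would establish this Monge inequality via a rank-one factorization of $C^{(a)}$. Let $c$ be the index of the interval of $\intseq{t}$ containing $\targetpos{a}{t}$, and for an interval index $u$ let $P_u\in\{0,1\}$ indicate whether a patrol placed anywhere in $\intseqitem{t}{u}$ protects $a$ at time $t$; this is well defined by Lemma~\ref{lem:sameintervalsameprotection}. Unwinding the definition of $\intervalscover{t}{\cdot}{\cdot}{a}$, one has $C^{(a)}_{u,v}=1$ iff $u<c$, $v>c$, $P_u=0$ and $P_v=0$; therefore $C^{(a)}_{u,v}=\ell_u\,r_v$, where $\ell_u:=1-P_u$ for $u<c$ and $\ell_u:=0$ otherwise, and $r_v:=1-P_v$ for $v>c$ and $r_v:=0$ otherwise. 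For any rank-one matrix $\ell_u r_v$ the ``box difference'' $C_{y_1,y'_2}+C_{y'_1,y_2}-C_{y_1,y_2}-C_{y'_1,y'_2}$ equals $(\ell_{y_1}-\ell_{y'_1})(r_{y'_2}-r_{y_2})$, so the inequality holds as soon as $\ell$ is nonincreasing and $r$ nondecreasing in the index --- and in that form it automatically covers all degenerate cases ($y'_1=y'_2$, or $c$ equal to $y'_1$ or $y'_2$).

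The only substantive point is therefore that $\ell$ is nonincreasing and $r$ nondecreasing, and this rests on the claim that \emph{protection propagates toward the target}: if $u_1<u_2<c$ and $P_{u_1}=1$ then $P_{u_2}=1$; symmetrically, if $c<v_2<v_1$ and $P_{v_1}=1$ then $P_{v_2}=1$. For the first statement, $u_1<c$ forces $\intseqitem{t}{u_1}$ to lie entirely below $\targetpos{a}{t}$, so $P_{u_1}=1$ gives a patrol position $m_1\in\intseqitem{t}{u_1}$ with $\targetpos{a}{t}-\R\le m_1<\targetpos{a}{t}$; because $\intseqitem{t}{u_1}$ precedes $\intseqitem{t}{u_2}$ which precedes $\intseqitem{t}{c}$ in the order of intervals, every position of $\intseqitem{t}{u_2}$ lies strictly between $m_1$ and $\targetpos{a}{t}$, hence within distance $\R$ of $\targetpos{a}{t}$; since Algorithm~\ref{alg:partition} discards any interval with no patrol position, $\intseqitem{t}{u_2}$ contains such a position, so $P_{u_2}=1$. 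The ``above'' case is identical with the inequalities reversed. Consequently $\ell_u=0$ for $u\ge c$, while for $u<c$ the sequence $1-P_u$ is nonincreasing by the claim and stays $\ge 0$, so $\ell$ is nonincreasing throughout; dually $r$ is nondecreasing. This yields the Monge inequality, hence each \textproc{ResolveCross} leaves every per-target functional --- and thus the cost of $f_t$ --- non-increased, proving the lemma. The crux is the propagation claim, i.e.\ the geometric observation that an interval nested between a protecting interval and the target consists of positions even closer to the target; the remaining steps are routine bookkeeping about the partition and the flow update.
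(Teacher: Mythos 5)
Your proof is correct, and its skeleton coincides with the paper's: both arguments localize to a single call of \textproc{ResolveCross}, note that only the four edges $e_1=(\vertex{t}{x}{y_1},\vertex{t}{x+1}{y_2})$, $e_2=(\vertex{t}{x}{y'_1},\vertex{t}{x+1}{y'_2})$, $e_3=(\vertex{t}{x}{y_1},\vertex{t}{x+1}{y'_2})$, $e_4=(\vertex{t}{x}{y'_1},\vertex{t}{x+1}{y_2})$ change, and use nonnegativity of $f_m$ and of the weights to reduce the lemma to the per-target inequality $\edgecover{e_3}{a}+\edgecover{e_4}{a}\le \edgecover{e_1}{a}+\edgecover{e_2}{a}$ under $y_1<y'_1\le y'_2<y_2$. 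Where you genuinely differ is in how this inequality is verified: the paper does a case analysis on the value of $\edgecover{e_3}{a}+\edgecover{e_4}{a}\in\{0,1,2\}$, unwinding the definition of $\intervalscover{t}{\cdot}{\cdot}{a}$ by hand, whereas you package the same geometry as a Monge (crossing-submodularity) property of the matrix $C^{(a)}$, established via the rank-one factorization $C^{(a)}_{u,v}=\ell_u r_v$ with $\ell$ nonincreasing and $r$ nondecreasing. Your monotonicity claim (``protection propagates toward the target'', justified by the fact that every surviving interval contains a patrol position) is precisely the fact the paper uses implicitly in its ``$=1$'' case, where a target out of reach of $\intseqitem{t}{y'_2}$ is a fortiori out of reach of the higher interval $\intseqitem{t}{y_2}$; so the underlying geometric content is the same, but your formulation buys uniformity (the degenerate configurations such as $y'_1=y'_2$ need no separate treatment and the exchange inequality is exhibited as a standard structural property), while the paper's case analysis is more elementary and self-contained. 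The only point you should add explicitly is the trivial boundary case in which $\targetpos{a}{t}$ lies in no interval of $\intseq{t}$, so that your index $c$ is undefined: there $\intervalscover{t}{\cdot}{\cdot}{a}$ vanishes on all grid edges and the inequality holds vacuously; with that half-sentence included, the argument is complete.
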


To prove Lemma~\ref{lem:crosscost}, we consider all different locations of targets for which changing the flows might affect the utility of players and prove in none of these cases the total cost is increased. The complete proof is left to the appendix.

\begin{lemma}  \label{lem:crosstime}
	The running time of the function \textproc{ResolveCrosses} in the Algorithm \ref{alg:cross} is polynomial.
\end{lemma}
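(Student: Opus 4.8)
The plan is to bound separately (a) the number of times the \textbf{while} loop inside \textproc{ResolveCrosses} invokes \textproc{ResolveCross}, and (b) the cost of one such invocation together with the preceding \textproc{FindNextCross} call. Part (b) is immediate: each $\graph{t}$ has $\order{\K\,\parsize{t}^2}$ edges, which is polynomial in the input size by the assumption $\K\le\time\target$ and by Lemma~\ref{lem:partitionpoly}; \textproc{FindNextCross} scans the $\order{|\edgeset{\graph{t}}|^2}$ pairs of positively–flowing edges and returns the minimum crossing flow in the order of Definition~\ref{def:comparison}, while \textproc{ResolveCross} touches only $\order{1}$ edges. So the whole statement reduces to bounding the number of \textproc{ResolveCross} calls.

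To do that, I would fix a time point $t$ and record two elementary facts about one call \textproc{ResolveCross}$(e,e')$, where $e=(\vertex{t}{x}{y_1},\vertex{t}{x+1}{y_2})$, $e'=(\vertex{t}{x}{y'_1},\vertex{t}{x+1}{y'_2})$ and, since $(e,e')$ is a crossing flow, $y_1<y'_1$ and $y_2>y'_2$. First, the call changes the flow only on $e$, $e'$ and on the two ``uncrossed'' edges $\hat e_1=(\vertex{t}{x}{y_1},\vertex{t}{x+1}{y'_2})$ and $\hat e_2=(\vertex{t}{x}{y'_1},\vertex{t}{x+1}{y_2})$ (and $\hat e_1,\hat e_2$ do not cross, since $y'_2<y_2$). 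Second, afterwards at least one of $e,e'$ carries zero flow, so $(e,e')$ is no longer a crossing flow. The crux of the proof is the following monotonicity claim: \emph{whenever \textproc{ResolveCross} is applied to the minimum crossing flow $M$, every crossing flow of the resulting $f_t$ is strictly larger than $M$ in the order of Definition~\ref{def:comparison}.} Granting it, the minimum crossing flow strictly increases at every iteration of the \textbf{while} loop; since the crossing flows of $\graph{t}$ are drawn from a fixed finite totally ordered set of size at most $\binom{|\edgeset{\graph{t}}|}{2}$, the loop runs at most that many times for each $t$. Summing over the $\time$ iterations of the outer \textbf{for} loop then gives a polynomial bound on the number of \textproc{ResolveCross} calls, and hence on the running time.

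To establish the claim I would take any crossing flow $C$ of the updated $f_t$. If $C$ uses neither $\hat e_1$ nor $\hat e_2$, then both edges of $C$ only lost flow, so $C$ was a crossing flow before the call; being different from $M$ (an edge of $M$ is now empty), it was already $>M$. Otherwise $C$ uses $\hat e_1$ or $\hat e_2$; note $C$ cannot pair $\hat e_1$ with $\hat e_2$ (they do not cross) nor pair either of them with $e$ or $e'$ (those pairs share an endpoint and cannot cross). Consider $C=\{\hat e_1,g\}$ with $g=(\vertex{t}{x}{c},\vertex{t}{x+1}{d})$ a third, positively–flowing edge; by Definition~\ref{def:crossingedges} either $c<y_1$ and $d>y'_2$, or $y_1<c\le d<y'_2$. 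In the first case $g$ already crossed $e'$ before the call and the pair $(g,e')$ precedes $M$ (strictly smaller leading left endpoint), contradicting minimality; in the second case $g$ already crossed $e$ and $(e,g)$ precedes $M$ (equal leading left and leading right endpoints, strictly smaller trailing right endpoint), again a contradiction. Hence $\hat e_1$ creates no new crossing flow. The argument for $C=\{\hat e_2,g\}$ is symmetric but does not exclude $C$; instead, using the tie–breaking levels of Definition~\ref{def:comparison}, one checks that any such $C$ has either a strictly larger leading left endpoint than $M$, or the same leading left endpoint and a strictly larger leading right endpoint, so $C>M$.

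The step I expect to be the main obstacle is exactly this final case analysis: one has to verify that the five–level tie–breaking order of Definition~\ref{def:comparison} is precisely strong enough that resolving the smallest crossing flow never creates a crossing flow $\le M$. This requires care in fixing which edge of a pair is the ``leading'' one and in enumerating all the ways in which $\hat e_1$ and $\hat e_2$ can interact with a third positively–flowing edge (and why they cannot interact with each other or with $e,e'$). Everything else — the edge count of $\graph{t}$, the per–iteration cost, and the summation over $\timeset$ — is routine bookkeeping.
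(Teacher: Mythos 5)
Your proposal is correct and takes essentially the same route as the paper's proof: it shows that the minimum crossing flow (in the total order of Definition~\ref{def:comparison}) strictly increases after each call of \textproc{ResolveCross}, because new crossings can only arise through the two uncrossed edges, the edge $(\vertex{t}{x}{y_1},\vertex{t}{x+1}{y'_2})$ creates none, and any crossing involving $(\vertex{t}{x}{y'_1},\vertex{t}{x+1}{y_2})$ is strictly larger than the resolved pair, so the number of iterations is bounded by the polynomially many pairs of edges of the polynomially sized day graphs. If anything, your treatment of the second uncrossed edge (admitting new crossings but showing each compares strictly above $M$, including the corner case of an equal leading left endpoint with larger leading right endpoint) is slightly more careful than the paper's corresponding case analysis, which reaches the same conclusion.
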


The proof scheme of Lemma~\ref{lem:crosstime} is to show the minimum crossing flow at step $i$ is strictly less than the minimum crossing flow at step $i+1$ (after the previously minimum crossing flow is resolved). Consequently, since the total number of possible crossing edges of a day graph is polynomial, the number of steps until the algorithm halts is polynomial. Again, we left the formal proof of this lemma to the appendix for space limitations.

Note that another property of Algorithm~\ref{alg:cross} is that the flow passing through a vertex will not change and it is only the amount of flow passing through the edges that changes (Figure~\ref{fig:non-crossing}). This proves most of the constraints of LP~\ref{lp:lp} will still hold. The only two constraints that consider the flow passing through the edges, and not the vertices, are number ~\ref{cons:possitive} and number ~\ref{cons:consteq}. The former will be true since the process does not produce any negative flow and the latter is true since by Lemma~\ref{lem:crosscost} the total cost does not change.

Consequently, the following statement is true since we can first solve LP~\ref{lp:lp} by any polynomial time LP-solver and the run Algorithm~\ref{alg:cross} on its solution.
\begin{corollary}
	\label{lem:poly-lpanswer}
	There exists a polynomial time algorithm that finds $f\langle f_1, \dots, f_{\time}\rangle$, a collection of canonical flows, that is a solution of LP~\ref{lp:lp}, and for any  $t\in\timeset$  , $f_t$ does not contain any crossing flow.
\end{corollary}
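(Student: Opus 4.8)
The plan is to compose the two polynomial-time stages already developed: solve LP~\ref{lp:lp} with an off-the-shelf polynomial-time LP solver, and then post-process its output with \textproc{ResolveCrosses} from Algorithm~\ref{alg:cross}. First I would check that LP~\ref{lp:lp} has size polynomial in the input. By Lemma~\ref{lem:partitionpoly} each day graph \graph{t} has $\order{\time^3\target{}}$ vertices, and since $\K{} \le \time\target{}$ it has only polynomially many edges; hence there are polynomially many variables $\edgeflow{f_t}{e}$. Each family of constraints in LP~\ref{lp:lp} is indexed over polynomially many tuples — in particular the compatibility constraint \eqref{cons:compatibility} ranges over $(t,k,i,j)$ with $1\le i\le j\le\parsize{t}$, which is polynomial — so a polynomial-time LP solver returns an optimal collection of canonical flows $\langle f_1,\dots,f_\time\rangle$.

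Next I would run \textproc{ResolveCrosses} on $\langle f_1,\dots,f_\time\rangle$. It terminates in polynomial time by Lemma~\ref{lem:crosstime}, and when it does the while-loop guard guarantees that no $f_t$ contains a crossing flow, which is exactly the desired conclusion. It then remains to argue the output is still a feasible — indeed optimal — solution of LP~\ref{lp:lp}. The crucial observation is that each call to \textproc{ResolveCross} only reroutes an amount $f_m=\min(f_t(e),f_t(e'))$ of flow locally among the crossing pair $e,e'$ and the two ``uncrossed'' edges joining their endpoints, as in Figure~\ref{fig:non-crossing}; consequently the total flow into and out of every grid vertex \vertex{t}{x}{y}, as well as of \source{t} and \sink{t}, is unchanged. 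Hence each $f_t$ stays a unit source–sink flow (the flow-conservation constraints together with \eqref{cons:sflow} and \eqref{cons:end}), and since the compatibility constraints \eqref{cons:compatibility} reference only the vertex throughputs $\outflow{f_t}{v}$, they are preserved as well. Nonnegativity \eqref{cons:possitive} holds because the two decremented edges become $f_t(e)-f_m\ge 0$ and $f_t(e')-f_m\ge 0$ by the choice of $f_m$, while the two rerouted edges receive the value $f_m\ge 0$. Finally, by Lemma~\ref{lem:crosscost} the cost does not increase, so the optimal value $u$ of the solver's output still witnesses \eqref{cons:consteq}; as the input was optimal, so is the output.

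I do not expect a genuine obstacle here: the corollary is a bookkeeping composition of earlier results. The only part requiring care is the per-constraint verification above that Algorithm~\ref{alg:cross} preserves feasibility of LP~\ref{lp:lp}, and that verification rests entirely on the vertex-throughput invariance of \textproc{ResolveCross} together with Lemma~\ref{lem:crosscost}. Chaining the two polynomial-time stages then yields the claimed algorithm.
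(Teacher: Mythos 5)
Your proposal is correct and follows essentially the same route as the paper: solve LP~\ref{lp:lp} with a polynomial-time solver, run \textproc{ResolveCrosses}, and argue feasibility is preserved because the rerouting leaves every vertex throughput unchanged (so conservation, unit-flow, and compatibility constraints survive), with nonnegativity by the choice of $f_m$ and constraint~\eqref{cons:consteq} by Lemma~\ref{lem:crosscost}, and polynomial running time by Lemma~\ref{lem:crosstime}. Your explicit check that the LP itself has polynomially many variables and constraints is a small addition the paper leaves implicit, but the argument is the same.
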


\subsubsection{Constructing A Strategy}\label{sec:construction}

Assuming $f$ is a non-crossing solution of LP~\ref{lp:lp}, this section gives an algorithm to find a mixed strategy of the defender that equivalent to the set of canonical flows in $f$ and finally proves Theorem \ref{thm:main}. We first define what we mean by the top most flow path of a non-crossing canonical flow:

\begin{definition}[Top-Most Flow Path]

Let $f_t$ be a canonical flow of \graph{t} without any crossing flows. We say a canonical path $p = \langle e_1, e_2, \ldots, e_{\K+1}\rangle$ of \graph{t} is a flow path of $f_t$ if for any $k$ ($1\leq k \leq \K+1$), $\edgeflow{f_t}{e_k} > 0$. The top-most flow path of $f_t$ is the flow path of $f_t$ that is above all other flow paths of $f_t$ (that is well-defined because $f_t$ does not have any crossing flow). The size of the flow path $p$ of $f_t$, denoted by $\sizeofflowpath{p}$, is $m$ if for any $k$, $\edgeflow{f_t}{e_k} \geq m$ and there exists an edge $e_i$ of $p$ such that $\edgeflow{f_t}{e_i} = m$.
\end{definition}

\begin{lemma}\label{lem:topmost-compatible}
	Let $f = \langle f_1, \ldots, f_\time \rangle$ be a collection of non-crossing canonical flows that satisfies the compatibility constraint (constraint number \ref{cons:compatibility}) of LP~\ref{lp:lp}. For any $t$ that $\{t, t+1\}\subset \timeset$, the top-most flow paths of $f_t$ and $f_{t+1}$ are compatible. 
\end{lemma}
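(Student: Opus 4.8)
Fix $t$ with $\{t,t+1\}\subseteq\timeset{}$ and write $f_t,f_{t+1}$ for the two non-crossing canonical flows. Let $s_t=\langle\intseqitem{t}{y_1},\dots,\intseqitem{t}{y_\K}\rangle$ and $s_{t+1}=\langle\intseqitem{t+1}{z_1},\dots,\intseqitem{t+1}{z_\K}\rangle$ be the pure snapshots that Definition~\ref{def:puremapping} associates to the top-most flow paths of $f_t$ and $f_{t+1}$, so that the top-most flow path of $f_t$ visits $\vertex{t}{k}{y_k}$ for each $k$, and analogously for $t+1$. Since two sorted pure snapshots at consecutive times are compatible iff the $k$-th interval of the later one is a feasible target of the $k$-th interval of the earlier one (this is exactly Lemma~\ref{lemma: noncrossingintervalpath}, together with its easy converse: given such a matching, Lemma~\ref{lem:movinginintervals} and Lemma~\ref{lem:surjective} let one pick representative positions realizing a valid transition while preserving the sorted order), it suffices to prove
$\intseqitem{t+1}{z_k}\in\outfeas{t}{\intseqitem{t}{y_k}}$ for every $k\in\Kset{}$.

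The first step is to identify $y_k$ as the \emph{largest} index $y$ for which the flow through $\vertex{t}{k}{y}$ is positive, and likewise $z_k$ for $f_{t+1}$. This is where the no-crossing hypothesis enters: when $f_t$ has no crossing flow, any two of its flow paths are nested, so ``above'' totally orders them and the top-most one is well defined; by flow decomposition the set of column-$k$ grid vertices carrying positive flow coincides with the set of column-$k$ vertices used by flow paths, and the top-most path selects the one of maximum index. I would also record that column $k$ of \graph{t} carries total flow exactly $1$ (propagate flow conservation outward from \source{t}), and the same for \graph{t+1}; every canonical path crosses column $k$ exactly once.

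The heart of the argument is a two-sided application of the compatibility constraint~\ref{cons:compatibility}. Write $\outfeas{t}{\intseqitem{t}{y_k}}=\intseqcons{t+1}{a}{b}$, a consecutive interval set by the corollary stated right after Definition~\ref{def:feas}; the goal becomes $a\le z_k\le b$. For the upper bound, apply~\ref{cons:compatibility} to the downward-closed block $\vertexcons{t}{k}{1}{y_k}$: by the previous step all column-$k$ flow of $f_t$ sits at indices $\le y_k$, so the left-hand side equals $1$; hence the column-$k$ flow of $f_{t+1}$ inside $\outfeas{t}{\vertexcons{t}{k}{1}{y_k}}$ is at least $1$, and as that column carries total flow $1$ it is \emph{entirely} contained there. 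The top interval index of $\outfeas{t}{\intseqcons{t}{1}{y_k}}$ equals $b$ (monotonicity of feasible sets: the top time-$(t{+}1)$ interval reachable from $\intseqitem{t}{y}$ is non-decreasing in $y$, so the maximum over $y\le y_k$ is attained at $y_k$), giving $z_k\le b$. For the lower bound, apply~\ref{cons:compatibility} to the upward-closed block $\vertexcons{t}{k}{y_k}{\parsize{t}}$: its left-hand side is the strictly positive flow through $\vertex{t}{k}{y_k}$, so $\outfeas{t}{\vertexcons{t}{k}{y_k}{\parsize{t}}}$ must contain a column-$k$ vertex of \graph{t+1} with positive flow; by the same monotonicity the bottom index of $\outfeas{t}{\intseqcons{t}{y_k}{\parsize{t}}}$ equals $a$, so $z_k\ge a$. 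Combining the two bounds yields $\intseqitem{t+1}{z_k}\in\outfeas{t}{\intseqitem{t}{y_k}}$ for all $k$, which completes the proof once we invoke the reduction of the first paragraph.

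I expect the main obstacle to be this double use of constraint~\ref{cons:compatibility} — in particular getting the upper bound, where one needs to combine the downward-closed block with the ``unit flow per column'' fact to conclude that \emph{all} of the later column-$k$ flow is confined below $b$, rather than merely that some of it lies in the feasible set. The remaining ingredients — that the top-most flow path uses the highest flow-carrying vertex of each column (a consequence of no crossings plus flow decomposition), the monotonicity of the feasible-set endpoints in the source-interval index (immediate from the interval construction of Section~\ref{sub:partition}), and the converse of Lemma~\ref{lemma: noncrossingintervalpath} — are routine and I would treat them briefly.
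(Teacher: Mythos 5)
Your proposal is correct and takes essentially the same route as the paper: both reduce the claim to showing that the $k$-th interval of the top-most path of $f_{t+1}$ lies in $\outfeas{t}{\intseqitem{t}{y_k}}$, and both establish this by applying constraint~\ref{cons:compatibility} twice in column $k$ (once to the block $\vertexcons{t}{k}{1}{y_k}$, whose outgoing flow is the full unit, and once to the flow at/above $\vertex{t}{k}{y_k}$). The only difference is presentational — the paper argues by contradiction over the two cases ``below/above the feasible range,'' while you derive the two bounds $a\le z_k\le b$ directly and make explicit the monotonicity of the feasible-set endpoints that the paper uses implicitly.
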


\begin{proof}
Let $\langle \source{t}, \vertex{t}{1}{y_1}, \dots, \vertex{t}{\K}{y_\K}, \sink{t} \rangle$ and $\langle \source{t+1}, \vertex{t+1}{1}{y'_1}, \dots, \vertex{t+1}{\K}{y'_\K}, \sink{t+1} \rangle$ respectively denote the top-most flow paths of $f_t$ and $f_{t+1}$. It suffices to prove for any $k\in \Kset$, there is a valid movement from the corresponding interval of $\vertex{t}{k}{y_k}$ to the corresponding interval of $\vertex{t+1}{k}{y'_k}$. To do so, we assume this is not the case and obtain a contradiction. Let $\vertex{t+1}{k}{y'_k}\notin \outfeas{t}{\vertex{t}{k}{y_k}}$ for some $k \in \Kset$, then one of the following conditions should hold:
\begin{enumerate}
	\item $\vertex{t+1}{k}{y'_k}$ is below the feasible range $\outfeas{t}{\vertex{t}{k}{y_k}}$.
	\item $\vertex{t+1}{k}{y'_k}$ is above the feasible range $\outfeas{t}{\vertex{t}{k}{y_k}}$.
\end{enumerate}

If the first condition is true, the contradiction is that $\vertex{t+1}{k}{y'_k}$ cannot be in the top-most flow path of $f_{t+1}$. To see this, note that we know by constraint~\ref{cons:compatibility} of LP~\ref{lp:lp} that the total flow passing through the vertices of $\outfeas{t}{\vertex{t}{k}{y_k}}$ is not less than the flow passing through $\vertex{t}{k}{y_k}$, therefore there is a vertex in $\outfeas{t}{\vertex{t}{k}{y_k}}$ (and above $\vertex{t+1}{k}{y'_k}$) with a non-negative flow and thus $\vertex{t+1}{k}{y'_k}$ cannot be in the top-most flow path of $f_{t+1}$.

If the second condition is true, the contradiction is that constraint~\ref{cons:compatibility} cannot be satisfied. To see this, note that since $\vertex{t}{k}{y_k}$ is in the top-most flow path of $f_t$, no flow passes through the vertices above it and therefore $\sum_{v \in \vertexcons{t}{k}{1}{y}}\outflow{f_t}{v} = 1$. However, since $\vertex{t+1}{k}{y'_k}$ is above the feasible range $\outfeas{t}{\vertex{t}{k}{y_k}}$, $\sum_{v' \in \outfeas{t}{\vertexcons{t}{k}{1}{y}}} \outflow{f_{t+1}}{v'} < 1$ which means constraint~\ref{cons:compatibility} that indicates the value of the latter summation should not be less than the former one, cannot be satisfied.
\end{proof}

\begin{theorem} \label{th:compatibility}
Let $f = \langle f_1, \ldots, f_\time \rangle$ be a solution of LP~\ref{lp:lp} without any crossing flows. There exists a polynomial time algorithm to find a mixed strategy of the defender that is equivalent to $f$. 
\end{theorem}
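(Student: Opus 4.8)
The plan is to construct the mixed strategy \emph{incrementally}, by repeatedly peeling a single ``top-most'' flow path off of \emph{all} of the day graphs simultaneously. One iteration, starting from the current non-crossing family $f = \langle f_1, \ldots, f_\time\rangle$ (initially the given LP solution), proceeds as follows. For each $t$ let $p_t$ be the top-most flow path of $f_t$; it is well defined since $f_t$ is non-crossing and is found greedily in polynomial time (from $\source{t}$ always step to the highest-index successor whose connecting edge still carries positive flow). By Lemma~\ref{lem:topmost-compatible} consecutive $p_t,p_{t+1}$ are compatible, so reading off the column-$k$ vertex of $p_1,\ldots,p_\time$ gives, for each patrol $k\in\Kset$, an interval path $\xi_k$ (Definition~\ref{def:intervalpath}); and since each $p_t$ is a canonical path its column indices are nondecreasing, so $\xi_1,\ldots,\xi_\K$ are automatically sorted as in Lemma~\ref{lem: sortingintervalpath}. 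By Lemma~\ref{lem:surjective} each $\xi_k$ contains a patrol path $v_k$, and because a pure strategy is merely a tuple of $\K$ individually speed-feasible patrol paths (there is no cross-patrol constraint), $P:=\{v_k\}_{k\in\Kset}$ is a valid pure strategy whose time-$t$ snapshot is exactly the pure snapshot equivalent to $p_t$ in the sense of Definition~\ref{def:puremapping}. I would then set $m^\star:=\min_t\sizeofflowpath{p_t}$, output $P$ with probability $m^\star$, subtract $m^\star$ from $\edgeflow{f_t}{e}$ for every edge $e$ of every $p_t$, and repeat until all flows are zero.

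For correctness of the \emph{output}: each $f_t$ is a unit $\source{t}$--$\sink{t}$ flow on a DAG, and every iteration removes exactly $m^\star$ units of $\source{t}$--$\sink{t}$ flow from every $f_t$, so they stay at a common value $1-(\text{total weight output so far})$ and the process ends precisely when the cumulative output probability reaches $1$; hence the pure strategies $P^{(1)},P^{(2)},\ldots$ with weights $m^{(1)},m^{(2)},\ldots$ form a genuine mixed strategy. By construction $\edgeflow{f_t}{e}=\sum_\ell m^{(\ell)}\cdot\mathbf{1}[e\in p^{(\ell)}_t]$ for every edge $e$ of every $\graph{t}$, which is exactly the canonical flow to which the mixed snapshot $\{(d^{(\ell)}_t,m^{(\ell)})\}_\ell$ is mapped in Definition~\ref{def:mixedmapping}, where $d^{(\ell)}_t$ is the pure snapshot equivalent to $p^{(\ell)}_t$ and is the time-$t$ snapshot of $P^{(\ell)}$. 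So at every time point the constructed strategy induces the mixed snapshot that maps to $f_t$, i.e.\ it is equivalent to $f$ (and, via Lemma~\ref{lem:mixedcost}, the attacker's best-response value against it equals $\max_t$ of the cost of $f_t$, which with Lemma~\ref{lem:optinlp} later yields Theorem~\ref{thm:main}; Theorem~\ref{th:compatibility} itself needs only equivalence and polynomial time). For running time: every iteration keeps edge flows nonnegative (as $m^\star\le\sizeofflowpath{p_t}$, the bottleneck of $p_t$) and, in the day graph attaining the minimum, drives the flow on a bottleneck edge of $p_t$ to $0$; hence the number of edges carrying positive flow strictly decreases each iteration, and it never exceeds $\sum_t|\edgeset{\graph{t}}|=\mathtt{poly}(\time{},\target{})$ (using $\K\le\time{}\target{}$ and Lemma~\ref{lem:partitionpoly}). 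Each iteration does only flow-decomposition-style work, so the procedure is polynomial.

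The one real obstacle is \emph{soundness} of the iteration: after subtracting the simultaneous top-most flow paths, the residual family must again be non-crossing and still satisfy the compatibility constraint~\ref{cons:compatibility} (after trivially rescaling to unit value, or after restating Definition~\ref{def:canonical} and Lemma~\ref{lem:topmost-compatible} for sub-unit $\source{t}$--$\sink{t}$ flows), so that Lemma~\ref{lem:topmost-compatible} applies at the next iteration. Non-crossingness and flow conservation are immediate since we only decrease edge flows along one path per graph. For~\ref{cons:compatibility}, fix $t$, a column $k$, and a window $\vertexcons{t}{k}{i}{j}$; let $a,b$ be the column-$k$ indices of $p_t,p_{t+1}$ and write $\outfeas{t}{\vertexcons{t}{k}{i}{j}}=\vertexcons{t+1}{k}{i'}{j'}$. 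Subtraction lowers the left side of~\ref{cons:compatibility} by $m^\star$ exactly when $a\in[i,j]$, and lowers the right side by $m^\star$ exactly when $b\in[i',j']$, so the inequality can only break if the right side drops while the left does not. If $a\in[i,j]$, then by Lemma~\ref{lem:topmost-compatible} $\vertex{t+1}{k}{b}\in\outfeas{t}{\vertex{t}{k}{a}}\subseteq\vertexcons{t+1}{k}{i'}{j'}$, so both sides drop by $m^\star$ and we are fine. If $a\notin[i,j]$ but $b\in[i',j']$, I would split on $a<i$ versus $a>j$: if $a<i$, the top-most property forces all column-$k$ flow of $f_t$ to lie at indices $\le a<i$, so the left side is already $0$ while the right side still contains the $\ge m^\star$ flow $p_{t+1}$ routes through $\vertex{t+1}{k}{b}$; if $a>j$, the column-$k$ flow of $f_t$ strictly above the window is at least $m^\star$ (it contains $p_t$'s mass), and $b\le j'$ forces all column-$k$ flow of $f_{t+1}$ to lie at indices $\le j'$, and feeding this — together with the compatibility inequality for the enlarged window $\vertexcons{t}{k}{i}{\parsize{t}}$, which gives (flow below $i$ at time $t$) $\ge$ (flow below $i'$ at time $t+1$) — into the unit-value identities for column $k$ at times $t$ and $t+1$ shows the old gap between the two sides was already at least $m^\star$. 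This case analysis, which is essentially the same bookkeeping underlying Lemma~\ref{lem:topmost-compatible} and relies on monotonicity of feasible sets and on all flow lying weakly below the top-most flow path in every column, is where the work is.
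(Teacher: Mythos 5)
Your proposal is correct and follows essentially the same route as the paper: the same iterative algorithm that peels off the simultaneous top-most flow paths with probability equal to the minimum bottleneck, uses Lemma~\ref{lem:topmost-compatible} for compatibility of consecutive top-most paths, and re-establishes constraint~\ref{cons:compatibility} for the residual flows (after rescaling to unit value) so the lemma can be re-applied, with termination in polynomially many iterations since a bottleneck edge is zeroed each round. Your preservation argument for constraint~\ref{cons:compatibility} is only a cosmetic variant of the paper's: you argue directly via the enlarged window $\vertexcons{t}{k}{i}{\parsize{t}}$ and unit-value identities, whereas the paper derives a contradiction using the enlarged window $\vertexcons{t}{k}{i}{x}$ ending at the top-most vertex; both hinge on the same facts that all flow lies weakly below the top-most path in every column and on the LP constraint for a larger window.
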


To prove Theorem~\ref{th:compatibility}, we show the following iterative algorithm constructs the desired mixed strategy in polynomial time:
\begin{enumerate}
	\item Find the top-most flow paths $p_1, \ldots, p_\time$ of $f_1, \ldots, f_\time$.
	\item Construct the pure strategy $p$, corresponding to $p_1, \ldots, p_\time$.
	\item Add $p$ to $s$ with probability $q = \min \sizeofflowpath{p_i}$.
	\item For any edge $e$ of any $p_i$, decrease $\edgeflow{f_i}{e}$ to $\edgeflow{f_i}{e} - q$.
	\item If there is any flow left in $f_1, \ldots, f_\time$, repeat all the steps.
\end{enumerate}
Note that by Lemma~\ref{lem:topmost-compatible}, if the compatibility constraint of LP~\ref{lp:lp} is satisfied, the top-most flow paths are compatible. Since at the first round of the algorithm, we have an actual solution of the LP, the compatibility constraint is obviously satisfied. To completely prove the correctness of this algorithm, we also need to show after each iteration, changing the flows does not violate the compatibility constraint. For space limitations, we left this part of the proof to the appendix. Furthermore, the running time of this algorithm is polynomial in the input size since in each iteration, the flow passing through at least one edge decreases to zero and the total number of edges is polynomial.

We are now ready to prove Theorem~\ref{thm:main}.

\begin{proof}[of Theorem~\ref{thm:main}]
Recall that by Lemma~\ref{lem:optinlp}, the optimal solution of LP~\ref{lp:lp} is a lower bound for the utility of the attacker when both players play their optimal (minimax) strategies. Also note that by Lemma~\ref{lem:poly-lpanswer} and Theorem \ref{th:compatibility}, we can construct a mixed strategy $s$ of the defender that is equivalent to the optimal solution of LP~\ref{lp:lp} in polynomial time. This means the maximum utility of the attacker when the defender plays $s$, is equal to its lower bound and therefore $s$ minimizes the maximum expected utility of the attacker: i.e., it is a minimax strategy of the defender.
\end{proof}

\section{Continuous Model}\label{sec:cont}
In this section we prove the following theorem for the continuous model:
\begin{theorem}\label{the:cont}
There exists a polynomial time algorithm to find an optimal solution for \continuousProblem{}.
\end{theorem}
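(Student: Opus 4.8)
The plan is to reduce \continuousProblem{} to \discreteProblem{} by a rescaling trick and then invoke Theorem~\ref{thm:main}. First I would let $L$ be a common denominator of all the spatial data in the input --- the target positions $\targetpos{a}{t}$, the radius $\R$, the speed $\D$, and the domain length $\M$, all of which we assume rational --- and multiply every spatial coordinate by $L$. This yields an instance $\mathcal{I}$ in which $L\targetpos{a}{t}$, $L\R$, $L\D$, $L\M$ are integers while the weights $\targetweight{a}{t}$ are untouched. Since a patrol at real position $\mu$ protects $a$ at time $t$ iff $|\mu-L\targetpos{a}{t}|\le L\R$, and a one-step move $\mu\to\mu'$ is legal iff $|\mu'-\mu|\le L\D$, $\mathcal{I}$ is literally the original continuous game with the line stretched by the factor $L$; hence it suffices to solve $\mathcal{I}$ and divide the resulting patrol positions back by $L$.

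The crux is to show that $\mathcal{I}$, although a continuous game, has the same minimax value when patrol positions are restricted to the integers in $[0,L\M]$ --- i.e.\ as a \discreteProblem{} instance. One inequality is trivial: every discrete strategy is a legal continuous strategy with identical payoffs, so the continuous value is at most the discrete one. For the converse I would take \emph{any} continuous defender strategy and round every patrol position down to its floor, in every pure strategy of the support, at every patrol and every time step. Two facts make this work, both using $L\R,L\D\in\mathbb{Z}$: (i) if $|\mu-L\targetpos{a}{t}|\le L\R$ then $\lfloor\mu\rfloor$ also lies in the integer-endpoint interval $[L\targetpos{a}{t}-L\R,\ L\targetpos{a}{t}+L\R]$, so a rounded patrol protects a superset of the targets it protected before; and (ii) $|\mu'-\mu|\le L\D$ implies $|\lfloor\mu'\rfloor-\lfloor\mu\rfloor|\le L\D$, so rounding sends feasible patrol paths to feasible integer patrol paths, and clearly $0\le\lfloor\mu\rfloor\le L\M$. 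So the rounded strategy is a legal discrete strategy against which, for every attack $(a,t)$, the event ``$a$ is unprotected at time $t$'' only shrinks; the attacker's value therefore cannot increase. This shows the continuous and discrete minimax values of $\mathcal{I}$ coincide, and the discrete optimum is attained by Theorem~\ref{thm:main}, so running that algorithm on $\mathcal{I}$ produces a strategy that is optimal for the continuous $\mathcal{I}$ as well.

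It remains to bound the running time. The bit-length of $L$ is $O(\text{input size})$ since it is a product of denominators appearing in the input, so $L\targetpos{a}{t},L\R,L\D$ and the count of patrol locations $L\M$ all have bit-length $O(\text{input size})$, while $\time,\target,\K$ and the weights are unchanged. Because the algorithm of Theorem~\ref{thm:main} runs in time polynomial in the input size and only \emph{polylogarithmically} in the number of patrol locations, replacing $\M$ by $L\M$ costs only a polynomial factor, so the whole procedure is polynomial in the size of the original \continuousProblem{} instance; dividing the output positions by $L$ returns an optimal strategy for the original instance.

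I expect the main obstacle to be the middle step: arguing that, once the domain is integer-scaled, arbitrary real patrol positions confer no advantage on the defender. The delicate point is that the rounding must be performed \emph{simultaneously} across all patrols and all time steps while preserving two things at once --- speed feasibility of each patrol path and (monotone) target coverage --- and it is rounding \emph{down} everywhere, together with the integrality of $L\R$ and $L\D$, that lets both survive. (Alternatively one could argue through the partition of Section~\ref{sub:partition}, using that after scaling all interval endpoints are integral up to the tiny $\epsilon$ shift and that Lemma~\ref{lem:sameintervalsameprotection} lets one replace each patrol path by an intervally-equal one, but the direct floor-rounding argument is the cleanest route.)
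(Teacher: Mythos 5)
Your proposal is correct and follows essentially the same route as the paper: scale all spatial data by the product of the input denominators, observe that flooring every patrol position yields an equally good (integer-positioned) strategy because the scaled radii and speed bound are integers, invoke Theorem~\ref{thm:main} (whose running time is only polylogarithmic in the number of patrol locations), and scale the output back. Your write-up is in fact slightly more careful than the paper's on two points it leaves implicit --- that the floored path remains speed-feasible and that the floored patrol protects a (super)set of the original targets --- but the underlying argument is the same.
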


The given proof is based on a technical assumption that all numbers in the input are rational.

\begin{proof}[of Theorem~\ref{the:cont}]
The main idea of this proof is to reduce any instance of \continuousProblem{} to an instance of \discreteProblem{}, for which we know there exists a polynomial time algorithm.

Recall that any rational number can be represented by a fraction $\frac{a}{b}$ such that both $a$ and $b$ are integers. Let $B$ be the set of denominators in this fractional representation of all numbers in the input. We define $m$ to be the product of all numbers in $B$. Note that the number of digits needed to represent $m$ is polynomial in the input size since every number in $B$ appears in the input. To create an instance of \discreteProblem{}, we multiply all target positions, $\D$, $\R$ and $\M$, given in the instance of \continuousProblem{} to $m$.

To use the algorithm for \discreteProblem{}, it suffices to prove in the scaled solutions, there exists an optimal solution that places patrols only in the integer locations. To do this, we prove that for any given patrol path $p_1=\langle m_i \rangle_\time$ in the scaled version, there exists a patrol path $p_2=\langle m'_i \rangle_\time$ that covers the same set of targets and for any $t\in\timeset$, $m'_t$ is an integer position. It suffices to set $m'_t$ to be $\lfloor m_t \rfloor$. Note that since the position of all targets and the protecting ranges of the patrols in the scaled version are all integers, this patrol path protects exactly the same set of targets. Now we can use the algorithm of Theorem~\ref{thm:main} to find the optimal solution of this scaled input and then scale it back to the original size by dividing the patrols' locations by $m$.
\end{proof}

\bibliographystyle{plain}
\bibliography{ref}

\clearpage

\appendix

\section{Missing Proofs}

\subsection*{Lemma~\ref{lem:sameintervalsameprotection}}
\xhdr{Statement.}
 Let $k$ and $k'$ be two patrols in the same interval at any time $t$. The set of targets that $k$ and $k'$ protect at time $t$ are equal.

\begin{proof}
	We suppose this is not the case and obtain a contradiction. Without losing generality assume $k$ protects a target $a$ at time $t$ that $k'$ does not. Lines \ref{line:R} and \ref{line:R+1} of Algorithm \ref{alg:partition} indicate there are two interval points $p_i$ and $p_j$ at $\targetpos{a}{t}-\R$ and $\targetpos{a}{t}+\R+\epsdistance$ respectively. Assume $\epsilon$ is too small that there is no valid patrol position in the non-inclusive range between $\targetpos{a}{t}+\R$ and $\targetpos{a}{t}+\R+\epsdistance$. This means a patrol has a distance of at most $\R$ from $a$ (or simply protects $a$) if and only if it is in an interval between $p_i$ and $p_j$.  Note that we assumed $k'$ does not protect $a$, and hence it is not between $p_i$ and $p_j$, while $k$ has to be between them to protect $a$. This means $k$ and $k'$ could not be in the same interval, which is a contradiction.
\end{proof}

\subsection*{Lemma~\ref{lem:movinginintervals}}
\xhdr{Statement.} Let $[s_i, f_i)$ and $[s_j, f_j)$ be two arbitrary intervals in $\intseq{t}$ and $\intseq{t+1}$ respectively. If there exists a feasible move from an arbitrary position in $[s_i, f_i)$ to a position in $[s_j, f_j)$, for any position in $[s_i, f_i)$, there exists a feasible move to a position in $[s_j, f_j)$.

\begin{proof}
Suppose there exists a feasible move from a position $x_i$ in interval $[s_i, f_i)$ to a position $x_j$ in interval $[s_j, f_j)$.  The existence of a feasible move from $x_i$ to $[s_j, f_j)$ implies $s_j-\D \leq x_i< f_j+\D$. Also note that Line \ref{line:+D} and Line \ref{line:-D} of Algorithm \ref{alg:partition} indicate $f_j+\D$ and $s_j-\D$ are in \parof{t}. Therefore, since $[s_i, f_i)$ is the interval containing $x_i$, the following equation holds:
$$s_j-\D \leq s_i\leq x_i<f_i\leq f_j+\D$$	
This means any possible location $x'_i$ in $[s_i, f_i)$ satisfies $s_j-\D \leq x'_i< f_j+\D$, and therefore has a feasible move to $[s_j, f_j)$.
\end{proof}

\subsection*{Lemma~\ref{lem:surjective}}
\xhdr{Statement.}
For any interval path $\xi = \langle \intseqitem{i}{x_i}\rangle_\time$, there is at least one patrol path in $S(\xi)$.

\begin{proof}
By Definition~\ref{def:intervalpath}, for any time point $t \in \timeset$, there is at least one feasible move from a position in $\intseqitem{t}{x_{t}}$ to a position in $\intseqitem{t+1}{x_{t+1}}$, also based on Lemma \ref{lem:movinginintervals}, existence of a feasible move from interval $\intseqitem{t}{x_{t}}$ to interval $\intseqitem{t+1}{x_{t+1}}$ means there is a feasible move from any position in $\intseqitem{t}{x_{t}}$ to at least one position in $\intseqitem{t+1}{x_{t+1}}$. Therefore any patrol starting in any position in $\intseqitem{1}{x_1}$, could reach at least one position in $\intseqitem{\time}{x_{\time}}$ by feasible moves, which forms a patrol path in $S(\xi)$. 
\end{proof}

\subsection*{Lemma~\ref{lem:purelemma}}
\xhdr{Statement.} Let $s$ be a pure snapshot at time $t$, and $a$ be an arbitrary target. The payoff of the attacker with respect to $s$, if he attacks the target $a$ at time $t$, equals the cost of the target $a$ in the canonical path equivalent (Definition~\ref{def:puremapping}) to $s$.

\begin{proof}
Let $p = \langle \source{t}, \vertex{t}{1}{y_1}, \vertex{t}{2}{y_2}, \ldots, \vertex{t}{\K}{y_{\K}}, \sink{t} \rangle$ be the canonical path equivalent to $s$ and $E=\{ e_1, e_2, \ldots, e_{\K+1} \}$ be the set of edges in this canonical path. By Definition~\ref{def:puremapping}, $s$ contains \K{} patrols in intervals $\intseqitem{t}{y_1}, \intseqitem{t}{y_2}, \ldots, \intseqitem{t}{y_{\K}}$, which are already sorted based on position. 
 	Let $[s_i, f_i)$ denote the start and end positions of interval \intseqitem{t}{y_i}. If target $a$ is not covered by $s$ exactly one of the following conditions holds:
\begin{enumerate}
	\item There exists a single $i$ such that $1\leq i \leq \K -1$ and $f_i \leq \targetpos{t}{a} < s_{i+1}$
	\item $\targetpos{t}{a} < s_{1}$
	\item $f_{\K} \leq \targetpos{t}{a}$
\end{enumerate}
 
This indicates that if target $a$ is not covered by $s$, there exists exactly one $i$ such that  $1\leq i\leq \K+1$  and $\edgecover{e_i}{a}=1$ (See Definition \ref{def:canonical}), but if it is covered $\edgecover{e_i}{a}=0$ for all the edges in this path. By Definition \ref{def:puremapping}, the cost of the target $a$ in the canonical path equivalent to $s$ is defined as follows:
	 $$\Sigma_{i=1}^{\K+1} \edgecover{e_i}{a}\times \targetweight{t}{a}$$  This formula equals to $0$ if the target is covered, and it equals to $\targetweight{t}{a}$ otherwise, which is equal to the payoff of the attacker with respect to $s$, if he attacks target $a$ at time $t$.

\end{proof}

\subsection*{Lemma~\ref{lem:mixedcost}}
\xhdr{Statement.}
Let $r$ be a mixed snapshot at time $t$ and let $f$ denote the canonical flow that $r$ is mapped to. The maximum expected payoff of the attacker at time $t$ with respect to $r$, equals the cost of $f$.

\begin{proof}
	Let $\{(d_i, p_i)\}_{n}$ denote the mixed snapshot $r$. Lemma \ref{lem:purelemma} states that the payoff of the attacker attacking target $a$ while the pure snapshot $d_i$ represents the placement of patrols at time $t$ is equal to: \begin{equation*}
		 \Sigma_{e \in E_i} \edgecover{e}{a}\times \targetweight{t}{a} 	\end{equation*}
		where $E_i$ denotes the set of edges in the canonical path equivalent to pure snapshot $d_i$. This indicates that the expected payoff of the attacker attacking target $a$ with respect to $r$ is equal to:
		
		\begin{equation*}
		\Sigma^{n}_{i=1} p_i \times (\Sigma_{e \in E_i} \edgecover{e}{a}\times \targetweight{t}{a}) = \Sigma_{e \in \edgeset{\graph{t}}} \edgeflow{f}{e}\times\edgecover{e}{a}\times \targetweight{t}{a} 	\end{equation*}	
So, the maximum expected payoff of the attacker while mixed snapshot $r$ represents the placement of patrols at time $t$ equals to the cost of the canonical flow $f$ that is defined in Definition \ref{def:canonical} as follows:
\begin{equation*}
		\max \Sigma_{e \in \edgeset{\graph{t}}} \edgeflow{f}{e}\times\edgecover{e}{a}\times \targetweight{t}{a} \qquad \forall a; \text{ where } a \in \targetset.
	\end{equation*}
	
\end{proof}

\subsection*{Lemma~\ref{lemma: noncrossingintervalpath}} 
\xhdr{Statement.}
If $\langle \intseqitem{t}{y^{t}_i}\rangle_\K{}$ and $\langle \intseqitem{t+1}{y^{t+1}_i}\rangle_\K{}$ are two pure snapshots at time $t$ and $t+1$ in at least one valid pure strategy $p$, then for any $j\in \Kset$ we have $\intseqitem{t+1}{y^{t+1}_j} \in \outfeas{t}{\intseqitem{t}{y^{t}_j}}$.

\begin{proof}
	We prove this lemma by induction on $\K$.
	
	Induction hypothesis: we assume this lemma holds for any $\K$ where $\K<n$. 
	For $\K=1$ , the base case, each snapshot has one patrol and they have to be compatible.
	
	Induction step: We show that if $\langle \intseqitem{t}{y^{t}_i} \rangle_n$ and $\langle \intseqitem{t+1}{y^{t+1}_i} \rangle_n$ respectively denote a pure snapshot at time $t$ and $t+1$  in at least one pure strategy $p_n$,  $\intseqitem{t+1}{y_j} \in \outfeas{t}{\intseqitem{t}{y_j}}$ for any $j$ where $j\leq n$. Let  $\langle \intseqitem{1}{z_1}, \dots, \intseqitem{\time}{z_{\time}} \rangle$ and $\langle \intseqitem{1}{w_1}, \dots, \intseqitem{\time}{w_{\time}} \rangle$ denote two interval paths in pure strategy $p_n$ such that the following equations hold:
	\begin{enumerate}
	\item $\intseqitem{t}{z_t}=\intseqitem{t}{y^{t}_n}$
	\item $\intseqitem{t+1}{z_{t+1}}=\intseqitem{t+1}{y^{t+1}_l}$
	\item $\intseqitem{t}{w_t}=\intseqitem{t}{y^{t}_m}$
	\item $\intseqitem{z+1}{w_{t+1}}=\intseqitem{t+1}{y^{t+1}_n}$
	\end{enumerate}
We first prove that  $\intseqitem{t+1}{y^{t+1}_n} \in \outfeas{t}{\intseqitem{t}{y^{t}_n}} $ and  $\intseqitem{t+1}{y^{t+1}_l} \in \outfeas{t}{\intseqitem{t}{y^{t}_m}}$.

		To prove $\intseqitem{t+1}{y^{t+1}_n} \in \outfeas{t}{\intseqitem{t}{y^{t}_n}} $  we first assume that this is not the case, then we obtain a contradiction. Let $s^{t}_i$ and $t^{t}_i$ respectively denote the start and end points of the interval  $\intseqitem{t}{y^{t}_i}$, and let $s^{t+1}_i$ and $t^{t+1}_i$ denote the starts and end points of the interval   $\intseqitem{t+1}{y^{t+1}_i}$ for all $i$ such that $1\leq i \leq n$. if $\intseqitem{t+1}{y^{t+1}_n} \notin \outfeas{t}{\intseqitem{t}{y^{t}_n}} $ exactly one of the following conditions holds:
	\begin{enumerate}
		\item $t^{t}_n+\R \leq s^{t+1}_n$: Since $m \leq n$, in this case the inequality $t^{t}_m+\R \leq t^{t}_n+\R \leq s^{t+1}_n$ holds, which indicates  $\intseqitem{t+1}{y^{t+1}_n} \notin \outfeas{t}{\intseqitem{t}{y^{t}_m}}$. This is a contradiction with the existence of the interval path $\langle \intseqitem{1}{w_1}, \dots, \intseqitem{t}{y^{t}_m}, \intseqitem{t+1}{y^{t+1}_n} , \dots, \intseqitem{\time}{w_{\time}} \rangle$ in the pure strategy $p_n$.
		\item $t^{t+1}_n+\R \leq s^{t}_n$: Since $l \leq n$, the inequality $t^{t+1}_l+\R \leq t^{t+1}_n+\R \leq s^{t}_n$ holds in this case. This means $\intseqitem{t+1}{y^{t+1}_l} \notin \outfeas{t}{\intseqitem{t}{y^{t}_n}}$, which is a contradiction with the existence of the interval path $\langle \intseqitem{1}{z_1}, \dots, \intseqitem{t}{y^{t}_n}, \intseqitem{t+1}{y^{t+1}_l}, \dots, \intseqitem{\time}{z_{\time}} \rangle$ in the pure strategy $p_n$. 
	\end{enumerate}
		We proved $\intseqitem{t+1}{y^{t+1}_n} \in \outfeas{t}{\intseqitem{t}{y^{t}_n}} $. It is easy to see that  $\intseqitem{t+1}{y^{t+1}_l} \in \outfeas{t}{\intseqitem{t}{y^{t}_m}}$ is also correct in the same way.

	 Since  $\intseqitem{t+1}{y^{t+1}_m} \in \outfeas{t}{\intseqitem{t}{y^{t}_l}}$, $\langle \intseqitem{1}{w_1}, \dots, \intseqitem{t}{w_t}, \intseqitem{t+1}{z_{t+1}} , \dots, \intseqitem{\time}{z_{\time}} \rangle$ is a valid interval path. Using this fact, we construct the pure strategy $p_{n-1}$ by deleting interval paths $\langle \intseqitem{1}{z_1}, \dots, \intseqitem{\time}{z_{\time}} \rangle$ and $\langle \intseqitem{1}{w_1}, \dots , \intseqitem{\time}{w_{\time}} \rangle$ from $p_n$, and adding the interval path $\langle \intseqitem{1}{w_1}, \dots, \intseqitem{t}{w_t}, \intseqitem{t+1}{z_{t+1}} , \dots, \allowbreak \intseqitem{\time}{z_{\time}} \rangle$ to it. One can easily see that $\langle \intseqitem{t}{y^{t}_i} \rangle_{n-1}$ and $\langle \intseqitem{t+1}{y^{t+1}_i} \rangle_{n-1}$ respectively are the pure snapshots at time $t$ and $t+1$, in pure strategy $p_{n-1}$. Hence in this case $\K=n-1$, by induction hypothesis, $\intseqitem{t+1}{y^{t+1}_j} \in \outfeas{t}{\intseqitem{t}{y^{t}_j}}$ for any $j$ such that $1\leq j\leq n-1$. We also proved that $\intseqitem{t+1}{y^{t+1}_n} \in \outfeas{t}{\intseqitem{t}{y^{t}_n}} $. Therefore for any $j$ where $1\leq j\leq n$,  $\intseqitem{t+1}{y^{t+1}_j} \in \outfeas{t}{\intseqitem{t}{y^{t}_j}}$,  and the proof of the induction step is complete.

\end{proof}

\subsection*{Lemma~\ref{lem: sortingintervalpath}}
\xhdr{Statement.}
	There exists an optimal mixed strategy of the defender, such that  for every pure strategy $p$ in its support there is an ordering of interval paths $\langle\xi_1, \dots, \xi_\K  \rangle$ such that the following condition holds for this ordering:
	for any two interval paths $\xi_i$ and $\xi_j$, in the pure strategy p,  $\xi_i$ is intervally under $\xi_j$ if $i\leq j$.
	
\begin{proof}
	We first prove that for any pure strategy $p_1$ there exists a pure strategy $p_2$ that protects the same set of targets as $p_1$, and there is an ordering of interval paths $\langle\xi_1, \dots, \xi_\K  \rangle$ such that interval paths $\xi_i$ is intervally under $\xi_j$ if $i\leq j \leq\K$.
		
	Let pure snapshot $\langle \intseqitem{t}{y^{t}_1}, \intseqitem{t}{y^{t}_2}, \dots, \intseqitem{t}{y^{t}_\K} \rangle_\K$ denote the patrols' placement in pure strategy $p_1$ at time $t$. We construct the pure strategy $p_2$, such that even though it might have different set of interval paths from $p_1$, they share the same set of pure snapshots. Let $\Phi$ denote the set of interval paths in $p_2$. At first $\Phi={\varnothing}$. Then, for any patrol $k \in \Kset$ we add interval path 
	$\xi_k=\langle \intseqitem{1}{y^{1}_k},  \intseqitem{2}{y^{2}_k}, \dots, \intseqitem{\time}{y^{\time}_k} \rangle$ to $\Phi$ (The Interval path $\xi_k$ contains the $k$-th interval of all the pure snapshots of $p_1$). Since in Lemma \ref{lemma: noncrossingintervalpath} we proved that $\intseqitem{t+1}{y^{t+1}_k} \in \outfeas{t}{\intseqitem{t}{y^{t}_k}}$, the set $\Phi$ consists of $\K$ valid interval paths. Also for any $\xi_i$ and $\xi_j$ in  $\Phi$ interval path $\xi_i$ is intervally under the interval path $\xi_j$ if $1 \leq i\leq j \leq \K$. Since $p_1$ and $p_2$ share the same set of pure snapshots, they also protect the same set of targets. 
	
	To prove this lemma, let $m$ denote an optimal mixed strategy of the defender. We replace any pure strategy in the support of $m$ with its modified version
So, the utility of defender playing this mixed strategy does not change, and it is still an optimal strategy. Also, for any $p$ in the support of $m$ 
	if we order the interval paths from $\xi_1$ to $\xi_\K$, the following condition holds: for any two interval path $\xi_i$ and $\xi_j$, in the pure strategy p,  $\xi_i$ is intervally under $\xi_j$ if $1\leq i\leq j \leq\K$
\end{proof}

\subsection*{Lemma~\ref{lem:optinlp}}
\xhdr{Statement.}
The solution of Linear Program~\ref{lp:lp} gives a lower bound for the utility of the attacker when both players play their optimal strategies.

\begin{proof}
	Based on Lemma~\ref{lem: sortingintervalpath}, there exists an optimal mixed strategy $s$ such that for any pure strategy $p$ in support of $s$ this condition holds: there exists an ordered set $\langle \xi_1, \xi_2, \dots, \xi_\K \rangle$ of interval paths in $p$  such that for any $i,j$ that $1\leq i\leq j \leq \K$, interval path $\xi_i$ is intervally under $\xi_j$.
	Let $u_{opt}$ denote the attacker's utility in mixed strategy $s$, and let mixed snapshot $m_t$ denote the placement of patrols at time $t$ in mixed strategy $s$. Also $f^s_t$ denotes the corresponding canonical flow of mixed snapshot $m_t$. We set values of the variables in LP~\ref{lp:lp} as follows:
	\begin{enumerate}
		\item $u=u_{opt}$
		\item $f_t=f^s_t \quad \forall t\in \timeset$
	\end{enumerate}
	
	Then, we prove that under this assignments all the constraints of the LP are satisfied. The set of constraints in lines \ref{cons:start} to \ref{cons:possitive} of  this LP are the necessary conditions for a canonical flow, so for any $t\in \timeset$, canonical flow $f^s_t$ satisfies them. There also exists another set of constraints in line \ref{cons:compatibility}, that is necessary for the compatibility of two consecutive canonical flows. Let $<\xi^p_1, \xi^p_2, \dots, \xi^p_\K>$ denote interval paths in pure strategy $p$ in support of $s$, such that $\xi^p_b$ is intervally under $\xi^p_a$ for any $a$ and $b$ that $1\leq a\leq b\leq \K$.
	Note that, if at time $t$ patrol $k_a$ is in an interval in the set of intervals $\intseqcons{t}{i}{j}$ such that $1\leq i\leq \parsize{t}$ and $\{t, t+1\}\subset \timeset$, $k_a$ is in an interval in the set $\outfeas{t}{\intseqcons{t}{i}{j}}$ at time $t+1$. This indicates that if at time $t$, with probability $p_a$ patrol $k_i$ is in an interval in the set $\intseqcons{t}{i}{j}$, with probability at least $p_a$ at time $t+1$, $k_i$ is in an interval in the set $\outfeas{t}{\intseqcons{t}{i}{j}}$. In LP \ref{lp:lp} , $\graph{t}$ denotes the day graph corresponding to the mixed snapshot $t$ in strategy $s$, and for any $k\in \Kset$, the amount of flow crossing through the set of vertices \vertexcons{t}{k}{i}{j} in $\graph{t}$ denotes the probability that at time $t$, patrol $k$ is in $\intseqcons{t}{i}{j}$. So the following equation holds:
	
	$$\Sigma_{v \in \vertexcons{t}{k}{i}{j}}\edgeflow{f^{s+}_t}{v} \leq \Sigma_{v' \in \outfeas{t}{\vertexcons{t}{k}{i}{j}}} \edgeflow{f^{s+}_{t+1}}{v'}  \forall t, k, i, j: t, t+1 \in \timeset, k\in \Kset, 1 \leq i \leq j \leq \parsize{t}$$
	
		 In addition, the set of constraints in the line \ref{cons:consteq} of the LP is also satisfied since at least one of them is violated only if the expected payoff of the attacker, attacking an arbitrary target $a$ is more than $u_{opt}$, but $u_{opt}$ is the maximum payoff of the attacker while defender plays strategy $s$.
		 
		 We proved that there exists a valid assignment to variables of this LP such that $u=u_{opt}$, so $u_{opt}$ is an upper bound for the value of $u$ in this LP.
\end{proof}

\subsection*{Lemma~\ref{lem:crosscost}} 
\xhdr{Statement.}
	Function \textproc{ResolveCrosses} of Algorithm \ref{alg:cross} does not increase the total cost of the input canonical flow.

\begin{proof}As mentioned in Definition \ref{def:canonical} the cost of a canonical flow is as follows:
	\begin{equation*}
		\max \Sigma_{e \in \edgeset{\graph{t}}} \edgeflow{f}{e}\times\edgecover{e}{a}\times \targetweight{t}{a} \qquad \forall a; \text{ where } a \in \targetset{t}.
	\end{equation*}
	The only thing in this function that can affect the above mentioned cost is that the amount of flow crossing through both edges  $e_1=(\vertex{t}{x}{y_1}, \vertex{t}{x+1}{y_2})$ and $e_2=(\vertex{t}{x}{y'_1},\vertex{t}{x+1}{y'_2})$ decreases by a particular amount, which is then added to the flow crossing through edges $e_3=(\vertex{t}{x}{y_1}, \vertex{t}{x+1}{y'_2})$ and $e_4=(\vertex{t}{x}{y'_1},\vertex{t}{x+1}{y_2})$. When the amount of flow passing through $e_1$ and $e_2$ decreases by $f_m$ the cost of choosing an arbitrary target $a$  decreases by $
		f_m\times \targetweight{t}{a}(\edgecover{e_1}{a}+ \edgecover{e_2}{a})$, and when this amount is added to flow of $e_3$ and $e_4$, the mentioned cost increases by $f_m\times\targetweight{t}{a} (\edgecover{e_3}{a}+ \edgecover{e_4}{a})$. 
		Since we want to show that the maximum cost for all the targets does not increase, it suffices to prove that the amount of increment in cost for each target is less than the decrement. In other words we prove the following relation holds for any arbitrary target: 
		\begin{equation} \label{eq:crossing}
		(\edgecover{e_3}{a}+ \edgecover{e_4}{a}) \leq 
		(\edgecover{e_1}{a}+ \edgecover{e_2}{a})
		\end{equation}
		Since $e_1$ and $e_2$ are crossing edges they have one of the conditiones mentioned in Definition \ref{def:crossingedges}.   Without loss of generality we assume the first condition holds, which means $y_1<y'_1$ and $y_2>y'_2$. Moreover, by Definition~\ref{def:graph}, we have $y_1 \leq y_2$ and $y'_1 \leq y'_2$, which yields $y_1< y'_1 \leq y'_2< y_2$. Let $e=(\vertex{t}{x}{y}, \vertex{t}{x+1}{y'})$ denote an arbitrary edge in \graph{t}, where $\intseqitem{t}{y}$ and $\intseqitem{t}{y'}$ are intervals corresponding to vertices $\vertex{t}{x}{y}$ and $\vertex{t}{x+1}{y'}$. Recall that by Definition \ref{def:graph} for any target $a$ , $\edgecover{e}{a}$ is equal to $1$ if the two following conditions hold: (1) target $a$ is located in a position between $\intseqitem{t}{y}$ and $\intseqitem{t}{y'}$ (non-inclusive) at time $t$, and, (2) target $a$ could not be protected by any patrol at any arbitrary position in $\intseqitem{t}{y}$ or $\intseqitem{t}{y'}$; otherwise  $\edgecover{e}{a}$ is equal to 0.  (Here, we ignore those edges that one of their connected vertices is \sink{t} or \source{t} because these edges do not cross)
		
	In the following, we prove that equation ~\ref{eq:crossing} holds for any possible value of (\edgecover{e_3}{a}+ \edgecover{e_4}{a}).
		
		\begin{itemize}
		\item \edgecover{e_3}{a}+ \edgecover{e_4}{a}= 0: Since the right side of the equation \ref{eq:crossing} is not less than $0$ the inequality holds in this case.
	
		\item $\edgecover{e_3}{a}+ \edgecover{e_4}{a}= 2$: In this case, $\edgecover{e_3}{a}= 1$ and $\edgecover{e_4}{a}= 1$. So, target $a$ is located in a position between	 intervals $\intseqitem{t}{y'_1}$ and $\intseqitem{t}{y_2}$, and between intervals  $\intseqitem{t}{y_1}$ and $\intseqitem{t}{y'_2}$. Moreover, target $a$ can not  be protected by any patrol that is either in interval $\intseqitem{t}{y_1}$, $\intseqitem{t}{y'_1}$, $\intseqitem{t}{y_2}$ or $\intseqitem{t}{y'_2}$. This indicates that $\edgecover{e_1}{a}= 1$ and $\edgecover{e_2}{a}= 1$ since $\targetpos{t}{a}$, the position of target $a$ at time $t$, is above the intervals $y_1$ and $y'_1$ and below the intervals  $y'_2$ and $y_2$ , which indicates that the target is in a position between $y_1$ and $y_2$, and between  $y'_1$ and $y'_2$. So, in this case $\edgecover{e_1}{a}+ \edgecover{e_2}{a}= 2$, and equality \ref{eq:crossing} holds.
			
		\item $\edgecover{e_3}{a}+ \edgecover{e_4}{a}= 1$: In this case, $\edgecover{e_3}{a}= 1$ or $\edgecover{e_4}{a}= 1$. So, target $a$ is located in a position between	 intervals $\intseqitem{t}{y'_1}$ and $\intseqitem{t}{y_2}$ or it is in a position between intervals  $\intseqitem{t}{y_1}$ and $\intseqitem{t}{y'_2}$.  This indicates that $\edgecover{e_1}{a}=1$ since $\targetpos{t}{a}$, the position of target $a$ at time $t$, is above the interval $y_1$ and and below the interval $y_2$. Moreover, one can easily see that having $\edgecover{e_3}{a}= 1$ or $\edgecover{e_4}{a}= 1$ yields that target $a$ can not  be protected by any patrol that is either in interval $\intseqitem{t}{y_1}$ or $\intseqitem{t}{y_2}$, so $1\leq \edgecover{e_1}{a}+ \edgecover{e_2}{a}$ holds.
		\end{itemize}
		
The three mentioned cases for the value of \edgecover{e_3}{a}+ \edgecover{e_4}{a}  cover all possible cases, so equality \ref{eq:crossing} holds and this lemma is proved.
\end{proof}

\subsection*{Lemma~\ref{lem:crosstime}} 
\xhdr{Statement.}
The running time of the function \textproc{ResolveCrosses} in the Algorithm \ref{alg:cross} is polynomial.

\begin{proof}The main idea of this proof is that after each call of the function \textproc{ResolveCross}, the minimum crossing flow gets greater. (two crossing flows are compared based on the comparison defined in the Definition \ref{def:comparison}.) Since there are polynomially many pair of crossing edges the run time of this algorithm is polynomial as well. 
	
	Two edges forming the minimum cross flow are $e_1=(\vertex{t}{x}{y_1}, \vertex{t}{x+1}{y_2})$ and $e_2=(\vertex{t}{x}{y'_1},\vertex{t}{x+1}{y'_2})$. Since $e_1$ and $e_2$ are crossing edges  one of the conditions mentioned in Definition \ref{def:crossingedges} holds for them. Without loos of generality we assume the first condition holds, which means $y_1<y'_1$ and $y_2>y'_2$. By Definition \ref{def:comparison} and \ref{def:crossingedges} it is not possible for the flow passing through the edge $e_5=(\vertex{t}{x}{y"_1},\vertex{t}{x+1}{y"_2})$ to be greater than zero if it has one of the following conditions:
	\begin{itemize}
	\item $y_1 < y"_1$ and $y"_2 \leq y'_2$: The assumptions $y"_2\leq y'_2$ and $y_2>y'_2$ result that $y"_2<y_2$. Since  $y"_2<y_2$ and $y_1 < y"_1$, by  Definition \ref{def:crossingedges} $e_1$ and $e_5$ cross which contradicts with the fact that the pair of crossing edges $(e_1, e_2)$ are the minimum crossing flow, hence by  Definition \ref{def:comparison} crossing pair $(e_1, e_5)$ is less than the pair $(e_1, e_2)$.
	
	\item $y"_1\leq y_1$ and $y'_2< y"_2$: In this case, since $y_1<y'_1$ and  $y"_1\leq y_1$ hold, the inequality $y"_1<y'_1$ also holds. In addition, hence  $y"_1<y'_1$ and $y'_2< y"_2$, by Definition \ref{def:crossingedges}  $e_2$ and $e_5$ are crossing. Here, we obtain a contradiction because by Definition \ref{def:comparison} the pair ($e_2$, $e_5$) is less than the pair($e_1, e_2$) and it contradict with the fact that ($e_1, e_2$) is the minimum crossing flow.
	\end{itemize}

	 In this function, to resolve this cross the amount of flow crossing through both edges  $e_1$ and $e_2$ decreases by the minimum of them. This amount is added to the flow crossing through edges $e_3=(\vertex{t}{x}{y_1}, \vertex{t}{x+1}{y'_2})$ and $e_4=(\vertex{t}{x}{y'_1},\vertex{t}{x+1}{y_2})$. After applying the function the mentioned cross is resolved hence no flow passes through at least one of the edges forming it, but it is possible to have new crossing flows since we add flow to at most two edges that there was no flow crossing through them. We explore the new possible crosses that edges $e_3$ and $e_4$ form, separately. 
	\begin{itemize}
	\item $e_3$: If before resolving the mentioned cross, $f_t(e_3)>0$ holds, adding flow to it does not form a new crossing flow. So we assume that $f_t(e_3)=0$ and explore the possible new crosses after adding flow to it. Recall that by Definition \ref{def:crossingedges} the edge $e_5=(\vertex{t}{x}{y"_1}, \vertex{t}{x+1}{y"_2})$ crosses  $e_3$ iff ($y"_1<y_1$ and $y'_2<y"_2$) or ($y_1<y"_1$ and $y"_2<y'_2$), but we have proved that both of these conditions contradict with the fact that $(e_1, e_2)$ is the minimum crossing flow. So increasing the $f_t(e_3)$ does not form a new cross.

	\item $e_4$: The same as $e_3$, we assume that the initial flow of this edge is zero. Recall that the edge $e_5=(\vertex{t}{x}{y"_1}, \vertex{t}{x+1}{y"_2})$ crosses  $e_4$ iff ($y"_1<y'_1$ and $y_2<y"_2$) or ($y'_1<y"_1$ and $y"_2<y_2$). Since we have proved that the relations ($y_1 < y"_1$ and $y"_2 \leq y'_2$) and ($y"_1\leq y_1$ and $y'_2< y"_2$) are invalid and $y_1<y'_1$ and $y'_2<y_2$, the relations ($y'_1 < y"_1$ and $y"_2 \leq y'_2$) and ($y"_1\leq y_1$ and $y_2< y"_2$) are also invalid. So the following condition holds for $e_5$:
	$$ (y_1<y"_1<y'_1 \text{ and } y_2<y"_2) \text{ or } (y'_1<y"_1 \text{ and } y'_2<y"_2<y_2) $$ 
		One can easily see that by Definition \ref{def:comparison} the crossing flow $(e_4,e_5)$ is greater than the crossing folw $(e_1,e_2)$. 
	\end{itemize}
	Hence after resolving the crossing pair $(e_1, e_2)$ both the edges $e_3$ and $e_4$ does not form a cross less than the $(e_1, e_2)$, the minimum cross gets greater at each call of the function \textproc{ResolveCross}, and since there are polynomially many number of possible crossing edges, the runtime of the function \textproc{ResolveCrosses} is polynomial.
\end{proof}

\subsection*{Theorem~\ref{th:compatibility}}
\xhdr{Statement.}

Let $f = \langle f_1, \ldots, f_\time \rangle$ be a solution of LP~\ref{lp:lp} without any crossing flows. There exists a polynomial time algorithm to find a mixed strategy of the defender that is equivalent to $f$.

\begin{proof}
Let $p_1, \ldots, p_\time$ be the top-most flow paths of $f_1, \ldots, f_\time$ respectively. By Lemma~\ref{lem:topmost-compatible}, any two consecutive top-most flow paths $p_i$ and $p_{i+1}$ are compatible. Therefore there exists a valid pure strategy $p$ of the defender, such that the placement of patrols at time point $t$ in $p$, is the same as the equivalent pure snapshot of $p_t$. We use an iterative algorithm to construct the desired mixed strategy $s$:
\begin{enumerate}
	\item Find the top-most flow paths $p_1, \ldots, p_\time$ of $f_1, \ldots, f_\time$.
	\item Construct the pure strategy $p$, corresponding to $p_1, \ldots, p_\time$.
	\item Add $p$ to $s$ with probability $q = \min \sizeofflowpath{p_i}$.
	\item For any edge $e$ of any $p_i$, decrease $\edgeflow{f_i}{e}$ to $\edgeflow{f_i}{e} - q$.
	\item If there is any flow left in $f_1, \ldots, f_\time$, repeat all the steps.
\end{enumerate}

\textbf{Correctness}: Note that, initially $f_1, \ldots, f_\time$ are flows of size 1, therefore $s$ is indeed a probability distribution over some pure strategies, and thus is a mixed strategy. In addition, although we change $f_1, \ldots, f_\time$ at each step, but the compatibility argument of top-most flow paths still holds. Let $g = \langle g_1, \ldots, g_\time \rangle$ denote the remaining flows after step 4 of the algorithm. We first prove that after decreasing the flow of some edges in step 4 the following condition, which is a constraint of the LP, also holds for $g$:
\begin{equation} \label{eq:g-lp}
		\Sigma_{v \in \vertexcons{t}{k}{i}{j}}\outflow{g_t}{v} \leq \Sigma_{v' \in \outfeas{t}{\vertexcons{t}{k}{i}{j}}} \outflow{g_{t+1}}{v'}  \quad \forall t,k,i, j: t\in\timeset, k\in \Kset, 1 \leq i \leq j \leq \parsize{t}
\end{equation} 
We first assume there exist valid values for $t$, $i$, $j$ and $k$ such that the following holds:

\begin{equation} \label{eq:11}
		\Sigma_{v \in \vertexcons{t}{k}{i}{j}}\outflow{g_t}{v} > \Sigma_{v' \in \outfeas{t}{\vertexcons{t}{k}{i}{j}}} \outflow{g_{t+1}}{v'}  
\end{equation} 
Then we obtain a contradiction. Let \vertex{t}{k}{x} and \vertex{t+1}{k}{y} respectively denote the $k$-th vertices of the top-most paths in $f_t$ and $f_{t+1}$. Holding Equation \ref{eq:11} yields both $\vertex{t}{k}{x} \notin \vertexcons{t}{k}{i}{j}$ and $\vertex{t+1}{k}{y} \in \outfeas{t}{\vertexcons{t}{k}{i}{j}}$. Also, One can easily see that $j<x$. Moreover, since $\vertex{t+1}{k}{y} \in \outfeas{t}{\vertexcons{t}{k}{i}{j}}$, and there exists no flow passing through vertices above $\vertex{t+1}{k}{y}$ in both $g_t$ and $f_t$,
	\begin{equation}	\label{eq:13}
	 \Sigma_{v' \in \outfeas{t}{\vertexcons{t}{k}{i}{x}}} \outflow{f_{t+1}}{v'}= \Sigma_{v' \in \outfeas{t}{\vertexcons{t}{k}{i}{j}}} \outflow{g_{t+1}}{v'} + q
	 \end{equation}
	In addition, hence  $\vertex{t}{k}{x} \notin \vertexcons{t}{k}{i}{j}$, 
	\begin{equation} \label{eq:14}
	\Sigma_{v \in \vertexcons{t}{k}{i}{j}}\outflow{g_t}{v} +q \leq \Sigma_{v \in \vertexcons{t}{k}{i}{x}}\outflow{f_t}{v} 
	\end{equation} 
	So, by (\ref{eq:13}) and (\ref{eq:14}): 
	\begin{equation} \label{eq:15}
		\Sigma_{v \in \vertexcons{t}{k}{i}{j}}\outflow{g_t}{v} + \Sigma_{v' \in \outfeas{t}{\vertexcons{t}{k}{i}{x}}} \outflow{f_{t+1}}{v'} \leq \Sigma_{v \in \vertexcons{t}{k}{i}{x}}\outflow{f_t}{v} + \Sigma_{v' \in \outfeas{t}{\vertexcons{t}{k}{i}{j}}} \outflow{g_{t+1}}{v'}
	 \end{equation}
	Having both (\ref{eq:11}) and (\ref{eq:15}) yields the following inequality: 
	\begin{equation}	
	 \Sigma_{v' \in \outfeas{t}{\vertexcons{t}{k}{i}{x}}} \outflow{f_{t+1}}{v'} <
	\Sigma_{v \in \vertexcons{t}{k}{i}{x}}\outflow{f_t}{v} 
	\end{equation} 
	Note that $f$ is a solution of the LP, and this is a contradiction with the line \ref{cons:compatibility} in LP~\ref{lp:lp}. So, equation \ref{eq:g-lp} holds for $g$. 
	
	However, yet $g$ is not a set of canonical flows since for any $t\in \timeset$ the amount of flow in $g_t$ is equal to $1-q$. To resolve this we multiply the flow in all the edges by $\frac{1}{1-q}$. So, $g$ is a collection of canonical paths. 	Note that the top-most paths are unchanged by this multiplication and equation \ref{eq:g-lp} still holds for $g$. Recall that by Lemma \ref{lem:topmost-compatible} the consecutive top most paths in $g$ are compatible, so the correctness of this algorithm is proved.
	
\textbf{Running Time}: The algorithm will halt in polynomial rounds since at each round the flow passing through at least one edge in some flow $f_i$ will be decreased to zero and the number of edges is polynomial.
\end{proof}

\end{document}